\newif\ifeightpage
\theoremstyle{plain}
\newtheorem{theorem}{Theorem}[section]
\newtheorem{lemma}[theorem]{Lemma}
\newtheorem{claim}[theorem]{Claim}
\newtheorem{corollary}[theorem]{Corollary}
\newtheorem{fact}[theorem]{Fact}
\theoremstyle{definition}
\newtheorem{definition}{Definition}[section]
\theoremstyle{remark}
\newtheorem{remark}{Remark}[section]
\newcommand{\eps}{\epsilon}
\newcommand{\R}{\mathbb{R}}
\newcommand\card[1]{\vert {#1}\vert}
\newcommand{\poly}{\operatorname{poly}}
\newcommand{\SDim}{\ensuremath{\mathrm{sdim}}\xspace}
\newcommand{\FL}{Feldman-Langberg\xspace}
\DeclareMathOperator{\Decomp}{Balanced-Decomp}
\DeclareMathOperator{\BoundReduct}{Boundary-Reduction}
\DeclareMathOperator{\tw}{tw}
\newcommand{\calB}{\mathcal{B}}
\newcommand{\calP}{{\mathcal{P}}}
\newcommand{\calS}{{\mathcal{S}}}
\newcommand{\calT}{\mathcal{T}}
\newcommand{\calV}{\mathcal{V}}
\newcommand{\minn}[1]{\min\{{#1}\}}
\providecommand{\set}[1]{{\{#1\}}}
\newcommand{\ProblemName}[1]{\textsc{#1}}
\newcommand{\kMedian}{\ProblemName{$k$-Median}\xspace}
\newcommand{\kMeans}{\ProblemName{$k$-Means}\xspace}
\newcommand{\kCenter}{\ProblemName{$k$-Center}\xspace}
\newcommand{\cost}{\ensuremath{\mathrm{cost}}\xspace}
\newcommand{\OkM}{\ProblemName{Ordered $k$-Median}\xspace}
\begin{document}
	
\icmltitlerunning{Coresets for Clustering in Graphs of Bounded Treewidth}
\twocolumn[
\icmltitle{Coresets for Clustering in Graphs of Bounded Treewidth}

\begin{icmlauthorlist}
\icmlauthor{Daniel Baker}{jhu}
\icmlauthor{Vladimir Braverman}{jhu}
\icmlauthor{Lingxiao Huang}{yale}
\icmlauthor{Shaofeng H.-C. Jiang}{wis}
\icmlauthor{Robert Krauthgamer}{wis}
\icmlauthor{Xuan Wu}{jhu}
\end{icmlauthorlist}

\icmlaffiliation{jhu}{Johns Hopkins University, USA}
\icmlaffiliation{yale}{Yale University, USA}
\icmlaffiliation{wis}{Weizmann Institute of Science, Israel}

\icmlcorrespondingauthor{Daniel Baker}{dnb@cs.jhu.edu}
\icmlcorrespondingauthor{Vladimir Braverman}{vova@cs.jhu.edu}
\icmlcorrespondingauthor{Lingxiao Huang}{huanglingxiao1990@126.com}
\icmlcorrespondingauthor{Shaofeng H.-C. Jiang}{shaofeng.jiang@weizmann.ac.il}
\icmlcorrespondingauthor{Robert Krauthgamer}{robert.krauthgamer@weizmann.ac.il}
\icmlcorrespondingauthor{Xuan Wu}{wu3412790@gmail.com}

\icmlkeywords{Coreset, Clustering, Graph, Treewidth, Road Network}

\vskip 0.3in
]
\ifeightpage
	\printAffiliationsAndNotice{Authors are listed in alphabetical order. Full version:~\cite{full}}  \else
	\printAffiliationsAndNotice{Authors are listed in alphabetical order.}  \fi
\begin{abstract}

We initiate the study of coresets for clustering in graph metrics,
i.e., the shortest-path metric of edge-weighted graphs. 
Such clustering problems are essential to data analysis
and used for example in road networks and data visualization. 
A coreset is a compact summary of the data that approximately preserves
the clustering objective for every possible center set, 
and it offers significant efficiency improvements 
in terms of running time, storage, and communication,
including in streaming and distributed settings. 
Our main result is a near-linear time construction
of a coreset for \kMedian in a general graph $G$,
with size $O_{\epsilon, k}(\tw(G))$ 
where $\tw(G)$ is the treewidth of $G$,
and we complement the construction with a nearly-tight size lower bound.
The construction is based on the framework of Feldman and Langberg [STOC 2011],
and our main technical contribution, as required by this framework,
is a uniform bound of $O(\tw(G))$ on the shattering dimension
under any point weights. 
We validate our coreset on real-world road networks,
and our scalable algorithm constructs tiny coresets with high accuracy,
which translates to a massive speedup of 
existing approximation algorithms such as local search for graph \kMedian.

\end{abstract}

	\section{Introduction}
\label{sec:intro}
We initiate the study of coresets for clustering in \emph{graph metrics},
i.e., the shortest-path metrics of graphs.
As usual in these contexts, the focus is on edge-weighted graphs $G=(V,E)$
with a restricted topology, and in our case bounded treewidth.
Previously, coresets were studied extensively
but mostly under geometric restrictions, e.g., for Euclidean metrics.

\paragraph{Coresets for $k$-Clustering}
We consider the \emph{metric} \kMedian problem,
whose input is a metric space $M=(V, d)$ and an $n$-point data set $X \subseteq V$,
and the goal is to find a set $C \subseteq V$ of $k$ points,
called \emph{center set}, that minimizes the objective function
\begin{equation*}
\cost(X, C) := \sum_{x \in X}{d(x, C)},
\end{equation*}
where $d(x, C):=\minn{d(x,c): c\in C}$.
The metric \kMedian generalizes the well-known Euclidean case, in which $V = \mathbb{R}^{\mathsf{d}}$ and $d(x, y) = \| x - y \|_2$.
\kMedian problem and related $k$-clustering problems
(like \kMeans, whose objective is $\sum_{x \in X}{(d(x, C))^2}$),
are essential tools in data analysis and are used in many application domains, such as genetics, information retrieval, and pattern recognition.
However, finding an optimal clustering is a nontrivial task,
and even in settings where polynomial-time algorithms are known,
it is often challenging in practice because data sets are huge,
and potentially distributed or arriving over time.
To this end, a powerful data-reduction technique, called \emph{coresets},
is of key importance.

Roughly speaking, a coreset is a compact summary of the data points
by weighted points, that approximates the clustering objective
for every possible choice of the center set.
Formally, an $\eps$-coreset for \kMedian is
a subset $D \subseteq V$ with weight $w : D \to \R_{+}$, such that
for every $k$-subset $C \subseteq V$,
\begin{equation*}
\sum_{x \in D}{w(x) \cdot d(x, C)}
\in (1 \pm \eps) \cdot \cost(X, C).
\end{equation*}
This notion, sometimes called a strong coreset,
was proposed in~\cite{HM04}, following a weaker notion of~\cite{AHV04}.
Small-size coresets (where size is defined as $|D|$) often translate
to faster algorithms, more efficient storage/communication of data,
and streaming/distributed algorithms via the merge-and-reduce framework,
see e.g.~\cite{HM04,DBLP:conf/esa/FichtenbergerGSSS13,balcan2013distributed,huang2018epsilon, DBLP:journals/siamcomp/FriggstadRS19} and recent surveys~\cite{Phillips16,DBLP:journals/ki/MunteanuS18,DBLP:journals/widm/Feldman20}.

Coresets for \kMedian were studied extensively in Euclidean spaces,
i.e., when $V = \mathbb{R}^\mathsf{d}$ and $d(x, y) = \| x - y \|_2$.
The size of the first $\epsilon$-coreset for \kMedian,
when they were first proposed \cite{HM04},
was $O(k (\frac{1}{\epsilon})^\mathsf{d} \cdot \log{n})$,
and it was improved to $O(k (\frac{1}{\epsilon})^\mathsf{d})$,
which is independent of $n$, in~\cite{DBLP:journals/dcg/Har-PeledK07}.
Feldman and Langberg~\cite{feldman2011unified}
drastically improved the dependence on the dimension $\mathsf{d}$,
from exponential to linear,
achieving an $\epsilon$-coreset of size $O(\frac{k}{\epsilon^{2}} \cdot \mathsf{d})$,
and this bound was recently generalized to doubling metrics~\cite{huang2018epsilon}.
Recently, coresets of size \emph{independent} of $\mathsf{d}$ and polynomial in $\frac{k}{\epsilon}$
were devised by~\cite{DBLP:conf/focs/SohlerW18}.

\paragraph{Clustering in Graph Metrics}
While clustering in Euclidean spaces is very common and well studied,  clustering in graph metrics is also of great importance and has many applications.
For instance, clustering is widely used for community detection in social networks~\cite{fortunato2010community}, and is an important technique for the visualization of graph data~\cite{DBLP:journals/tvcg/HermanMM00}.
Moreover, $k$-clustering on graph metrics is one of the central tasks in
data mining of spatial (e.g., road) networks~\cite{DBLP:journals/tkde/ShekharL97,DBLP:conf/sigmod/YiuM04},
and it has been applied in various data analysis methods~\cite{rattigan2007graph,cui2008geometry}, and many other applications can be found in a survey~\cite{tansel1983state}.

Despite the importance of graph \kMedian, coresets for this problem were not studied before,
and to the best of our knowledge,
the only known constructions applicable to graph metrics
are coresets for general $n$-point metrics $M=(V, d)$ with $X = V$ \cite{chen2009coresets,feldman2011unified},
that have size $\poly\log n$.
In contrast, as mentioned above,
coresets for Euclidean spaces usually have size independent of $n=\card{V}$
and sometimes even independent of the dimension $d$.
Moreover, this generic construction
assumes efficient access to the distance function,
which is expensive in graphs and requires to compute all-pairs shortest paths.

To fill this gap, we study coresets for \kMedian on the shortest-path metric of an edge-weighted graph $G$.
As a baseline, we confirm that the $O(\log n)$ factor in coreset size is really necessary for general graphs,
which motivates us to explore whether structured graphs admit smaller coresets.
We achieve this by designing coresets whose size are independent of $n$
when $G$ has a bounded \emph{treewidth} (see Definition~\ref{def:treewidth}),
which is a special yet common graph family.
Moreover, our algorithm for constructing the coresets runs
in \emph{near-linear time} (for every graph regardless of treewidth).

Indeed, treewidth is a well-studied parameter that measures how close a graph is to a tree~\cite{robertson1986graph,kloks1994treewidth},
and intuitively it guarantees a (small) vertex separator in every subgraph.
Several important graph families have bounded treewidth: trees have treewidth at most $1$,
series-parallel graphs have treewidth at most $2$,
and $k$-outerplanar graphs, which are an important special case of planar graphs, have treewidth $O(k)$.
In practice, treewidth is a good complexity measure for many types of graph data.
A recent experimental study showed that real data sets in various domains
including road networks of the US power grid networks
and social networks such as an ego-network of Facebook,
have small to moderate treewidth~\cite{DBLP:conf/icdt/ManiuSJ19}.

\subsection{Our Results}
Our main result is a near-linear time construction of a coreset for \kMedian whose size depends linearly on the treewidth of $G$ and is completely independent of $|X|$ (the size of the data set).
This significantly improves the generic $O(\frac{k}{\epsilon^2} \cdot \log n)$ size bound from~\cite{feldman2011unified}
whenever the graph has small treewidth.
\begin{theorem}[Fast Coresets for Graph \kMedian; see Theorem~\ref{thm:coreset}]
	\label{thm:coreset_intro}
For every edge-weighted graph $G=(V, E)$, $0 < \epsilon < 1$, and integer $k \geq 1$,
	\kMedian of every data set $X \subseteq V$
        (with respect to the shortest-path metric of $G$)
        admits an $\epsilon$-coreset of size $\tilde{O}(\frac{k^3}{\epsilon^2}) \cdot \tw(G)$.\footnote{Throughout, we use $\tilde{O}(f)$ to denote $O(f\cdot  \operatorname*{polylog}(f))$.}
	Furthermore, the coreset can be computed in time $\tilde{O}(|E|)$ with high probability.\footnote{We note that our size bound can be improved to $\tilde{O}(\frac{k^2}{\epsilon^2}) \cdot \tw(G)$, by replacing Lemma~\ref{lemma:fl11} with Theorem 31 of a recent work~\cite{DBLP:journals/siamcomp/FeldmanSS20}.}
\end{theorem}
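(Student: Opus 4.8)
The plan is to instantiate the Feldman--Langberg importance-sampling framework (Lemma~\ref{lemma:fl11}), which reduces the construction of an $\epsilon$-coreset to two ingredients: (i) a set of \emph{sensitivities} $\sigma: X \to \R_+$ whose total $\mathfrak{S} := \sum_{x \in X} \sigma(x)$ is small, and (ii) a bound on the \emph{shattering dimension} of the weighted range space induced by the clustering cost functions. Sampling $\tilde O(\mathfrak{S} \cdot \epsilon^{-2} \cdot (\text{shattering dim}))$ points proportionally to $\sigma$ and reweighting gives the coreset. The total sensitivity is handled the standard way: compute a constant-factor (bicriteria) approximation $C_0$ to \kMedian on $X$ in near-linear time, and set $\sigma(x) \propto \frac{d(x,C_0)}{\cost(X,C_0)}$ plus a uniform term per cluster; a Markov-type argument bounds $\mathfrak{S} = O(k)$. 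So the real work, and the paper's stated main technical contribution, is step (ii).

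The key step is therefore to prove that the shattering dimension of the range space $(V, \mathcal{R})$, where each range is $\{x : w(x)\, d(x, C) \le r\}$ over all $k$-subsets $C \subseteq V$ and all thresholds $r \ge 0$ and for an arbitrary weight function $w$, is $O(k \cdot \tw(G))$ --- uniformly over $w$. Here is how I would attack it. First reduce to $k=1$: a range for a $k$-center set is a union of $k$ single-center ranges, and it is a standard fact that shattering dimension composes, costing a factor $O(k \log k)$ (or $O(k)$ with the refined bound in the footnote), so it suffices to bound the shattering dimension of single-center distance balls $\{x : w(x)\, d(x,c) \le r\}$ by $O(\tw(G))$. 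Now fix a tree decomposition of $G$ of width $t = \tw(G)$; I would make it rooted and use the separator property: for any bag $B$ (of size $\le t+1$), removing $B$ splits the graph into components, and any shortest path from a vertex $u$ in one part to a vertex $v$ in another must pass through some vertex of $B$, so $d(u,v) = \min_{b \in B} (d(u,b) + d(b,v))$. The plan is to argue that, restricted to any one side of such a separator, the distance-from-$c$ function (hence the weighted ball) is determined by the $O(t)$ numbers $d(b, c)$ for $b \in B$ together with the ``internal'' distances $d(x,b)$ which do not depend on $c$. This should let me express each ball, on that side, as a Boolean combination of $O(t)$ halfspace-like conditions in a low-dimensional ``parameter space'', giving VC/shattering dimension $O(t)$ per side; a careful recursion over the tree decomposition (charging each vertex of $X$ to the bag of its component, and using that a ground set of size $m$ can be $\epsilon$-shattered only if $m = O(d \log(1/\epsilon))$-type counting via the Sauer--Shelah bound for shattering dimension) then yields the global bound $O(t)$.

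More concretely, I expect the cleanest route is: take any finite $A \subseteq X$ with $|A| = m$; I want to show the number of distinct sets $A \cap \{x : w(x) d(x,c) \le r\}$, over all $(c,r)$, is $m^{O(t)}$. Using the tree decomposition, pick a bag $B$ that is a balanced separator for $A$ (standard: such a bag exists), write $A = A_1 \cup A_2$ with $|A_i| \le \tfrac{2}{3}|A|$ on the two sides and with every $A_i$--to--$c$ shortest path crossing $B$. On side $i$, for each fixed ``profile'' of which $b \in B$ is the closest portal and in what order the $d(x,b)$ values interleave --- there are only $m^{O(t)}$ such profiles determined by $A$ alone --- the map $c \mapsto (A_i \cap \text{ball})$ factors through the $(t{+}1)$-tuple $(d(b,c))_{b\in B}$ and a threshold $r$, i.e.\ through $O(t)$ real parameters, and a ball in this parameter space cuts $A_i$ in at most $|A_i|^{O(t)}$ ways by the elementary VC bound for $O(t)$-dimensional affine/monotone threshold systems. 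Multiplying the counts for the two sides and summing over profiles gives a recurrence $f(m) \le m^{O(t)} \cdot f(\tfrac{2}{3}m)^2$, which solves to $f(m) = m^{O(t)}$, i.e.\ shattering dimension $O(t)$.

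The main obstacle I anticipate is the interaction between the \emph{arbitrary point weights} $w$ and the geometry of the separator argument: the ``ball'' on each side is $\{x \in A_i : d(x,c) \le r/w(x)\}$, a \emph{weighted} sublevel set, so the threshold effectively varies per point. I need to show this still lives in an $O(t)$-dimensional system --- the fix is to note that, conditioned on a portal $b$ being the one realizing $d(x,c) = d(x,b) + d(b,c)$ for $x$, the condition becomes $d(b,c) \le r/w(x) - d(x,b)$, which is a comparison of the $t{+}1$ unknowns $(d(b,c))_b$ against point-dependent-but-$c$-independent thresholds, and such ``axis-parallel-plus-min'' systems over $t$ coordinates still have VC dimension $O(t)$ (this is essentially the observation that a min of $t$ linear functions thresholded has bounded VC dimension). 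Handling the ordering/profile bookkeeping so that the per-profile count is genuinely polynomial in $m$ with exponent $O(t)$, rather than accidentally $O(t \log m)$ or worse, is the delicate part; I would isolate it as a self-contained combinatorial lemma about the VC dimension of ``portal-indexed distance ball'' set systems on a bag of size $t+1$, and then feed it into the balanced-separator recursion above.
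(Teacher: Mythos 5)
Your high-level plan (Feldman--Langberg sampling, sensitivities via a fast $O(1)$-approximation, and a uniform shattering-dimension bound of $O(\tw(G))$ for arbitrarily weighted balls via separators and min-of-linear-functions) matches the paper, but the core combinatorial step has a genuine gap. First, your separator claim is false as stated: after removing a bag $B$ that balances $A$, it is \emph{not} true that ``every $A_i$-to-$c$ shortest path crosses $B$,'' because the center $c$ ranges over all of $V$ and may lie in the same component as points of $A_i$; for those points $d(x,c)$ is not a function of the portal distances $(d(b,c))_{b\in B}$, so the trace on that side does not factor through $O(t)$ real parameters. This is exactly the case the paper must engineer away: Lemmas~\ref{lemma:treewidth_key} and~\ref{lemma:decomposition} build, non-recursively, a cover of $V$ (indexed by the \emph{center's} location, not by a split of the data) into $\mathrm{poly}(|H|)$ regions, each with at most two boundary bags and --- crucially --- at most $O(\tw(G))$ points of $H$ inside; the data points sharing a region with the center are then simply added to the portal set $Q=P\cup(H\cap A)$, keeping $|Q|=O(\tw(G))$, after which the min-linear/arrangement count (Lemma~\ref{lemma:min_sum_ordering}) bounds the number of induced orderings of all of $H$ by $|H|^{O(\tw(G))}$ per region. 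Your proposal has no analogue of the ``few data points per region'' guarantee, and the recursion you invoke to cover the missing case is underspecified (the subproblem on the side containing $c$ is not the same kind of instance, since shortest paths between two vertices of that side may still exit and re-enter through $B$).

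Second, even granting the recursion, your recurrence $f(m)\le m^{O(t)}\,f(\tfrac{2}{3}m)^{2}$ does not solve to $m^{O(t)}$: unrolling gives $\log f(m)=O\bigl(t\,m^{\log_{3/2}2}\bigr)$, i.e.\ $f(m)=\exp\bigl(O(t\,m^{1.71})\bigr)$, and even a one-sided variant $f(m)\le m^{O(t)}f(\tfrac{2}{3}m)$ only yields $m^{O(t\log m)}$, which is a shattering dimension of $O(t\log|H|)$ rather than the uniform $O(\tw(G))$ needed to get the $\tilde O(k^{3}\epsilon^{-2})\cdot\tw(G)$ coreset size through Lemma~\ref{lemma:fl11}. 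The fix is essentially the paper's: avoid divide-and-conquer on the data set entirely, pay the $\mathrm{poly}(|H|)$ factor once for the number of regions over which the center ranges, and charge all remaining complexity to a single $O(\tw(G))$-dimensional arrangement argument per region. Your closing observation about weighted thresholds (that $v(y)\cdot\min_i\{z_i+d(y,q_i)\}$ is still min-linear) is correct and is indeed how the paper handles arbitrary weights, but it does not repair the two issues above.
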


We complement our coreset construction with a size lower bound,
which is information-theoretic, i.e., regardless of computational power.
\ifeightpage
\begin{theorem}[Coreset Size Lower Bound; proved in full version]
	\label{thm:lb_tw_intro}
	For every $0 < \epsilon < 1$ and integers $k, t \geq 1$, there exists a graph $G=(V, E)$ with $\tw(G) \leq t$,
	such that every $\epsilon$-coreset for \kMedian on $X = V$ in $G$ has size $\Omega(\frac{k}{\epsilon} \cdot t)$.
\end{theorem}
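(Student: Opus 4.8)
The plan is to exhibit an explicit family of graphs and argue information-theoretically. Fix $t$ and $\eps$, set $N:=\Theta(1/\eps)$, and first build a single ``gadget'' $H$ with $\tw(H)\le t$ for which every $\eps$-coreset for \kMedian with $k=1$ on $X=V(H)$ has $\Omega(t/\eps)$ points; the final graph $G$ is then $k$ copies of $H$ strung along a path by single edges of enormous weight. For any center set of size $k$ the cheapest assignment must put exactly one center inside each copy (a copy with no center pays $\gtrsim$ the huge edge weight), so $\cost(X,C)$ splits as a sum of the copies' costs; fixing near-optimal centers in all but one copy and varying the remaining one shows that the restriction of any valid coreset to that copy must be an $\eps$-coreset for $1$-median on $H$, up to an additive slack of $\eps$ times the other copies' costs. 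Engineering $H$ so that the range of its cost function grows like $\max(1/\eps,\eps k)$ (by letting its internal distance scales span such a factor) makes this slack harmless, so the $k$-copy construction multiplies the per-copy bound and gives a coreset size lower bound of $\Omega(\tfrac{k}{\eps}\cdot t)$.

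For $H$ I would use a complete-bipartite-type graph: a set $L=\{\ell_1,\dots,\ell_t\}$ of terminals, a set $R$ of ``data'' vertices, every vertex of $R$ joined to every terminal and no other edges (optionally, a light path inside each ``group'' of $R$; see below). This $H$ contains $K_{t+1}$ as a minor, so $\tw(H)\ge t$, while the bags $\{L\cup\{r\}:r\in R\}$ strung in a path form a tree decomposition of width $t$; thus $\tw(H)=t$. Since $L$ is a size-$t$ separator through which every data--data shortest path passes, the metric is controlled entirely by the $|R|\times t$ edge-weight matrix: $d(r,r')=\min_{s\le t}\bigl(w(\ell_s,r)+w(\ell_s,r')\bigr)$. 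The design idea is to make this matrix encode, for each terminal $s$, a separate copy of the classical $1$-median lower-bound instance (roughly $\Theta(1/\eps)$ data vertices equally spread at a fine scale), with the $s$-th copy ``visible'' at its fine scale to a center placed among its own data vertices but visible only at a fixed coarser scale to centers placed among other terminals' data. Then the cost function over the $\Theta(t/\eps)$ possible center locations behaves like $t$ essentially independent copies of the classical instance.

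Granting this, the lower bound is a counting argument: fix a weighted $D\subseteq V(H)$ with $|D|<c\,t/\eps$; some terminal $s^\star$ has fewer than $c/\eps$ points of $D$ ``assigned'' to it (being $\ell_{s^\star}$ or lying in one of its groups); restricted to centers assigned to $s^\star$, the true cost equals the classical $\Omega(1/\eps)$ instance plus a constant that $D$ reproduces for free, while any point of $D$ assigned to another terminal contributes to such a center only through the fixed coarse scale, i.e.\ essentially a constant, so it cannot help; hence $D$ already fails on the direction-$s^\star$ centers, contradicting the (additive-error-robust) classical bound. Summing $\Omega(1/\eps)$ over the $t$ disjoint terminals gives $|D|=\Omega(t/\eps)$.

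The step I expect to be the main obstacle is exactly making the ``$t$ independent copies'' picture hold, i.e.\ the decoupling used in the previous paragraph. Naive weightings fail: each center interacts with $\approx (t-1)/t$ of the data through ``cross-terminal'' distances and only $\approx 1/t$ through its ``own'' ones, so unless the cross-terminal distances are made almost perfectly insensitive to the center's fine-scale position, their aggregate variation dominates the signal and a coreset can amortize a single point across all terminals --- which would cap the argument at $t=O(1/\eps)$ rather than all $t$. Threading this (choosing the fine and coarse scales, and probably adding internal paths inside the groups so that each per-terminal instance is genuinely curved rather than star-shaped, while keeping all distances within a bounded factor so that the $\eps\cdot\cost$ error budget is comparable to the quantities we want to detect rather than being swamped by the largest distances) is the delicate part. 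The remaining ingredients --- verifying $\tw(H)\le t$, the robustness of the classical one-dimensional bound to a small additive error, and carrying the additive slack of the other $k-1$ copies through for all ranges of $k$ --- are routine.
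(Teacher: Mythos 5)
There is a genuine gap, and you have in fact pointed at it yourself: the entire bound rests on a per-gadget claim (a single graph $H$ with $\tw(H)\le t$ on which every $\eps$-coreset for $1$-median needs $\Omega(t/\eps)$ points, behaving like $t$ decoupled copies of a $1/\eps$-sized instance), and the decoupling needed to make that claim true is exactly the step you leave unresolved. As you note, every center in one terminal's group interacts with the $\approx(1-1/t)$ fraction of the data living in other groups through cross-terminal distances, and unless those are made essentially insensitive to the center's fine-scale position, a coreset can amortize points across terminals and the argument caps at $t=O(1/\eps)$; no choice of fine/coarse scales is exhibited, so the proposal does not yet prove anything beyond the composition skeleton. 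The composition itself also adds a second unresolved constraint: with $k$ far-apart copies, the two center sets you compare differ only inside one copy, so their cost gap must be an $\eps$-fraction of the \emph{global} cost $\approx k\cdot(\text{per-copy cost})$, forcing the in-copy cost range to span a factor $\Omega(\eps k)$; you acknowledge this ("engineering $H$ so that the range \dots grows like $\max(1/\eps,\eps k)$") but it interacts with, and further complicates, the already-unproven gadget. Finally, the "classical, additive-error-robust $1$-median lower bound" you invoke is not an off-the-shelf citation --- the paper itself notes no coreset lower bounds for \kMedian were previously known (its star-graph bound is proved from scratch in the appendix).

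For contrast, the paper factors the bound the other way, as $(k/\eps)\times t$ rather than $k\times(t/\eps)$, which sidesteps your decoupling problem entirely. It builds $m=k/\eps$ groups around a common apex $u_0$, where group $i$ has $t$ vertices $l^{(i)}_1,\dots,l^{(i)}_t$ (each amplified by $T-1$ pendant "shadow" copies) and $2^t$ vertices $r^{(i)}_J$ indexed by subsets $J\subseteq[t]$, with unit edges so that all distances are $2$ except $d(l^{(i)}_j,r^{(i)}_J)=1$ when $j\in J$. A coreset of size $o(\tfrac{k}{\eps}t)$ must leave $k$ groups with fewer than $t/2$ coreset points in each of $L_i$ and $R_i$; in each such group one finds two subsets $P_i$ (small) and $Q_i$ (near-full), both containing all coreset indices and both avoiding the coreset's $R_i$-points, so the coreset assigns identical cost to the center sets $\{r^{(i)}_{P_i}\}_i$ and $\{r^{(i)}_{Q_i}\}_i$, while the shadow-vertex weighting makes their true costs differ by a $1+\Omega(\eps)$ factor. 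This indistinguishability-by-subset-encoding argument gets the $t$ factor per group with all distances in $\{1,2\}$ and no scale engineering, which is precisely the machinery your proposal still needs to invent.
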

\else \begin{theorem}[Coreset Size Lower Bound; see Theorems~\ref{thm:tw_kz_lb}]
	\label{thm:lb_tw_intro}
	For every $0 < \epsilon < 1$ and integers $k, t \geq 1$, there exists a graph $G=(V, E)$ with $\tw(G) \leq t$,
	such that every $\epsilon$-coreset for \kMedian on $X = V$ in $G$ has size $\Omega(\frac{k}{\epsilon} \cdot t)$.
\end{theorem}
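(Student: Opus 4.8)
The plan is to exhibit an explicit hard instance and argue that it is incompressible; the target bound $\Omega(\tfrac{k}{\epsilon}\,t)$ will come out as a product of three essentially separate effects (a factor $k$ from well‑separated parts, a factor $t$ from treewidth, a factor $1/\epsilon$ from the usual resolution barrier), and the real work is in making these multiply rather than add.

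First, for the factor $k$: take $G$ to be the union of $k$ well‑separated copies $H_1,\dots,H_k$ of a single gadget $H$, linked by a path whose edges have weight $\gg \mathrm{diam}(H)$, so that $\tw(G)\le\tw(H)\le t$ and $X=V(G)$. For center sets that place exactly one center per copy, both $\cost(X,\cdot)$ and the coreset cost decompose as sums over the copies; a hybrid/averaging argument over where the centers sit in the "other" copies then reduces the theorem to the statement that $H$ alone admits no $\epsilon$-coreset for \kMedian with $k=1$ using fewer than $\Omega(t/\epsilon)$ vertices. One must take care here that the additive constant contributed by the frozen copies does not swamp the per‑copy error budget, which is arranged by tuning the gadget so that its $1$-median optimum is comparable across copies and small relative to the total mass.

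Second, for the gadget $H$ (factors $t$ and $1/\epsilon$): I would build $H$ from a treewidth-$\le t$ "hub" $Q$ (e.g.\ a clique on $\le t+1$ vertices, so that $\tw(H)$ is governed by $|Q|$) with $t$ "rays" emanating from it, where ray $j$ is a path carrying $\Theta(1/\epsilon)$ "atoms" at geometrically spaced distances, each atom being a tight cluster of data vertices so that its mass can be set as desired even though $X=V$. The hub forces the $t$ rays to coexist in one treewidth-$t$ metric, while the edge weights on $Q$ are chosen so that a center vertex on ray $j$ probes ray $j$'s atoms essentially in isolation (rays interact only through $Q$, which carries negligible mass). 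For each of the $\Theta(t/\epsilon)$ pairs (ray $j$, atom $\ell$) there is a pair of candidate center vertices differing exactly in whether atom $A_{j,\ell}$ lies inside the serving ball; imposing the $\epsilon$-coreset guarantee at both members of each such pair and subtracting pins $\sum_{x\in D\cap A_{j,\ell}}w(x)$ away from $0$, forcing $D$ to contain a vertex of every $A_{j,\ell}$, hence $|D|=\Omega(t/\epsilon)$. The quantitative heart is a delicate choice of atom masses and inter‑atom / inter‑ray distances so that the $1$-median optimum of $H$ stays small compared with the total mass — this is what keeps the relative budget $\epsilon\cdot\cost$ too small to absorb the per‑atom discrepancies — and in the regime $t\ge 1/\epsilon$ one runs the same calculation with only $\min(t,1/\epsilon)$-scale versions of the rays.

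The main obstacle is exactly this entanglement of $t$ and $1/\epsilon$. A treewidth-$t$ graph is glued along size-$t$ separators, so from a distance it looks low‑dimensional and a single coreset vertex can substitute for many; one must therefore engineer the hub $Q$ and its edge weights so that the $t$ rays behave as metrically independent coordinates at the scale of the per‑ray hard instance, while simultaneously keeping the $1$-median optimum small and the metric "honest" (direct edges are shortest paths). This is precisely where one needs the tight converse of the paper's structural lemma — that treewidth-$t$ graphs can realize a ball range space of shattering dimension $\Theta(t)$ — together with a careful weight assignment preventing any shortest path from shortcutting through $Q$. The $k$-fold separation of the first step and the shortest-path bookkeeping are routine once the gadget is in place.
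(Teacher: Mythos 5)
Your high-level accounting (a factor $k$, a factor $t$, a factor $1/\epsilon$ that must genuinely multiply) is right, and you correctly identify the crux as entangling $t$ with $1/\epsilon$; but the gadget you propose does not achieve that entanglement, and the $k$-copy reduction leaks a factor as well. Inside one copy you have a single center, so there is no ``serving ball'': two candidate centers straddling an atom $A_{j,\ell}$ on ray $j$ change the distance to essentially \emph{every} vertex of the copy (all other rays shift through the hub by the full probe spacing), so subtracting the two coreset inequalities does not pin $\sum_{x\in D\cap A_{j,\ell}}w(x)$ --- it constrains a signed sum over the whole coreset of the copy. Worse, the budget is multiplicative in the cost \emph{at the probe}: a center near atom $A_{j,\ell}$ at hub-distance $\approx 2^{\ell}$ pays at least (total copy mass)$\times 2^{\ell}$, while the discrepancy an unrepresented atom can create is at most (atom mass)$\times$(probe spacing) $\le$ (atom mass)$\times O(2^{\ell})$. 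Hence an atom can only be forced into $D$ if its mass is $\Omega(\epsilon\cdot\text{copy mass})$, so at most $O(1/\epsilon)$ atoms \emph{per copy} (not per ray) can ever be forced, and the factor $t$ disappears; a hub with $t$ rays never lets a single center's position carry $t$ independent bits. The $k$-fold step has the same flavor: freezing $k-1$ identical copies adds $\Theta((k-1)\cdot\mathrm{OPT}_1(H))$ to the denominator, so probing one copy at a time dilutes the per-copy tolerance from $\epsilon$ to $\epsilon k$; the only repair is to probe all copies simultaneously, which is exactly what forces each atom's effect to be an $\epsilon$-fraction of its own copy's cost and triggers the counting above. So as written the construction yields $\Omega(k/\epsilon)$ (or $\Omega(t/\epsilon)$), not $\Omega(\frac{k}{\epsilon}t)$.

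The paper's instance shows what is needed and is worth contrasting. It does not use $k$ copies but $m=k/\epsilon$ groups hung off one apex $u_0$ with unit edges, so all distances are $1$ or $2$ and no probe ever inflates the denominator. Each group has $t$ vertices $L_i$ made heavy by $T=\Omega(2^t)$ shadow leaves, and $2^t$ candidate centers $R_i=\{r^{(i)}_J: J\subseteq[t]\}$ with $r^{(i)}_J$ adjacent exactly to the $L_i$-vertices indexed by $J$. Thus a \emph{single} center choice in a group decides, simultaneously, which of the $t$ heavy vertices sit at distance $1$ versus $2$ --- this is precisely the ``shattering dimension $\Theta(t)$ realized by balls'' structure you correctly flag as necessary, but your rays do not provide it. If the coreset is small, there are $k$ groups it barely touches; in each one chooses two centers $r^{(i)}_{P_i},r^{(i)}_{Q_i}$ that the coreset cannot distinguish (they agree on all coreset points) yet whose true costs differ by $\Theta(Tt)$ per group, and probing these $k$ groups simultaneously against the background of the other $m-k$ groups makes the gap an $\Omega(\epsilon)$ fraction. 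So $t$ comes from one center probing $t$ bits at once, $1/\epsilon$ from the ratio $k/m$ of probed to background groups, and $k$ from simultaneous probing --- three effects your design collapses to two. (The paper also treats the tree case $\tw=1$ by a separate star-graph argument, since this construction needs the group structure.)
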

\fi \ifeightpage
This matches the linear dependence on $\tw(G)$ in our coreset construction,
and we show in a corollary (in full version) that the same hard instance actually implies for the first time that the $O(\log n)$ factor is optimal for general metrics, which justifies considering restricted graph families.
\else This matches the linear dependence on $\tw(G)$ in our coreset construction,
and we show in Corollary~\ref{cor:lb_general} that the same hard instance actually implies for the first time that the $O(\log n)$ factor is optimal for general metrics, which justifies considering restricted graph families.

Note that we require the coreset to use data points only, which is a natural setting in graphs.
However, a ``continuous'' setting, where the coreset could use interpolated points along an edge, also makes sense in many graph families, such as trees or weighted path graphs.
We show that even if the coreset is given this extra power, the coreset size cannot be reduced significantly, even on weighted path graphs that have treewidth (and actually pathwidth) $1$.
\begin{theorem}[Lower Bound for Lines, Continuous Setting; see Theorem~\ref{1d1c}]
	\label{thm:1d1c_intro}
	For every $0<\epsilon<1/24$ and integer $k\geq 1$, there exists a set of data points $V\subseteq \mathbb{R}$, such that every $\epsilon$-coreset for \kMedian of $V$ has size $\Omega(\frac{k}{\sqrt{\epsilon}})$, even if the coreset may use any point in $\mathbb{R}$.
\end{theorem}
This continuous setting for $V\subseteq \R$ is equivalent to the well-studied Euclidean setting in one dimension, in which coresets could use points in the ambient space.
While this lower bound still has a gap of at least $\frac{1}{\sqrt{\epsilon}}$ from the known upper bounds~\cite{DBLP:journals/dcg/Har-PeledK07}, it is in fact the first nontrivial lower bound for the Euclidean setting.
\fi 

\paragraph{Experiments}
We evaluate our coreset on real-world road networks.
Thanks to our new near-linear time algorithm, the coreset construction scales well even on data sets with millions of points.
Our coreset consistently achieves $<5\%$ error using only $1000$ points on various distributions of data points $X$,
and the small size of the coreset results in a
100x-1000x speedup of local search approximation algorithm for graph \kMedian.
When experimenting with our coreset on different data sets $X$,
we observe that coresets of similar size yield similar error,
which confirms our theoretical bounds (for structured graphs)
where the coreset size is independent of the data set.

In fact, our experiments demonstrate that the algorithm performs well
even without knowing the treewidth of the graph $G$.
More precisely, the algorithm can be executed on an arbitrary graph $G$,
and the treewidth parameter is needed only to tune the coreset size.
We do not know the treewidth of the graphs used in the experiments
(we made no attempt to compute it, even approximately).
Our experiments validate the algorithm's effectiveness in practice,
with coreset size much smaller than our worst-case theoretical guarantees.
In fact, it is also plausible that while the graphs have moderate treewidth,
they are actually ``close'' to having an even smaller treewidth.
Another possible explanation is that the algorithm actually works well
on a wider family of graphs than bounded treewidth,
hence it is an interesting open question to analyze our construction
for graphs that are planar or excluding a fixed minor.

\subsection{Technical Contributions}
\label{sec:tech_contrib}

Our coreset construction employs the importance sampling framework proposed
by Feldman and Langberg~\cite{feldman2011unified},
although implemented differently as explained in Remark~\ref{remark:fl11}.
A key observation of the framework is that it suffices to give a \emph{uniform} upper bound on the \emph{shattering dimension} (see Definition~\ref{def:sdim}),
denoted $\SDim_v(M)$,
of the metric $M=(V,d)$ weighted by any point weight $v : V \to \mathbb{R}_+$.
Our main technical contribution is a (uniform) shattering-dimension bound that is \emph{linear} in the treewidth, and this implies the size bound of our coreset.
\begin{theorem}[Shattering Dimension Bound; see Theorem~\ref{thm:treewidth}]
	\label{thm:treewidth_intro}
	For every edge-weighted graph $G=(V, E)$ and every point weight function $v : V \to \mathbb{R}_{+}$, the shortest-path metric $M$ of $G$ satisfies $\SDim_v(M) \leq O(\tw(G))$.
\end{theorem}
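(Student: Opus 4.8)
The plan is to unwind the shattering‑dimension definition (Definition~\ref{def:sdim}) into a counting problem: it suffices to show that for every finite $Y\subseteq V$, writing $m:=|Y|$, the number of distinct sets $\{x\in Y:v(x)\cdot d(x,c)\le r\}$ over all $c\in V$ and $r\ge 0$ is at most $m^{O(\tw(G))}$. Fix such a $Y$ and a tree decomposition $(\calT,\{B_i\}_{i\in\calT})$ of $G$ of width $t:=\tw(G)$, and for a vertex $u$ let $\calT^u:=\{i\in\calT:u\in B_i\}$, a subtree of $\calT$. The structural input is the standard separator property: if a vertex set $B'$ of size $\le t+1$ separates $c$ from $y$ in $G$ (in particular if $c\in B'$, or if $B'$ is the adhesion $B_i\cap B_j$ of a $\calT$‑edge with $\calT^c$ and $\calT^y$ on opposite sides), then $d(y,c)=\min_{b\in B'}\bigl(d(y,b)+d(b,c)\bigr)$. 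Hence, whenever one fixed $B'$ works for all $y\in Y$, the range $\{x\in Y:v(x)d(x,c)\le r\}=\bigcup_{b\in B'}\{x\in Y: v(x)d(x,b)+v(x)\cdot d(b,c)\le r\}$ is a union of $\le t+1$ sets, each of which, as a function of the two real parameters $\bigl(d(b,c),r\bigr)$, is a ``dual halfplane'' range over $Y$, i.e.\ a range space whose $m$ points give an arrangement of $m$ lines, hence at most $m^2$ distinct subsets. Multiplying over the portals, any \emph{single fixed} portal set contributes at most $(m^2)^{2t+2}=m^{O(t)}$ distinct ranges.

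The crux is therefore to show that, relative to the fixed set $Y$, only $O(m)$ portal sets are ever needed (a priori $\calT$ has unboundedly many bags). To this end I would pick one bag $\beta(y)\ni y$ for each $y\in Y$, let $\calT_Y$ be the minimal subtree of $\calT$ spanning $\{\beta(y):y\in Y\}$, and call a node of $\calT_Y$ \emph{special} if it is a leaf, a branch node, or some $\beta(y)$; since $\calT_Y$ has $\le m$ leaves there are $O(m)$ special nodes, and $\calT_Y$ decomposes into $O(m)$ \emph{corridors}, paths between consecutive special nodes whose interior nodes have degree two in $\calT_Y$ and host no $\beta(y)$. I would then assign each center $c\in V$ to a region: the special node $j$ with $c\in B_j$, if any (equivalently the special node of $\calT_Y$ reached by $\calT^c$, or, if $\calT^c$ misses $\calT_Y$, where it attaches); otherwise the unique corridor whose interior $\calT^c$ meets or hangs off. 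The claims to verify, all via a case analysis of which side of which $\calT$‑edge $\beta(y)$ and $c$ lie on, are: (i) this covers every center; and (ii) each region admits a \emph{fixed} portal set --- $B_j$ for a special‑node region, and the union $S_0\cup S_{\ell-1}$ of the two end‑adhesions of the corridor for a corridor region --- together with a \emph{fixed} splitting $Y=Y'\sqcup Y''$ in the corridor case, so that $d(y,c)=\min_b\bigl(d(y,b)+d(b,c)\bigr)$ holds for all $y\in Y$ with $b$ ranging over the relevant portal part. The boundary cases needing care are when some $y$ lies in $B_{j_0}$ or $B_{j_\ell}$ for a corridor endpoint, and when $c$ shares a subtree hanging off $\calT_Y$ with some $y$; here one uses that $\beta(y)\in\calT_Y$ pins $\calT^y$ to one side of the relevant adhesion.

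Given (i) and (ii) the count follows: the ranges split over the $O(m)$ regions, and within each region the set $\{x\in Y:v(x)d(x,c)\le r\}$ is a union of $\le 2t+2$ dual‑halfplane ranges sharing a common $\bigl((d(b,c))_b,r\bigr)$ parametrization, hence there are at most $(m^2)^{2t+2}=m^{O(t)}$ of them per region and $O(m)\cdot m^{O(t)}=m^{O(t)}$ in total, giving $\SDim_v(M)=O(t)=O(\tw(G))$. I expect essentially all the difficulty to live in the second paragraph --- making the portal sets and the $Y'/Y''$ splittings genuinely independent of the particular center inside a region --- since naive region definitions fail (e.g.\ a high‑degree cut vertex of $G$ can force $\calT^y$ to be spread over arbitrarily many bags), whereas the initial reduction and the per‑region line‑arrangement bound are routine. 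Equivalently, one may package this bookkeeping as a reduction producing an auxiliary graph $H$ of treewidth $O(t)$ with $\poly(m)$ vertices (for fixed $t$) that preserves $d$ on $Y$ and realizes every center's distance vector to $Y$, and then apply the per‑bag bound to a tree decomposition of $H$ directly.
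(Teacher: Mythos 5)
Your proposal is correct, but it takes a genuinely different route from the paper's. The paper first reduces range-counting to counting the permutations of $H$ induced by $y \mapsto v(y)\,d(x,y)$, and then proves a structural lemma (Lemma~\ref{lemma:treewidth_key}) by running a recursive \emph{balanced-separator} decomposition on a nice tree decomposition, followed by a boundary-reduction step; this yields $\poly(|H|)$ parts, each either of size $O(\tw(G))$ or with $|A\cap H|=O(\tw(G))$ and a boundary $P$ of at most two bags, and the data points inside a part are added to the portal set $Q=P\cup(H\cap A)$ before a single $O(\tw(G))$-dimensional hyperplane-arrangement bound (the min-linear-function lemma) counts the orderings per part. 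You instead classify the \emph{centers} by the position of $\calT^c$ relative to the Steiner subtree $\calT_Y$ spanned by one chosen bag per data point, obtaining only $O(|H|)$ regions (special nodes and corridors) whose portal sets are a single bag or the two end-adhesions; the balanced-separator machinery and the control of $|A\cap H|$ are then unnecessary, because any data point whose bag-subtree intrudes into a corridor's interior or its hanging components must, by connectedness of $\calT^y$ and the fact that $\beta(y)$ is a special node, contain an end edge of the corridor and hence lie in an end-adhesion --- it is automatically its own portal, which is precisely the role played by $H\cap A$ in the paper's $Q$. The delicate cases you flag do close in exactly this way, and the $Y'\sqcup Y''$ splitting you propose is not even needed: the min over the union $S_0\cup S_{\ell-1}$ is sandwiched between $d(y,c)$ (triangle inequality) and the min over the correct end-adhesion; likewise, when $\calT^c$ misses $\calT_Y$ and attaches at a special node $j$, the attachment adhesion, being a subset of $B_j$, makes $B_j$ a valid portal set even though $c\notin B_j$. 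Your per-portal planar line arrangements (product over $O(\tw(G))$ portals) and the paper's one higher-dimensional arrangement are interchangeable, both giving $|H|^{O(\tw(G))}$. In short, your corridor decomposition is leaner ($O(|H|)$ regions rather than $\poly(|H|)$, no weighted separators or boundary reduction), whereas the paper's lemma is a center-independent statement about covering $V$ that makes the final count a one-line application of the min-linear lemma.
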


The shattering dimension of many important spaces was studied, including for Euclidean spaces~\cite{feldman2011unified} and for doubling spaces~\cite{huang2018epsilon}.
For graphs, the shattering dimension of an $K_r$-minor free graph
(which includes bounded-treewidth graphs)
is known to be $O(r)$~\cite{bousquet2015vc}
for \emph{unit weight} $v \equiv 1$,
see Section~\ref{sec:preliminaries} for details.
However, a general point weight $v : V \to \mathbb{R}_{+}$
introduces a significant technical challenge which is illustrated below.

In our context, the shattering dimension is defined with respect to
the set system of all \emph{$v$-weighted} metric balls,
where every such ball has a center $x \in V$ and a radius $r \geq 0$,
and is defined by
\begin{equation}
  \label{eqn:Bv}
  B_v(x, r) := \{ y\in V : v(y) \cdot d(x, y) \leq r \}.
\end{equation}
Roughly speaking, a bounded shattering dimension means that
for every subset $H\subseteq V$, the number of ways this $H$ is intersected by $v$-weighted metric balls is at most $\poly(\card{H})$.
The main technical difficulty is that an arbitrary weight $v$
can completely break the ``continuity'' of the space,
which can be illustrated even in one-dimensional line $V=\R$
(and analogously in a simple path graph on $V=\{0, 1, \ldots, n\}$),
where under unit weight $v\equiv 1$, every ball is a \emph{contiguous} interval,
but under a general weight $v$ an arbitrary subset of points could form a ball;
indeed, for a center $x = 0$ and radius $r = 1$,
every point $y\ge1$ can be made inside or outside of the ball $B_v(x, r)$
by setting $v(y) = \frac{1}{2y}$ or $v(y) = \frac{2}{y}$.

Our main technical contribution is to analyze the shattering dimension
with general weight functions,
which we outline now briefly
(see Section~\ref{sec:upper_bound} for a more formal overview).
We start by showing a slightly modified balanced-separator theorem
for bounded-treewidth graphs (Lemma~\ref{lemma:treewidth_key}),
through which the problem of bounding the shattering dimension
is reduced to bounding the ``complexity'' of shortest paths
that cross one of a few vertex separators, each of size $O(\tw(G))$.
An important observation is that, if $S\subset V$ is a vertex separator
and $x,y\in V$ belong to different components after removing $S$,
then every path connecting $x$ to $y$ must \emph{cross} $S$,
and hence
\begin{equation*}
	d(x, y) = \minn{ d(x, s_i) + d(s_i, y) : s_i\in S}.
\end{equation*}
If we fix $x\in V$ and consider all $y\in V$,
then we can think of each $d(s_i,y)$ as a real variable $z_i\in \R$,
so instead of varying over all $y\in V$, which depends on the graph structure,
we can vary over $\card{S}$ real variables, and each $d(x,\cdot)$ is the minimum of $\card{S}$ linear (actually affine) functions, or in short a min-linear function.
Finally, we consider different $x\in V$ with the same separator $S$,
and hence the same $\card{S}$ real variables,
and we bound the ``complexity'' of these min-linear functions
by relating it to the \emph{arrangement number} of hyperplanes,
which is a well-studied concept in computational geometry.
We believe our techniques may be useful for more general graph families,
such as minor-free graphs.

\subsection{Related Work}
Approximation algorithms have been extensively studied
for \kMedian in graph metrics, and here we only mention a small selection of results.
In general graphs (which is equivalent to general metrics),
it is NP-hard to approximate \kMedian within $1 + \frac{2}{e}$ factor~\cite{DBLP:conf/stoc/JainMS02},
and the
state-of-art is a $2.675$-approximation~\cite{DBLP:journals/talg/ByrkaPRST17}.
For planar graphs and more generally graphs excluding a fixed minor,
a PTAS for \kMedian was obtained in~\cite{DBLP:journals/siamcomp/Cohen-AddadKM19} based on local search, and it has been
improved to be FPT (i.e. the running time is of the form $f(k, \epsilon)\cdot n^{O(1)}$) recently~\cite{DBLP:conf/esa/Cohen-AddadPP19}.
For general graphs, \cite{Thorup05} proposed an $O(1)$-approximation that runs in near-linear time.

Coresets have been studied for many problems in addition to \kMedian, such as PCA~\cite{DBLP:journals/siamcomp/FeldmanSS20} and regression~\cite{DBLP:conf/nips/MaaloufJF19}, but in our context we focus on discussing results for other clustering problems only.
For \kCenter clustering in Euclidean space $\mathbb{R}^{\mathsf{d}}$, an $\eps$-coreset of size $O(\frac{k}{\eps^\mathsf{d}})$ can be constructed in near-linear time~\cite{agarwal2002exact,har2004clustering}.
Recently, coreset for generalized clustering objective receives attention from the research community, for example, \cite{DBLP:conf/icml/BravermanJKW19} obtained simultaneous coreset for \OkM, \cite{DBLP:journals/corr/abs-1812-10854,DBLP:conf/nips/HuangJV19} gave a coresets for $k$-clustering with fairness constraints,
and
\cite{DBLP:conf/nips/MaromF19} presented a coreset for \kMeans clustering on lines in Euclidean spaces where inputs are lines in $\mathbb{R}^{\mathsf{d}}$ while the centers are points.

 	\section{Preliminaries}
\label{sec:preliminaries}
\begin{definition}[Tree Decomposition and Treewidth]
	\label{def:treewidth}
	A tree decomposition of a graph $G=(V, E)$ is a tree $\calT$ with node set $\calV$, such that each node in $\calV$, called a \emph{bag}, is a subset of $V$, and the following conditions hold: 
	\begin{enumerate}
		\item $\bigcup_{S \in \calV}{ S } = V$.
		\item $\forall u \in V$, the nodes of $\calT$ that contain $u$ form a connected component in $\calT$.\label{itm:tree_decomp_connect}
		\item $\forall (u, w) \in E$, $\exists S \in \calV$, such that $\{u, w\} \subseteq S$.
	\end{enumerate}
	The treewidth of a graph $G$, denoted $\tw(G)$, is the smallest integer $t$, such that there exists a tree decomposition with maximum bag size $t + 1$.
\end{definition}
A \emph{nice} tree decomposition is a tree decomposition such that each bag has a degree at most $3$.\footnote{Usually, nice tree decompositions are defined to have additional guarantees, but we only need the bounded degree.}
It is well known that there exists a nice tree decomposition of $G$ with maximum bag size $O(\tw(G))$~\cite{kloks1994treewidth}.

\paragraph{Shattering Dimension}
As mentioned in Section~\ref{sec:intro}, our coreset construction employs the \FL framework~\cite{feldman2011unified}. A key notion in the \FL framework is the \emph{shattering dimension} of a metric space with respect to a point weight function.
\begin{definition}[Shattering Dimension]
	\label{def:sdim}
	Given a point weight function $v : V \to \mathbb{R}_{+}$,
	the shattering dimension of $M=(V, d)$ with respect to $v$,
	denoted as $\SDim_v(M)$,
	is the smallest integer $t$,
	such that for every $H \subseteq V$ with $|H| \geq 2$,
	it holds that
	\begin{align*}
	\left| \left\{ H \cap B_v(x, r) : x \in V, r \geq 0 \right\} \right| \leq |H|^t . 
	\end{align*}
\end{definition}
Observe that the left-hand side 
counts the number of ways that $H$ is intersected by all weighted balls,
which were defined in~\eqref{eqn:Bv}. 
We remark that our notion shattering dimension is tightly related to the well-known \emph{VC-dimension} (see for example~\cite{kearns1994introduction}).
In particular, let $\calB_v := \{ B_v(x, r) : x \in V, r \geq 0 \}$
be the collection of all $v$-weighted balls, then the VC-dimension of the set system $(V, \calB_v)$ is within a logarithmic factor to the $\SDim_v(M)$.
It was shown in~\cite{bousquet2015vc} that the VC-dimension of a $K_r$-minor free graph with unit weights $v \equiv 1$ is at most $O(r)$,
which immediately implies an $O(r)$ bound also for the shattering dimension (under unit weight $v \equiv 1$).

	\section{Coresets for \kMedian in Graph Metrics}
\label{sec:upper_bound}
In this section, we present a near-linear time construction for $\epsilon$-coreset for \kMedian in graph metrics, whose size is linear in the treewidth. This is formally stated in the following theorem.

\begin{theorem}[Coreset for Graph \kMedian]
	\label{thm:coreset}
	For every edge-weighted graph $G=(V, E)$, $0 < \epsilon, \delta < 1$, and integer $k \geq 1$,
	\kMedian of every data set $X \subseteq V$
        (with respect to the shortest path metric of $G$)
        admits an $\epsilon$-coreset of size $\tilde{O}\left( \frac{k^2}{\epsilon^{2}} \cdot (k\cdot \tw(G)+\log (1/\delta))\right)$.
        Furthermore, it can be computed in time $\tilde{O}(|E|)$ with success probability $1 - \delta$.
\end{theorem}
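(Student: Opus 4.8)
The plan is to plug our uniform shattering-dimension bound into the importance-sampling framework of Feldman and Langberg. Recall (Lemma~\ref{lemma:fl11}, and the variant of~\cite{DBLP:journals/siamcomp/FeldmanSS20} mentioned in the footnote) that the \FL framework reduces $\epsilon$-coreset construction for \kMedian to three problem-specific ingredients: (i) an $O(1)$-approximate solution $C_0$ of size $O(k)$ used to seed sensitivity sampling; (ii) the fact that the total sensitivity of \kMedian is $O(k)$, together with per-point sensitivity upper bounds computable from $C_0$; and (iii) a bound $\Delta$ on the ``dimension'' of the range space of functions $\{x\mapsto d(x,C): C\subseteq V,\ |C|=k\}$ that controls the Chernoff-plus-union-bound argument. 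The crucial observation is that ingredient (iii) is exactly what Theorem~\ref{thm:treewidth} supplies: after the framework rescales distances by the sampling weights, the ranges in question are precisely the $v$-weighted balls $B_v(x,r)$ of~\eqref{eqn:Bv} for the resulting weight $v$, so the \emph{uniform} bound $\SDim_v(M)\le O(\tw(G))$ over all weights $v$ is what makes the argument go through.

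From there the size bound is bookkeeping. A $k$-subset $C$ corresponds to a pointwise minimum of $k$ weighted balls, so the pseudo-dimension of the associated function space is $O(k)$ times the single-ball shattering dimension, i.e.\ $\Delta=O(k\cdot\tw(G))$ up to polylogarithmic factors; here one uses the standard fact (via Definition~\ref{def:sdim} and the relation between $\SDim$ and VC-dimension noted in Section~\ref{sec:preliminaries}) that $k$-fold unions inflate the dimension by a factor $O(k)$. Feeding total sensitivity $S=O(k)$ and dimension $\Delta=O(k\cdot\tw(G))$ into Lemma~\ref{lemma:fl11} yields a sample of size $\tilde{O}\!\big(\tfrac{S^2}{\epsilon^2}(\Delta+\log(1/\delta))\big)=\tilde{O}\!\big(\tfrac{k^2}{\epsilon^2}(k\cdot\tw(G)+\log(1/\delta))\big)$, which is an $\epsilon$-coreset with failure probability at most $\delta/2$ from the sampling step; the approximation in ingredient (i) is taken to succeed with probability $1-\delta/2$, so the overall success probability is $1-\delta$.

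For the running time, the point emphasized in Remark~\ref{remark:fl11} is that we must never compute the full shortest-path metric (which alone would cost roughly $|V|\cdot\tilde{O}(|E|)$); the framework only needs, for every vertex $x$, the distance $d(x,C_0)$ and the identity of its nearest seed center. I would obtain $C_0$ by Thorup's near-linear-time $O(1)$-approximation for graph \kMedian~\cite{Thorup05} in time $\tilde{O}(|E|)$, and then compute all of $\{d(x,C_0)\}_{x\in V}$ and the nearest-center assignment with a \emph{single} Dijkstra run from a virtual super-source joined by zero-weight edges to every vertex of $C_0$, again in time $\tilde{O}(|E|)$; from this one reads off $\cost(X,C_0)$, the cluster sizes, and hence all per-point sensitivities. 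Drawing the $m=\tilde O(\poly(k)/\epsilon^2\cdot\tw(G))$ weighted samples (capped at $|X|$, since a coreset need never exceed the data set) costs $O(|X|+m)=\tilde O(|E|)$. Adding up, the whole construction runs in $\tilde{O}(|E|)$ time.

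I expect no serious obstacle in this particular theorem: the genuine difficulty — bounding the shattering dimension under \emph{arbitrary} weights, where a single weighted ball can be an arbitrary vertex subset — is isolated in Theorem~\ref{thm:treewidth}, which is proved separately in Section~\ref{sec:upper_bound}. The only points that need care here are verifying that the quantity $\SDim_v(M)$ is literally the object the \FL analysis consumes after reweighting (it is, by the definition of $\calB_v$), that the $k$-subset union bound contributes only a multiplicative $k$ to the effective dimension rather than something worse, and that the distance information needed by sensitivity sampling is confined to $d(\cdot,C_0)$ so that the near-linear running time survives on general graphs.
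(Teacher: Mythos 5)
Your proposal follows essentially the same route as the paper: importance (sensitivity) sampling in the Feldman--Langberg framework, with per-point sensitivities and total sensitivity $O(k)$ obtained from Thorup's near-linear-time $O(1)$-approximation (the paper cites \cite{varadarajan2012sensitivity} for this), the uniform weighted-ball shattering-dimension bound of Theorem~\ref{thm:treewidth} plugged into Lemma~\ref{lemma:fl11} (which already absorbs the factor-$k$ union effect you re-derive), and the single-Dijkstra super-source trick for computing all $d(x,C^\star)$ in $\tilde{O}(|E|)$ time. The argument is correct and matches the paper's proof, with your explicit $\delta/2$ split being a minor refinement of a point the paper glosses over.
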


Our construction is based on the \FL framework~\cite{feldman2011unified},
in which the coreset is constructed using \emph{importance sampling}.
While this framework is quite general, their implementation is tailored
to Euclidean spaces and is less suitable for graphs metrics. 
In addition, their algorithm runs in $\tilde{O}(kn)$ time
assuming access to pairwise distances, 
which is efficient in Euclidean spaces but rather expensive in graphs.

We give an efficient implementation of the \FL framework in graphs, 
and also provide an alternative analysis that is not Euclidean-specific.
A similar strategy was previously employed for constructing coresets
in doubling spaces~\cite{huang2018epsilon}, 
but that implementation is not applicable here because of the same efficiency issue (i.e., it requires oracle access to distances).
We present our implementation and analysis of the framework below,
and then put it all together to prove Theorem~\ref{thm:coreset}. 

\paragraph{Importance Sampling}
At a high level, the importance sampling method consists of two steps.
\begin{enumerate}
\item For each data point $x \in X$, compute an importance $\sigma_x \in \mathbb{R}_+$.
\item Form a coreset by drawing $N$ (to be determined later)
  independent samples from $X$, 
  where each sample picks every $x \in X$ 
  with probability proportional to $\sigma_x$,
  i.e., $p_x := \frac{\sigma_x}{ \sum_{x'\in X}{\sigma_{x'}} }$,
  and assigns it weight $\frac{1}{p_x}$. 
\end{enumerate}

To implement the algorithm, we need to define $\sigma_x$ and $N$. 
Following the \FL framework, 
each importance $\sigma_x$ is an \emph{upper bound} on the \emph{sensitivity} \begin{equation*}
	\sigma^\star_x := \max_{ C \subseteq V, |C| = k}{ \frac{ d(x, C) }{ \cost(X, C) } },
\end{equation*}
which was introduced in~\cite{DBLP:conf/soda/LangbergS10}
and represents the maximum possible contribution of $x$ to the objective
over all center sets $C$. 

Let the \emph{total importance} be $\sigma_X := \sum_{x \in X}{\sigma_x}$. 
Our key tool is the following bound on coreset size $N$
in terms of $\sigma_X$ and a uniform upper bound on $\SDim_v(M)$.
It follows by combining their Theorem 4.1 in~\cite{feldman2011unified} 
with a PAC sampling bound from~\cite{vapnik2015uniform}. 

\begin{lemma}[Theorem 4.1 of~\cite{feldman2011unified}
  together with~\cite{vapnik2015uniform}]
  \label{lemma:fl11}
  Let $\SDim_{\max} := \max_{v : V \to \mathbb{R}_+}{\SDim_v(M)}$.
  Then for
  \begin{equation*}
    N = \tilde{O}\Big( \big(\frac{\sigma_X}{\epsilon}\big)^2
            \big( k \cdot \SDim_{\max} + \log{\frac{1}{\delta}} \big)
    \Big), 
  \end{equation*}
  the importance sampling procedure returns an $\epsilon$-coreset
  with probability at least $1 - \delta$. 
\end{lemma}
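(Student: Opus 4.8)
The plan is to prove Lemma~\ref{lemma:fl11} by assembling it from the two ingredients the statement already advertises: the Feldman--Langberg reduction~\cite{feldman2011unified} (their Theorem~4.1), which turns the coreset requirement into an $\epsilon'$-approximation of a suitable range space over $X$ with $\epsilon' = \Theta(\epsilon/\sigma_X)$, and a classical uniform-convergence bound~\cite{vapnik2015uniform}, which says that a uniform random sample of size $\tilde O\bigl(\frac{1}{(\epsilon')^2}(\dim + \log\frac1\delta)\bigr)$ is an $\epsilon'$-approximation of a range space of VC/shattering dimension $\dim$, with probability $1-\delta$. Substituting $\epsilon' = \Theta(\epsilon/\sigma_X)$ and $\dim = O(k\cdot\SDim_{\max})$ yields exactly the claimed $N$, so the only real task is to identify the range space produced by the reduction and bound its dimension by $O(k\cdot\SDim_{\max})$.

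First I would spell out the reduction. View each data point $x\in X$ as the function $C\mapsto d(x,C)$, so that $\cost(X,C)=\sum_{x\in X}d(x,C)$ and the importance-sampling estimator $\sum_{x\in\text{sample}}\frac1{p_x}d(x,C)$ with $p_x\propto\sigma_x$ is its Horvitz--Thompson estimator. The Feldman--Langberg analysis rescales the $x$-th term by $1/\sigma_x$ and the whole sum by $\cost(X,C)$ --- which makes the per-point contributions bounded precisely because $\sigma_x$ upper-bounds the sensitivity $\sigma^\star_x$ --- and reduces the two-sided multiplicative guarantee ``$\sum_{x\in D}w(x)\,d(x,C)\in(1\pm\epsilon)\cost(X,C)$ for every $k$-subset $C$'' to an $\epsilon'$-approximation, with $\epsilon'=\Theta(\epsilon/\sigma_X)$, of the range space on ground set $X$ whose ranges are the level sets
\begin{equation*}
R_{C,\rho} := \{\, x \in X : \tfrac{1}{\sigma_x}\, d(x,C) \le \rho \,\}, \qquad C \subseteq V,\; |C| = k,\; \rho \ge 0.
\end{equation*}
(The scalar $\cost(X,C)$ depends only on the query $C$, so it merely rescales the threshold $\rho$ and adds no dimension; and passing to complements, which preserves VC/shattering dimension, lets one freely switch between ``$\le$'' and ``$\ge$''.)

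Next I would bound the dimension of $\{R_{C,\rho}\}$ using Theorem~\ref{thm:treewidth_intro}. Writing $v(x):=1/\sigma_x$ and using $d(x,C)=\min_{c\in C}d(x,c)$ together with $v(x)\ge0$, so that $v(x)$ commutes with the minimum,
\begin{equation*}
R_{C,\rho} = \{\, x \in X : v(x)\, d(x,C) \le \rho \,\} = \bigcup_{c \in C} \bigl( X \cap B_v(c,\rho) \bigr),
\end{equation*}
i.e.\ every range is a union of $k$ weighted balls sharing one common weight $v$ and one common radius $\rho$. Each single weighted ball $X\cap B_v(c,\rho)$ lives in a set system of shattering dimension at most $\SDim_v(M)\le\SDim_{\max}$ --- this is exactly the quantity controlled by Theorem~\ref{thm:treewidth_intro}, and it is crucial here that the bound is \emph{uniform over all point weights $v$}, which is precisely what licenses plugging in the data-dependent weight $v=1/\sigma_x$. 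The standard ``union of ranges'' lemma then says that taking $k$-fold unions multiplies the shattering dimension by $O(k)$ (equivalently multiplies the VC dimension by $O(k\log k)$), giving $\dim(\{R_{C,\rho}\}) = O(k\cdot\SDim_{\max})$ once the logarithmic factors are absorbed into $\tilde O(\cdot)$.

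Finally I would plug this dimension bound and $\epsilon'=\Theta(\epsilon/\sigma_X)$ into the uniform-convergence sampling bound of~\cite{vapnik2015uniform} (the conversion between VC dimension and shattering dimension costs only another logarithmic factor, again absorbed by $\tilde O$), obtaining $N=\tilde O\bigl((\sigma_X/\epsilon)^2\,(k\cdot\SDim_{\max}+\log\frac1\delta)\bigr)$, and then invoke the Feldman--Langberg reduction in the forward direction to conclude that the resulting weighted sample is an $\epsilon$-coreset with probability $1-\delta$. I expect the main obstacle to be the bookkeeping in the middle step: one must verify carefully that the range space emerging from the Feldman--Langberg normalization is, up to complementation and the harmless rescaling by $\cost(X,C)$, exactly the $k$-fold union of weighted metric balls \emph{under a single arbitrary point weight}, and that the normalization is what produces the $\epsilon' = \Theta(\epsilon/\sigma_X)$ dependence; once that identification is pinned down, the dimension bound is immediate from Theorem~\ref{thm:treewidth_intro} and the remaining estimates are routine.
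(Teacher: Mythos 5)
Your proposal is correct and takes exactly the route the paper intends: the paper gives no standalone proof of Lemma~\ref{lemma:fl11}, stating only that it follows by combining Theorem~4.1 of Feldman--Langberg with the PAC/uniform-convergence sampling bound, and your reconstruction --- reducing the coreset guarantee to an $\epsilon'=\Theta(\epsilon/\sigma_X)$-approximation of the range space of level sets of $x\mapsto \frac{1}{\sigma_x}d(x,C)$, identifying those ranges (up to complementation and the harmless $\cost(X,C)$ rescaling) as $k$-fold unions of $v$-weighted balls with $v=1/\sigma_x$, and bounding their dimension by $O(k\cdot\SDim_{\max})$ --- is precisely that combination and correctly explains why the uniform-in-$v$ bound of Theorem~\ref{thm:treewidth} is what the framework needs. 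I see no gap.
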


\paragraph{Computing $\sigma_x$}
An efficient algorithm to compute $\sigma_x$
was presented in~\cite{varadarajan2012sensitivity},
assuming that an $O(1)$-approximation to \kMedian is given. 
Furthermore, an $O(k)$ bound on the total importance $\sigma_X$ was shown.

\begin{lemma}[\cite{varadarajan2012sensitivity}]
	\label{lemma:vx12}
	Suppose $C^\star$ is a $\rho$-approximate solution to the \kMedian instance.
	Let $\sigma_x := \rho \cdot \big(\frac{d(x, C^\star)}{\cost(X, C^\star)} + \frac{1}{ |C^\star(x)| } \big) $,
	where $C^\star(x) \subseteq X$ is the cluster of $C^\star$ that contains $x$.
	Then $\sigma_X = O(\rho k)$ and
        $$\forall x \in X, \quad \sigma_x \geq \Omega(\sigma^\star_x).$$ 
\end{lemma}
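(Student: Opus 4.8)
The plan is to prove the two assertions separately; both are standard sensitivity computations in the spirit of~\cite{DBLP:conf/soda/LangbergS10,varadarajan2012sensitivity}, and I would assume throughout that the instance is nondegenerate, i.e.\ $\cost(X,C^\star)>0$ (if $\OPT=0$ the data lies on $\le k$ points and a trivial coreset works).

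\textbf{Bounding the total importance.} I would split $\sigma_x$ into its two summands and sum each over $X$. The first summand contributes $\rho\sum_{x\in X}\frac{d(x,C^\star)}{\cost(X,C^\star)}=\rho\cdot\frac{\cost(X,C^\star)}{\cost(X,C^\star)}=\rho$. For the second summand I would group the points by cluster: let $P_1,\dots,P_k$ be the clusters that $C^\star$ induces on $X$ (so $C^\star(x)=P_i$ for $x\in P_i$); then each cluster contributes $\sum_{x\in P_i}\frac{1}{|P_i|}=1$, so $\rho\sum_{x\in X}\frac{1}{|C^\star(x)|}=\rho k$. Adding the two parts gives $\sigma_X=\rho(k+1)=O(\rho k)$.

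\textbf{The per-point lower bound.} Fix $x\in X$ and an arbitrary $k$-subset $C\subseteq V$; it suffices to show $\frac{d(x,C)}{\cost(X,C)}\le 2\sigma_x$, and then take the maximum over $C$ to obtain $\sigma_x\ge\tfrac12\sigma^\star_x=\Omega(\sigma^\star_x)$. Let $c^\star\in C^\star$ be the center serving $x$, so $d(x,c^\star)=d(x,C^\star)$, and let $P:=C^\star(x)$ be its cluster; note $d(c^\star,y)=d(y,C^\star)$ for every $y\in P$. The triangle inequality gives, for every $y\in P$,
\begin{equation*}
  d(x,C)\;\le\; d(x,c^\star)+d(c^\star,y)+d(y,C)\;=\;d(x,C^\star)+d(y,C^\star)+d(y,C).
\end{equation*}
Averaging over $y\in P$ and using $\sum_{y\in P}d(y,C^\star)\le\cost(X,C^\star)$ and $\sum_{y\in P}d(y,C)\le\cost(X,C)$ yields
\begin{equation*}
  d(x,C)\;\le\; d(x,C^\star)+\frac{\cost(X,C^\star)}{|P|}+\frac{\cost(X,C)}{|P|}.
\end{equation*}
Now I would invoke the approximation guarantee: $\cost(X,C)\ge\OPT\ge\frac1\rho\cost(X,C^\star)$, hence $\cost(X,C^\star)\le\rho\,\cost(X,C)$ and $\frac{d(x,C^\star)}{\cost(X,C)}\le\rho\cdot\frac{d(x,C^\star)}{\cost(X,C^\star)}$. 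Dividing the displayed inequality by $\cost(X,C)$ gives
\begin{equation*}
  \frac{d(x,C)}{\cost(X,C)}\;\le\;\rho\cdot\frac{d(x,C^\star)}{\cost(X,C^\star)}+\frac{\rho+1}{|P|}\;\le\;(\rho+1)\Big(\frac{d(x,C^\star)}{\cost(X,C^\star)}+\frac{1}{|P|}\Big)\;=\;\frac{\rho+1}{\rho}\,\sigma_x\;\le\;2\sigma_x,
\end{equation*}
using $\rho\ge1$ in the last step. Taking the maximum over all $k$-subsets $C$ completes the proof.

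\textbf{Main obstacle.} The computation is essentially routine; the one step that carries the argument is the averaging over the cluster $P$, which replaces the pointwise quantity $d(x,C)$ by $d(x,C^\star)$ plus the \emph{average} cluster cost $\cost(X,C)/|P|$. This is exactly where the two summands of $\sigma_x$ come from (the ``$1/|C^\star(x)|$'' term absorbs the averaged cost) and where the $\rho$-approximation hypothesis is used, namely to compare $\cost(X,C^\star)$ against $\cost(X,C)$ for the competitor $C$. The only care needed beyond that is the bookkeeping of constants and checking the degenerate cases ($\cost(X,C)=0$, or $|P|=1$), which are immediate.
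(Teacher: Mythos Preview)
Your proof is correct and follows the standard sensitivity argument. The paper itself does not prove this lemma; it merely cites it from~\cite{varadarajan2012sensitivity}, so there is no ``paper's own proof'' to compare against, but your derivation is precisely the well-known one (averaging the triangle inequality over the approximate cluster and invoking $\cost(X,C^\star)\le\rho\,\cost(X,C)$).
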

Thus, to construct the coreset in near-linear time,
we need to compute an $O(1)$-approximation $C^\star$ fast, 
for which we use the following result of~\cite{Thorup05}.
\begin{lemma}
	\label{lemma:thorup}
	There is an algorithm that, given as input
        a weighted undirected graph $G=(V, E)$ and data set $X \subseteq V$,
        computes an $O(1)$-approximate solution for graph \kMedian
        in time $\tilde{O}(|E|)$ with probability $1 - o(1)$. 
\end{lemma}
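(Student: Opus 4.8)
The plan is to follow Thorup's two-level strategy: (i)~give a near-linear time $O(1)$-approximation for \emph{uniform-cost facility location} on the shortest-path metric of $G$ with client set $X$, and (ii)~reduce \kMedian to facility location by Lagrangian relaxation, binary-searching on the (uniform) facility opening cost $\lambda$. The recurring difficulty is that we cannot afford all-pairs shortest paths, so every distance query must be answered from a sparse \emph{distance sketch} and every search must have its cost charged against the edges it explores.

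First I would build the distance sketch. Sample nested source sets $V = S_0 \supseteq S_1 \supseteq \cdots \supseteq S_{L}$ with $L = O(\log n)$, where $S_{i}$ retains each vertex of $S_{i-1}$ independently with probability $\tfrac12$, so $|S_i| \approx n/2^{i}$ with high probability. For each $i$, one multi-source Dijkstra from $S_i$ (total time $\tilde O(|E|)$) yields for every $v \in V$ the value $r_i(v) := \dist(v, S_i)$ together with a nearest sampled vertex. Standard Chernoff bounds give that, with high probability, for every vertex $v$ and radius $\rho$ the number of vertices within distance $\rho$ of $v$ is, up to constant factors, between $2^{i}$ and $2^{i+1}$ for the index $i$ with $r_i(v) \le \rho < r_{i-1}(v)$. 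Consequently, for any $v$ and target count $t$ one can enumerate (roughly) the $t$ nearest vertices of $v$ in time $\tilde O(t)$ by a truncated Dijkstra, and estimate ball sizes in $O(\log n)$ time; these are exactly the primitives needed below.

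Next I would run a discretized primal--dual algorithm for facility location with uniform cost $\lambda$, in the Jain--Vazirani / Mettu--Plaxton style. Conceptually each client $x \in X$ raises a dual variable $\alpha_x$ at unit rate; once $\alpha_x$ exceeds $\dist(x,u)$ the client starts paying towards facility $u$, and $u$ becomes \emph{tight} when its total received payment reaches $\lambda$. I would simulate this continuous process at geometric time steps $\lambda(1+\eps)^{0},(1+\eps)^{1},\dots$: in each phase, the clients newly reaching a given facility and the phase at which a facility turns tight are read off from the sketch via localized searches, and a charging argument over phases shows each edge is touched only $O(\log)$ times, so the simulation costs $\tilde O(|E|)$. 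After all facilities are tight, select a maximal ``independent'' subset $F$ of tight facilities (with pairwise-disjoint dual balls) and connect each client to its nearest facility in $F$; the classical analysis bounds connection cost plus $\lambda|F|$ by $3\,\OPT_{\lambda}$, i.e.\ a Lagrangian-multiplier-preserving $O(1)$-approximation, and the $(1+\eps)$ discretization and sketch error inflate this only by a constant (fix $\eps$ to a small constant).

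Finally I would reduce \kMedian to this. The number $k(\lambda)$ of facilities opened above is weakly monotone (non-increasing) in $\lambda$, so a binary search over $O(\log(n\Delta))$ candidate values of $\lambda$ — where the aspect ratio $\Delta$ is reduced to $\poly(n)$ by the usual preprocessing that contracts negligibly short edges and truncates huge ones — either finds $\lambda$ with $k(\lambda)=k$, giving an $O(1)$-approximate \kMedian solution directly, or brackets $k$ between two solutions with $k_1 < k < k_2$ open facilities at almost-equal $\lambda$. In the latter case the standard combination step (rounding the convex combination of the two dual solutions) yields a solution with exactly $k$ centers at cost $O(1)\cdot\OPT$. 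The main obstacle — and where the constant is actually pinned down — is the faithful near-linear-time simulation of the continuous dual growth from sketch-based distances only: one must verify that batching events at geometric scales and resolving them by charged localized searches changes neither which facilities open (beyond a constant factor) nor revisits any edge too often. The $o(1)$ failure probability comes solely from the source sampling (union bound over vertices and scales) and can be reduced further by independent repetitions.
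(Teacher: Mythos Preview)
The paper does not prove this lemma at all: it is stated as a citation of Thorup's result \cite{Thorup05} and used as a black box in the proof of Theorem~\ref{thm:coreset}. There is nothing in the paper to compare your proposal against beyond the bare statement.

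Your write-up is therefore not a ``different route'' from the paper's proof but rather an attempt to reconstruct Thorup's original argument. As such it is broadly in the right spirit --- Thorup does build a sampling-based distance sketch (bunches/clusters in the Thorup--Zwick style), runs a near-linear-time primal--dual facility-location routine on top of it, and then obtains \kMedian via the Jain--Vazirani Lagrangian relaxation with binary search on the uniform opening cost. That said, several pieces of your sketch are imprecise relative to what Thorup actually does and would not stand as a proof on their own: the claim that truncated Dijkstra enumerates the $t$ nearest vertices in $\tilde O(t)$ time ignores that a single vertex can have degree far exceeding $t$ (Thorup handles this via his bunch/cluster machinery, not naive truncated search); the charging argument that ``each edge is touched only $O(\log)$ times'' across geometric phases is asserted without the supporting structure (Thorup's analysis is considerably more delicate here); and the monotonicity of $k(\lambda)$ and the combination step are glossed over. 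If your goal is to match the paper, you should simply cite \cite{Thorup05}; if your goal is to actually reprove Thorup's theorem, the sketch needs substantially more care in the near-linear-time simulation of the dual growth.
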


Finally, we need a uniform shattering-dimension bound 
(with respect to treewidth).
Such a bound, stated next, is our main technical contribution
and its proof is presented in Section~\ref{sec:proof_treewidth}. 
\begin{theorem}[Shattering Dimension]
	\label{thm:treewidth}
	For every edge-weighted graph $G=(V, E)$ and every point weight function $v : V \to \mathbb{R}_{+}$, the shortest-path metric $M$ of $G$ satisfies $\SDim_v(M) \leq O(\tw(G))$.
\end{theorem}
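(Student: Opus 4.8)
Fix an arbitrary $H \subseteq V$ with $\card{H} \ge 2$ and an arbitrary weight $v : V \to \R_{+}$; the goal is to show that the number of distinct sets $H \cap B_v(x,r)$, over all $x \in V$ and $r \ge 0$, is at most $\card{H}^{O(\tw(G))}$. First I would make a standard reduction from balls to orderings. For a fixed $x$, as $r$ increases from $0$, the set $B_v(x,r) \cap H = \set{ y \in H : v(y)\, d(x,y) \le r }$ only grows, so it attains at most $\card{H}+1$ values, and the family of these values is determined by the weak order that $y \mapsto v(y)\, d(x,y)$ induces on $H$. Hence it suffices to prove that, as $x$ ranges over $V$, this map induces at most $\card{H}^{O(\tw(G))}$ distinct weak orders on $H$.

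Next I would trade the unbounded graph structure for a bounded number of real parameters via separators. The key observation is: if $S \subseteq V$ is a vertex separator and $x,y$ lie in different components of $G - S$, then every $x$-$y$ path meets $S$, so $d(x,y) = \min_{s\in S}(d(x,s)+d(s,y))$, and since $v(y)\ge 0$,
\begin{equation*}
  v(y)\, d(x,y) \;=\; \min_{s\in S}\bigl(\, v(y)\,d(s,y) \,+\, v(y)\cdot z_s \,\bigr),
  \qquad z_s := d(x,s).
\end{equation*}
So, once the $\card{S}$ reals $z = (z_s)_{s\in S}$ are frozen, each value $v(y)\,d(x,y)$ is the pointwise minimum of $\card{S}$ affine functions of $z \in \R^{S}$ whose coefficients depend only on $v(y)$ and the fixed distances $d(s,y)$ --- a ``min-linear'' function, independent of which $x$ realizes $z$. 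Therefore, for all $x$ that are separated from a common set of points by the \emph{same} $S$, the induced weak order on those points is constant on each cell of the hyperplane arrangement in $\R^{S}$ obtained by overlaying, for every pair of such points, the $O(\card{S}^2)$ hyperplanes along which their two min-linear functions can be equal (together with the $O(\card{S}^{2})$ hyperplanes where a single min-linear function switches its active piece). That arrangement has $\poly(\card{H})$ hyperplanes in $\R^{S}$ with $\card{S} = O(\tw(G))$, hence $\poly(\card{H})^{O(\tw(G))} = \card{H}^{O(\tw(G))}$ cells (the number of faces of an arrangement of $m$ hyperplanes in $\R^{d}$ being $O(m^{d})$). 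This is exactly where arbitrary weights are absorbed: in contrast with the unit-weight case of~\cite{bousquet2015vc}, where a ball is a ``contiguous'' object, here I use no continuity, only the min-linear form above.

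Finally I would glue these local counts over the whole graph. Using a (slightly modified) balanced-separator theorem for bounded-treewidth graphs, stated as Lemma~\ref{lemma:treewidth_key} and obtained from a nice tree decomposition (Definition~\ref{def:treewidth}), I would build a recursive decomposition of $G$ in which every internal node contributes one separator of size $O(\tw(G))$ and the recursion is balanced with respect to $H$; thus it has depth $O(\log\card{H})$ and only $\card{H}^{O(1)}$ distinct root-to-node branches. For a fixed $x$ following its branch, every $y \in H$ is either separated from $x$ by one of the $O(\log\card{H})$ separators on that branch --- so the min-linear description of the previous paragraph applies with that separator's $O(\tw(G))$ variables --- or lies in the same terminal piece as $x$, and there are only $O(1)$ such $y$, costing at most a $\card{H}^{O(1)}$ factor. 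Combining the per-separator arrangement bounds with the $\card{H}^{O(1)}$ bound on the number of branches then yields at most $\card{H}^{O(\tw(G))}$ weak orders on $H$, and hence $\SDim_v(M) \le O(\tw(G))$. The hard part is precisely this gluing: the decomposition must be chosen so that the separators along a branch act on \emph{disjoint} blocks of $H$ whose weak orders compose into the global weak order with only polynomial overhead --- rather than a $\card{H}^{\Theta(\log\card{H})}$ blow-up from naively multiplying a $\card{H}^{O(\tw(G))}$ factor at each of the $\Theta(\log\card{H})$ levels. Making that composition work with a constant (not $\card{H}$-dependent) exponent is the technical core of the argument and the reason the final bound is $O(\tw(G))$.
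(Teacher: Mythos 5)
Your first two steps (reducing ball intersections to weak orders on $H$, and turning distances across a separator $S$ into min-linear functions of the $\card{S}$ variables $z_s=d(x,s)$, bounded via hyperplane arrangements in $\R^{\card{S}}$) coincide with the paper's argument. But the gluing step, which you yourself flag as ``the technical core,'' is exactly what is missing, and your proposed route does not deliver it. If each $x$ sees the points of $H$ through the $\Theta(\log\card{H})$ different separators along its root-to-leaf branch, then the global weak order on $H$ requires comparing $v(y)\,d(x,y)$ and $v(y')\,d(x,y')$ for $y,y'$ handled by \emph{different} separators; these are min-linear functions over different variable sets, and the only generic way to control such cross-block comparisons is a joint arrangement over all $\Theta(\tw(G)\log\card{H})$ variables on the branch. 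That gives $\card{H}^{O(\tw(G)\log\card{H})}$ orders, i.e.\ at best $\SDim_v(M)=O(\tw(G)\log n)$ -- which reinstates precisely the $\log n$ factor the theorem is meant to remove. The per-block cell structure does not ``compose with polynomial overhead,'' because the interleaving of blocks depends on $x$ in a way the per-block arrangements do not record; asserting that a suitable decomposition makes this work is the claim, not a proof.

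The paper's resolution avoids branch-wise composition entirely. After the balanced decomposition (which only guarantees $\poly(\card{H})$ pieces and $O(\tw(G))$ points of $H$ inside each piece), it runs a second \emph{boundary-reduction} refinement so that every remaining piece $A$ has at most \emph{two} boundary bags. Then $P$, the union of those bags, has size $O(\tw(G))$ and separates $A$ from all of $V\setminus(A\cup P)$ in one shot; setting $Q:=P\cup(H\cap A)$, still of size $O(\tw(G))$, one gets $d(x,y)=\min_{q\in Q}\bigl(d(x,q)+d(q,y)\bigr)$ simultaneously for \emph{every} $x\in A$ and \emph{every} $y\in H$ (points of $H$ inside $A$ are absorbed into $Q$ rather than treated as a separate overhead, and your ``$O(1)$ such $y$'' should in any case be $O(\tw(G))$). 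A single arrangement in $\R^{\card{Q}}$ then bounds the number of weak orders over $x\in A$ by $\card{H}^{O(\tw(G))}$, and summing over the $\poly(\card{H})$ pieces gives the theorem with no cross-level composition at all. Without an analogue of this bounded-boundary property, your outline stops short of the stated $O(\tw(G))$ bound.
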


\begin{remark}
  \label{remark:fl11}
  Our implementation of the framework of \cite{feldman2011unified}
  differs in several respects.
  First, their shattering dimension is defined with respect to 
  \emph{hyperbolic} balls instead of usual metric balls
  (as the underlying set system).
  Second, the choice of $\sigma_x$ and the sampling bound are different. 
  While they achieve an improved coreset size (linear in $k$), 
  their analysis relies on Euclidean-specific properties
  and does not apply in graph metrics. 
\end{remark}

\paragraph{Putting It Together} We are now in position to conclude our main result.

\begin{proof}[Proof of Theorem~\ref{thm:coreset}]
Construct a coreset by the importance sampling procedure,
where the importance $\sigma_x$ is computed using Lemma~\ref{lemma:thorup}.
Then we can apply Lemma~\ref{lemma:vx12} with $\rho=O(1)$
to bound the total importance $\sigma_X=O(k)$. 
Combining this and the shattering dimension from Theorem~\ref{thm:treewidth}, 
we can apply Lemma~\ref{lemma:fl11} with coreset size 
\begin{align*}
  N = \tilde{O}\Big(\frac{k^2}{\epsilon^2} \big(k \cdot \tw(G) + \log{\frac{1}{\delta}} \big) \Big).
\end{align*}
The running time is dominated by computing the importance $\sigma_x$ for all $x\in X$,
which we claim can be computed in time $\tilde{O}(|E|)$
by using Lemmas~\ref{lemma:vx12} and~\ref{lemma:thorup}.
Indeed, first compute $C^\star$ in time $\tilde{O}(|E|)$
using Lemma~\ref{lemma:thorup},
then compute the clustering of $X$ with respect to $C^\star$
and the associated distances $\{ d(x, C^\star) : x \in X \}$
using a \emph{single} Dijkstra execution in time $\tilde{O}(|E|)$ time
(see Observation 1 of~\cite{Thorup05}).
Finally, use this information to compute $\sigma_x$ for all $x\in X$,
and sample according to it, in total time $\tilde{O}(|X|)$.
\end{proof}

\subsection{Bounding the Shattering Dimension}
\label{sec:proof_treewidth}
We give a technical overview before presenting the detailed proof
of Theorem~\ref{thm:treewidth}. 
Recall that the unit-weight case of shattering dimension was already proved in~\cite{bousquet2015vc},
and our focus is when $v : V \to \mathbb{R}_+$ is a general weight function.

The proof starts with
a slightly modified balanced-separator theorem for bounded treewidth graphs (Lemma~\ref{lemma:treewidth_key}), through which the problem of bounding the shattering dimension is reduced to bounding the complexity of \emph{bag-crossing} shortest paths for every bag.
\ifeightpage
	\begin{lemma}[Structural Lemma]
		\label{lemma:treewidth_key}
		Given graph $G(V, E)$, and $H \subseteq V$,
		there exists a collection $\calS \subseteq 2^V$ of subsets of $V$, such that the following holds.
		\begin{enumerate}
			\item $\bigcup_{A \in \calS}{A} = V$.
			\item $|\calS| \leq \mathrm{poly}(|H|)$.
			\item For each $A \in \calS$, either $|A| \leq O(\tw(G))$, or i) $|A \cap H| \leq O(\tw(G))$ and ii) there exists $P \subseteq V$ with $|P| \leq O(\tw(G))$ such that there is no edge in $E$ between $A$ and $V\setminus (A \cup P$).
		\end{enumerate}
	\end{lemma}
\else \fi A well-known fact is that every bag $\{s_i, \ldots, s_m\} \subseteq V$
in the tree decomposition is a vertex cut of size $m = O(\tw(G))$,
and this leads to an important observation:
if $x$ and $y$ belong to different components after removing this bag,
then every path connecting $x$ with $y$ \emph{crosses} the bag,
and hence
\begin{equation*}
	d(x, y) = \minn{ d(x, s_i) + d(s_i, y) : i\in[m]}.
\end{equation*}

Now suppose we fix $x \in V$ and let $y$ vary over $V$;
then we can write $d(x, \cdot)$ as
a min-linear function (which means the minimum of $m$ linear functions)
$f_x: \R^m\to\R_+$, whose variables are $z_i = d(s_i, y)$ for $i\in[m]$;
notice that the terms $d(x, s_i)$ are constant with respect to $y$.

This alternative view of distances enables us to bound the complexity of shortest-paths, because the functions $\set{f_x}_x$ all have common variables $\set{z_i=d(s_i, y)}_{i\in[m]}$ in real domain (instead of variables in $V$),
and more importantly,
the domain of these functions has low dimension $m=O(\tw(G))$.
Furthermore, the min-linear description also handles weights because $v(x) \cdot f_x$ is min-linear too.
Finally, in a technical lemma (Lemma~\ref{lemma:min_sum_ordering}), we relate the complexity of a collection of min-linear functions of low dimension to the \emph{arrangement number} of hyperplanes, which is a well-studied quantity in computational geometry.

\ifeightpage
	\begin{lemma}[Complexity of Min-Linear Functions]
		\label{lemma:min_sum_ordering}
		Suppose $f_1, \ldots, f_s$ are $s$ functions such that for every $i\in [s]$,
		\begin{itemize}
			\item $f_i : \mathbb{R}^l \to \mathbb{R}$, and
			\item $f_i(x) = \min_{j \in [l]}\{g_{ij}(x)\}$
			where each $g_{ij} : \mathbb{R}^l \to \mathbb{R}$ is a linear function.
		\end{itemize}
		For $x \in \mathbb{R}^l$, let $\sigma_x$ be the permutation of $[s]$
		such that $i\in [s]$ is ordered by $f_i(x)$ (in non-increasing order), and ties are broken consistently.
		Then $|\{ \sigma_x : x \in \mathbb{R}^l \}| \leq O(sl)^{O(l)}$.
	\end{lemma}
	Due to space limit, we omit the detailed proofs and they can be found in the full version.
\else \fi 

\ifeightpage
\else 

\begin{theorem}[Restatement of Theorem~\ref{thm:treewidth}]
	For every edge-weighted graph $G=(V, E)$ and every point weight function $v : V \to \mathbb{R}_{+}$, the shortest-path metric $M$ of $G$ satisfies $\SDim_v(M) \leq O(\tw(G))$.
\end{theorem}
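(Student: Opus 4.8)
The plan is to fix an arbitrary $H \subseteq V$ with $|H| \ge 2$ (and, after a routine reduction, assume $G$ is connected), and to bound the number of distinct intersections $H \cap B_v(x,r)$, over all $x \in V$ and $r \ge 0$, by $|H|^{O(\tw(G))}$; this is exactly the asserted shattering-dimension bound. First I would invoke the structural Lemma~\ref{lemma:treewidth_key} to obtain a family $\calS$ of at most $\poly(|H|)$ subsets of $V$ that together cover $V$, each of which is either \emph{small} (with $|A| \le O(\tw(G))$) or has a \emph{small boundary} $P$ (with $|P| \le O(\tw(G))$ and no edge between $A$ and $V \setminus (A \cup P)$) together with $|A \cap H| \le O(\tw(G))$. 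Because the sets in $\calS$ cover $V$, the total count is at most $\sum_{A \in \calS}$ of the count when the center $x$ is restricted to $A$, so it suffices to bound each such term by $|H|^{O(\tw(G))}$ and then absorb the $|\calS| \le \poly(|H|)$ factor.

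For a small $A$ this is immediate: there are $O(\tw(G))$ choices of $x$, and for each fixed $x$ the balls $B_v(x,r)$ are nested as $r$ grows, so $\{H \cap B_v(x,r) : r \ge 0\}$ is a chain of at most $|H|+1$ sets. The substantial case is a small-boundary $A$ with separator $P = \{p_1,\dots,p_m\}$, $m = O(\tw(G))$. I would partition $H$ into the three disjoint pieces $H \cap A$, $H \cap P$, and $H' := H \setminus (A \cup P)$; since $H \cap B_v(x,r)$ is the union of its restrictions to these pieces, the number of distinct intersections over $x \in A$, $r \ge 0$ is at most the product of the numbers of distinct restrictions. The pieces $H \cap A$ and $H \cap P$ each have size $O(\tw(G))$ and hence contribute at most $2^{O(\tw(G))}$ values apiece, independently of $|H|$, and everything then reduces to counting the distinct sets $H' \cap B_v(x,r)$.

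This is the crux. Because $P$ separates $A$ from $V \setminus (A \cup P)$, for every $x \in A$ and $y \in H'$ any $x$--$y$ path crosses $P$, so $d(x,y) = \minn{ d(x,p_i) + d(p_i,y) : i \in [m] }$. Introducing real variables $z_i := d(x,p_i)$, the weighted distance is $v(y) \cdot d(x,y) = \minn{ v(y) z_i + v(y) d(p_i,y) : i \in [m] } =: f_y(z)$, a minimum of $m$ affine functions of $z \in \R^m$ (for a fixed $y$, the quantities $v(y)$ and $d(p_i,y)$ are constants). Hence $y \in B_v(x,r)$ iff $f_y(z) \le r$: for a fixed center $x$, i.e.\ a fixed $z$, the chain $\{H' \cap B_v(x,r) : r \ge 0\}$ has length $\le |H|+1$ and, as a collection of sets, depends only on the ordering of the values $\{f_y(z)\}_{y \in H'}$, while as $x$ ranges over $A$ the vector $z$ ranges over a subset of $\R^m$. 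Applying Lemma~\ref{lemma:min_sum_ordering} to the $\le |H|$ min-linear functions $\{f_y\}_{y \in H'}$ in dimension $m = O(\tw(G))$ bounds the number of distinct such orderings by $O(|H| \cdot \tw(G))^{O(\tw(G))}$, so the number of distinct sets $H' \cap B_v(x,r)$ over $x \in A$, $r \ge 0$ is at most $O(|H| \cdot \tw(G))^{O(\tw(G))} \cdot (|H|+1)$. Multiplying the three pieces, each summand over $\calS$ is at most $2^{O(\tw(G))} \cdot O(|H| \cdot \tw(G))^{O(\tw(G))} \cdot (|H|+1)$, which also dominates the small-$A$ bound.

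What remains is bookkeeping: summing over $A \in \calS$ contributes a $\poly(|H|)$ factor, and I must confirm that the outcome is genuinely $|H|^{O(\tw(G))}$. The one delicate point is the factor $\tw(G)^{O(\tw(G))}$ hidden inside $O(|H| \cdot \tw(G))^{O(\tw(G))}$, which need not be $\le |H|^{O(\tw(G))}$ when $\tw(G)$ greatly exceeds $|H|$; I would dispatch that regime separately via the trivial bound $2^{|H|}$, which for $|H| < \tw(G)$ is at most $2^{\tw(G)} \le |H|^{\tw(G)}$ because $|H| \ge 2$. So the main obstacle is conceptual rather than computational: recognizing that ``varying $y$ over $V$'' and ``varying $x$ over $A$'' can both be reparametrized by the $O(\tw(G))$-dimensional vector of distances to the separator $P$, so the uncontrolled combinatorics of the graph collapses onto the well-understood arrangement complexity of low-dimensional min-linear functions. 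This step of course rests entirely on the two auxiliary lemmas, whose own proofs --- the modified balanced-separator theorem (Lemma~\ref{lemma:treewidth_key}) and the min-linear ordering bound (Lemma~\ref{lemma:min_sum_ordering}) --- are the technically heaviest parts of the full argument.
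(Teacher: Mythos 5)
Your proposal is correct and follows essentially the same route as the paper: cover $V$ by the structural lemma, use the separator to rewrite $d(x,y)$ as a min of affine functions of the distances from $x$ to the separator, and invoke the min-linear arrangement lemma, with the ball intersections recovered as threshold prefixes of the induced orderings. The only (harmless) variations are that you handle $H\cap A$ and $H\cap P$ by a crude $2^{O(\tw(G))}$ product bound where the paper instead augments the separator to $Q = P\cup(H\cap A)$ so the min-formula covers all of $H$, and you treat the regime $|H|<\tw(G)$ separately, which is in fact slightly more careful bookkeeping than the paper's.
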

\begin{proof}
	Fix a point weight $v : V \to \mathbb{R}_+$. We bound the shattering dimension by verifying the definition (see Definition~\ref{def:sdim}).
Fix a subset of points $H \subseteq V$ with $|H| \geq 2$.
	By Definition~\ref{def:sdim}, we need to show
	\begin{align*}
	\left|\left\{ H \cap B_v(x, r) : x \in V, r \geq 0 \right\}\right| \leq |H|^{O(\tw(G))}.
	\end{align*}
	We interpret this as a counting problem, in which we count the number of distinct subsets $H \cap B_v(x, r)$ over two variables $x$ and $r$.
	To make the counting easier, our first step is to ``remove'' the variable $r$, so that we could deal with the center $x$ only.
	
	\paragraph{Relating to Permutations}
	For $x \in V$, let $\pi_x$ be the permutation of $H$
	such that points $y \in H$ are ordered by $d(x, y) \cdot v(y)$ (in non-increasing order) and ties are broken consistently.
	Since $H \cap B_v(x, r)$ corresponds to a prefix of $\pi_x$, and every $\pi_x$ has at most $|H|$ prefixes, we have
\begin{align*}
	\left| \left\{ H \cap B_v(x, r) : x \in V, r\geq 0 \right\} \right| \leq |H| \cdot \left| \{ \pi_x : x \in V \} \right|.
	\end{align*}
	Hence it suffices to show
	\begin{align}
	\left| \left\{ \pi_x : x\in V  \right\} \right| \leq |H|^{O(\tw(G))}, \label{eqn:perm_bound}.
	\end{align}

	Next, we divide the graph (not necessarily a partition) into $\mathrm{poly}(|H|)$ parts using the following structural lemma of bounded treewidth graphs, so that each part is ``simply structured''.
	We prove the following lemma in Section~\ref{sec:proof_treewidth_key}.

	\begin{lemma}[Structural Lemma]
		\label{lemma:treewidth_key}
		Given graph $G(V, E)$, and $H \subseteq V$,
		there exists a collection $\calS \subseteq 2^V$ of subsets of $V$, such that the following holds.
		\begin{enumerate}
			\item $\bigcup_{A \in \calS}{A} = V$.
			\item $|\calS| \leq \mathrm{poly}(|H|)$.
			\item For each $A \in \calS$, either $|A| \leq O(\tw(G))$, or i) $|A \cap H| \leq O(\tw(G))$ and ii) there exists $P \subseteq V$ with $|P| \leq O(\tw(G))$ such that there is no edge in $E$ between $A$ and $V\setminus (A \cup P$).
		\end{enumerate}
	\end{lemma}
Let $\calS$ be the collection of subsets asserted by Lemma~\ref{lemma:treewidth_key}.
Since $\bigcup_{A \in \calS}{A} = V$ and $|\calS| \leq \mathrm{poly}(|H|)$,
	it suffices to count the number of permutations for each part. Formally, it suffices to show that
	\begin{align}
          \forall A \in \calS,
          \qquad
	\left|\left\{ \pi_x : x \in A \right\}\right| \leq |H|^{O(\tw(G))}.
	\label{eqn:A}
	\end{align}

	\paragraph{Counting Permutations for Each $A \in \calS$}
	The easy case is when $|A| \leq O(\tw(G))$:
	\begin{align*}
	\left|\left\{ \pi_x : x \in A \right\}\right| \leq |A| \leq O(\tw(G)) \leq |H|^{O(\tw(G))}.
	\end{align*}
	Then we focus on proving Inequality~(\ref{eqn:A}) for the other case, where
	i) $|A\cap H| \leq O(\tw(G))$ and
	ii) there exists $P \subseteq V$ with $|P| \leq O(\tw(G))$
	such that there is no edge between $A$ and $V \setminus (A \cup P)$, by item 3 of Lemma~\ref{lemma:treewidth_key}.
	
	Now fix such an $A$.
Let $H_A := H \cap A$, and let $Q := P \cup H_A$.
	Write $Q = \{q_1, \ldots, q_m\}$.

	Since there is no edge between $A$ and $V \setminus (A \cup P)$,
	for $x \in A$ and $y \in H$,
	we know that
	\begin{equation*}
	d(x, y) = \min_{q_i \in Q}\{d(x, q_i) + d(q_i, y) \}.
	\end{equation*}
	
	\paragraph{Alternative Representation of \boldmath $d(x, y)$}
	We write $d(x, y)$ in an alternative way.
	If we fix $y \in H$ and vary $x$, then $d(x, y)$ may be represented as a min-linear function in variables $z_i := d(x, q_i)$.
	Specifically, for $y\in H$, define $f_y : \mathbb{R}^{m} \to \mathbb{R}_+$ as
	\begin{align*}
	f_y(z_1, \ldots, z_m) := \min_{i \in [m]}\{ z_i + d(y, q_i)\}.
	\end{align*}
	Note that $d(y, q_i)$ is constant in $f_y$.
	By definition, $f_y(d(x, q_1), \ldots, d(x, q_m)) = d(x, y)$.
	
	We also rewrite $\pi_x$ under this new representation of distances.
	For $a \in \mathbb{R}^{m}$, define $\tau_a$ as a permutation of $H$ that is ordered by $v(y) \cdot f_y(a)$, in the same rule as in $\pi_x$ (i.e. non-decreasing order and ties are broken consistently as in $\pi_x$).
	Then we have
	\begin{align*}
	\pi_x = \tau_{(d(x, q_1), \ldots, d(x, q_m))},
	\end{align*}
	which implies
	\begin{align*}
	\left| \{ \pi_x : x \in A \} \right| \leq \left| \{ \tau_a : a \in \mathbb{R}^m \} \right|.
	\end{align*}
	Thus, it remains to analyze $|\{\tau_a : a \in \mathbb{R}^m \}|$.
	We bound this quantity via the following technical lemma, which describes the complexity of a collection of min-linear functions with bounded dimension.
Its proof appears in Section~\ref{sec:proof_min_linear}.
\begin{lemma}[Complexity of Min-Linear Functions]
		\label{lemma:min_sum_ordering}
		Suppose $f_1, \ldots, f_s$ are $s$ functions such that for every $i\in [s]$,
		\begin{itemize}
			\item $f_i : \mathbb{R}^l \to \mathbb{R}$, and
			\item $f_i(x) = \min_{j \in [l]}\{g_{ij}(x)\}$
			where each $g_{ij} : \mathbb{R}^l \to \mathbb{R}$ is a linear function.
		\end{itemize}
		For $x \in \mathbb{R}^l$, let $\sigma_x$ be the permutation of $[s]$
		such that $i\in [s]$ is ordered by $f_i(x)$ (in non-increasing order), and ties are broken consistently.
		Then $|\{ \sigma_x : x \in \mathbb{R}^l \}| \leq O(sl)^{O(l)}$.
	\end{lemma}
	
	Applying Lemma~\ref{lemma:min_sum_ordering} with
	$s = |H|$, $l = m$ and the collection of min-linear functions $\{ v(y) \cdot f_y : y \in H\}$, we conclude that
\begin{align*}
	\left|\left\{ \tau_a : a \in \mathbb{R}^m \right\}\right| \leq |H|^{O(m)} \leq |H|^{O(\tw(G))}
	\end{align*}

	where the last inequality follows from $|H_A| \leq O(\tw(G))$ and $|P| \leq O(\tw(G))$ and $m=|Q|=|H_A\cup P|\leq |H_A|+|P|\leq O(\tw(G))$.
\end{proof}

\subsection{Proof of the Structural Lemma}
\label{sec:proof_treewidth_key}
\begin{lemma}[Restatement of Lemma~\ref{lemma:treewidth_key}]
	Given graph $G(V, E)$, and $H \subseteq V$,
	there exists a collection $\calS \subseteq 2^V$ of subsets of $V$, such that the following holds.
	\begin{enumerate}
		\item $\bigcup_{A \in \calS}{A} = V$.
		\item $|\calS| \leq \mathrm{poly}(|H|)$.
		\item For each $A \in \calS$, either $|A| \leq O(\tw(G))$, or i) $|A \cap H| \leq O(\tw(G))$ and ii) there exists $P \subseteq V$ with $|P| \leq O(\tw(G))$ such that there is no edge in $E$ between $A$ and $V\setminus (A \cup P)$.
	\end{enumerate}
\end{lemma}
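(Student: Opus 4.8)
The plan is to build the collection $\calS$ by a careful recursive decomposition of a nice tree decomposition $\calT$ of $G$, guided by the set $H$. Fix such a $\calT$ with all bags of size $O(\tw(G))$ and degree at most $3$. The guiding principle is the classical balanced-separator argument: in any tree of degree $\le 3$, one can find a single bag (node) whose removal splits $\calT$ into subtrees, each carrying at most a constant fraction (say $\frac{2}{3}$) of the ``weight,'' where we weight each bag by how many points of $H$ it newly contributes. I would recurse on this, and at each level the separating bag $B$ plays the role of $P$. The recursion stops along a branch once the portion of $H$ governed by that branch has dropped to $O(\tw(G))$.

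\textbf{Key steps, in order.} First, set up the charging scheme: assign each $h \in H$ to one bag containing it (possible by condition~2 of Definition~\ref{def:treewidth}, the subtree-connectivity property — pick, e.g., the bag closest to the root), so that $\sum_{\text{bags }B} |H \text{ assigned to } B| = |H|$. Second, prove the balanced-separator statement for degree-$3$ trees with these weights: there is a node $B^\star$ such that every component of $\calT \setminus B^\star$ has at most $\frac{2}{3}|H|$ assigned $H$-points (standard centroid-style argument, using degree $\le 3$ so at most three components arise). Third, recurse: for each component (a subtree $\calT'$ of $\calT$), the union $A$ of its bags forms a candidate region; its interface to the rest of $G$ is contained in $B^\star$ (by the tree-decomposition axioms, any edge of $G$ leaving $A$ must sit in a bag straddling the cut, hence in $B^\star$), so take $P := B^\star$, giving $|P| = O(\tw(G))$ and no edges between $A$ and $V \setminus (A \cup P)$. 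The recursion tree has depth $O(\log |H|)$ and branching $\le 3$, so it produces $\poly(|H|)$ regions; this gives conditions~1 and~2. Fourth, handle the two terminal cases for condition~3: a leaf region either has $|A \cap H| \le O(\tw(G))$ (recursion stopped because its $H$-mass became small) — giving case (i)+(ii) with its stored $P$ — or, when a subtree can no longer be usefully split, it is a single bag, so $|A| \le O(\tw(G))$ directly. One must also explicitly add, for every internal separating bag $B^\star$ encountered, the singleton-region $\{B^\star\}$ itself (size $O(\tw(G))$) so that $\bigcup_{A\in\calS} A = V$ — every vertex lies in some bag, and every bag is either inside a terminal region or is one of the separators.

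\textbf{Main obstacle.} The delicate point is getting the recursion to terminate with the right bound on $|A \cap H|$ \emph{simultaneously} with the bound on $|A|$ not mattering in the non-terminal branches: when we recurse into a component whose $H$-mass is already $\le O(\tw(G))$, we must stop and declare it a region of type (i), but that region $A$ may be enormous (many bags, large $|A|$), so we need to be sure its interface is still only $O(\tw(G))$ — which is exactly why $P$ must be the \emph{last} separating bag above it, not an accumulation of all ancestors. So the invariant to maintain down the recursion is: ``region $A$ under consideration has its entire boundary to $V\setminus A$ contained in the single most recent separator bag $B$.'' Verifying this invariant is preserved is the crux; it follows because cutting a subtree $\calT'$ off at node $B^\star$ makes $B^\star$ the unique bag adjacent to $\calT'$ in $\calT$, and tree-decomposition axiom~2 then forces every $G$-edge from $\bigcup_{B\in\calT'} B$ to the outside to pass through $B^\star$. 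A secondary nuisance is tie-breaking / double-counting of $H$ across bags, cleanly resolved by the fixed assignment in the first step. I expect the rest — bookkeeping the $\poly(|H|)$ count and the $O(\tw(G))$ size bounds — to be routine.
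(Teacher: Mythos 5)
Your first stage coincides with the paper's construction (Algorithm~\ref{alg:decomp}): recursively apply a balanced-separator bag to a nice tree decomposition, weighting bags by the $H$-points they carry, and stop when the $H$-mass of a branch drops to $O(\tw(G))$. The gap is in the invariant you call the crux: that each region's interface to the rest of $G$ lies in the \emph{single most recent} separator bag. This holds only at the top level of the recursion. When you split a subtree $\calT'$ at $B^\star$, a component of $\calT'\setminus B^\star$ may still contain the attachment nodes of separator bags removed at earlier levels, so its vertices can have $G$-edges into those ancestor bags as well; the new separator does not "shield" the component from them. Concretely, take a decomposition shaped like a spine with heavy pendant leaf bags: a pendant leaf holding more than a third of the remaining $H$-weight is a legitimate balanced separator (its removal leaves one component of weight at most $2/3$), so the surviving large component can accumulate one boundary bag per recursion level, i.e.\ up to $\Theta(\log|H|)$ boundary bags before its $H$-mass becomes small. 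Then no single bag can serve as $P$, and even setting $P$ to the union of all boundary bags gives only $|P|=O(\tw(G)\cdot\log|H|)$, which violates the $O(\tw(G))$ requirement of the lemma and would leak an extra $\log|H|$ into the exponent downstream, so Theorem~\ref{thm:treewidth} would no longer follow.

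The paper closes exactly this hole with a second refinement stage (Algorithm~\ref{alg:bound_reduct}): for every residual subtree with more than two boundary bags, it adds the branching (degree $\ge 3$) nodes of the minimal subtree spanning those boundary bags; afterwards every residual subtree has at most two boundary bags (item~\ref{itm:num_boundary} of Lemma~\ref{lemma:decomposition}), so $P$ is the union of at most two bags and has size $O(\tw(G))$, while the number of added bags is $O(|\calB_1|)$, preserving the $\poly(|H|)$ count. There is also a smaller definitional issue you would need to fix even at the first level: you take $A$ to be the full union of the bags of the subtree, so $A$ contains separator vertices, and such a vertex can have edges into other components, violating "no edge between $A$ and $V\setminus(A\cup P)$" as stated; the paper instead defines the region as $V(\calT^\star)$ minus the union of its boundary bags and adds the bags themselves as the small ($|A|\le O(\tw(G))$) regions, which is what makes both the covering condition and the separation condition go through.
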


\begin{proof}
	Let $\calT$ be a nice tree decomposition of $G(V, E)$ with maximum bag size $O(\tw(G))$ (see Section~\ref{sec:preliminaries}).
	For a subtree $\calT^\star$ of $\calT$, let $V(\calT^\star)$ be the union of points in all bags of $\calT^\star$.
	For a subset of bags $\calB$ of $\calT$,
	\begin{itemize}
		\item Define $\calT_{\setminus \calB}$ as the set of subtrees of $\calT$ resulted by removing all bags in $\calB$ from $\calT$;
		\item for $\calT^\star \in \calT_{\setminus\calB}$, define $\partial_{\calB}(\calT^\star) \subseteq \calB$ as the subset of bags in $\calB$ via which $\calT^\star$ connects to bags outside of $\calT^\star$;
		\item for $\calT^\star \in \calT_{\setminus\calB}$, define $V_{\setminus \calB}(\calT^\star) := V(\calT^\star) \setminus \bigcup_{S \in \partial_\calB(\calT^\star)}{S}$.
	\end{itemize}
	
	Given a nice tree decomposition $\calT$, we have the following theorem for constructing balanced separators, which will be useful for defining $\mathcal{S}$.

	\begin{theorem}[Balanced Separator of A Tree Decomposition~\cite{robertson1986graph}]
		\label{lemma:seprator}
		Suppose $\calT^\star$ is a subtree of $\calT$ and $w : V \to \{0, 1\}$ is a point weight function.
There exists a bag $S$ in $\calT^\star$, such that any subtree $\calT' \in \calT^\star_{\setminus S}$ satisfies $w(V_{\setminus S}(\calT')) \leq \frac{2}{3} w(V(\calT^\star))$.
	\end{theorem}
	The first step is to construct a subset of bags that satisfy the following nice structural properties.

	\begin{lemma}
		\label{lemma:decomposition}
		There exists a subset of bags $\calB = \calB_H$ of $\calT$, such that the following holds.
		\begin{enumerate}
			\item \label{itm:num_bag}
			$|\calB| = \mathrm{poly}(|H|)$.
			\item \label{itm:bag_size}
			For every $S \in \calB$, $|S| = O(\tw(G))$.
			\item \label{itm:num_boundary}
			For every $\calT^\star \in \calT_{\setminus\calB}$,
			$|\partial_\calB(\calT^\star)| \leq 2$.
			\item \label{itm:internal}
			For every $\calT^\star \in \calT_{\setminus \calB}$,
			$|V_{\setminus \calB}(\calT^\star) \cap H| = O(\tw(G))$.
		\end{enumerate}
	\end{lemma}
	\begin{proof}
		The proof strategy is to start with a set of bags $\calB_1$ such that items~\ref{itm:num_bag},~\ref{itm:bag_size} and~\ref{itm:internal} hold.
		Then for each $\calT^\star \in \calT_{\setminus\calB}$,
		we further ``divide'' it by a few more bags,
		and the newly added bags $\calB_2$ combined with $\calB_1$
		would satisfy all items.
		
		To construct $\calB_1$, we apply Theorem~\ref{lemma:seprator} which constructs balanced separators.
The first step of our argument is no different from constructing a balanced separator decomposition, expect that we need explicitly that each separator is a bag of $\calT$.
		We describe our balanced separator decomposition in Algorithm~\ref{alg:decomp} which makes use of Theorem~\ref{lemma:seprator}.
		
		\begin{algorithm}[t]
			\caption{$\Decomp(\calT^\star, w)$}
		      \label{alg:decomp}
        \begin{algorithmic}[1]
			\INPUT subtree $\calT^\star$ of $\calT$, point weight $w : V \to \{0, 1\}$
		    \OUTPUT set of bags $\calB$ 
			\IF{$w(V(\calT^\star)) \leq O(\tw(G))$}
				\STATE return $\calB \gets  \emptyset$ \label{line:terminate}\;
				\\
				\slash\slash \ stop the decomposition if the subtree has small enough weight
			\ELSE
		      \STATE apply Theorem~\ref{lemma:seprator} on $(\calT^\star, w)$, let $S$ be the asserted bag, and write $\calT^\star_{\setminus S} \gets \{\calT^\star_1, \ldots, \calT^\star_l\}$ \;
		\STATE		define point weight $w' : V \to \{0, 1\}$, such that $w'(u)  = 0$ if $u \in S$ and $w'(u) = w(u)$ otherwise\label{line:zero_out} \;
		\\
				\slash\slash \  zero out weights in $S$, so $w'(V(\calT^\star_i)) = w(V_{\setminus S}(\calT^\star_i))$ 
		\STATE		for $i \in [l]$, let $\calB_i' \gets \Decomp(\calT^\star_i, w')$ \;
			\\
			\slash\slash \  recursively decompose the subtrees using updated weight $w'$ (where weights of $S$ are removed) 
		\STATE		return $\calB \gets (\bigcup_{i \in [l]}{\calB_i'}) \cup S$ \;
		     \ENDIF
        \end{algorithmic}
		\end{algorithm}

Define $w: V \to \{0, 1\}$ as $w(u) = 1$ if $u \in H$ and $w(u) = 0$ otherwise.
		Call Algorithm~\ref{alg:decomp} with $(\calT, w)$, and denote the resulted bags as $\calB_1$, i.e. $\calB_1 := \Decomp(\calT, w)$.
		
		\paragraph{Analyzing $\calB_1$}
		We show $\calB_1$ satisfies Items~\ref{itm:num_bag},~\ref{itm:bag_size} and~\ref{itm:internal}.
		\begin{itemize}
			\item Item~\ref{itm:bag_size} is immediate since $\calB_1$ is a set of bags, and the width of the tree decomposition is $O(\tw(G))$.
			\item Since $\calT$ is a nice tree decomposition, each node of it has degree at most $3$. so each recursive invocation of Algorithm~\ref{alg:decomp}
			creates at most $3$ new subtrees,
			and each subtree has its weight decreased by $\frac{1}{3}$ (by Theorem~\ref{lemma:seprator}).
			Moreover, the initial weight is $|H|$, and the recursive calls terminate when the weight is $O(\tw(G))$ (see Line~\ref{line:terminate}),
			we conclude $|\calB_1| = O(3^{\log_{\frac{3}{2}}(|H|)})=O(|H|^{2.71})=\mathrm{poly}(|H|)$, which is item~\ref{itm:num_bag}.
			\item Because of the observation in the comment of Line~\ref{line:zero_out},
			$w(V(T^\star))$ in Line~\ref{line:terminate} is exactly $V_{\setminus \calB_1}(\calT^\star) \cap H$,
			which implies item~\ref{itm:internal}.
		\end{itemize}
		
		We further modify $\calB_1$ so that item~\ref{itm:num_boundary} is satisfied.
		The modification procedure is listed in Algorithm~\ref{alg:bound_reduct}.
		Roughly speaking, we check each subtree $\calT^\star \in \calT_{\setminus \calB_1}$, and if it violates item~\ref{itm:num_boundary},
		we add more bags \emph{inside} $\calT^\star$, i.e. $\calB^\star$ in line~\ref{line:b_star}, so that $|\partial_{\calB_1 \cup \calB^\star}(\calT^\star)| \leq 2$.
		This modification may be viewed as a refinement for the decomposition defined in Algorithm~\ref{alg:decomp}.

\begin{algorithm}[t]
			\caption{$\BoundReduct(\calT, \calB_1)$}
			\label{alg:bound_reduct}
        \begin{algorithmic}[1]
			\INPUT tree decomposition $\calT$, subset of bags $\calB_1$ 
		    \OUTPUT set of bags $\calB_2$ 
			\STATE initialize $\calB_2 \gets \emptyset$ \;
			\FOR{$\calT^\star \in \calT_{\setminus \calB_1}$}
			
				\IF {$|\partial_{\calB_1}(\calT^\star)| > 2$}
				
					\STATE let $\widehat{\calT^\star}$ be the subtree of $\calT$
					formed by including bags in $\partial_{\calB_1}(\calT^\star)$ and their connecting edges to $\calT^\star$ \;
				\STATE let $\widetilde{\calT^\star}$ be the minimal subtree of $\widehat{\calT^\star}$ that contains all bags in $\partial_{\calB_1}(\calT^\star)$ \;
					\STATE let $\calB^\star$ be the set of bags in $\widetilde{\calT^\star}$ with degree at least $3$\label{line:b_star} \;
					\STATE update $\calB_2 \gets \calB_2 \cup \calB^\star$ \;
				\ENDIF
			\ENDFOR
			\STATE return $\calB_2$ \;
            \end{algorithmic}
\end{algorithm}
		
		Call Algorithm~\ref{alg:bound_reduct} with $(\calT, \calB_1)$,
		and let $\calB_2 := \BoundReduct(\calT, \calB_1)$. We formally analyze $\calB := \calB_1 \cup \calB_2$ as follows.
		
		\paragraph{Analyzing $\calB := \calB_1\cup \calB_2$}
		Item~\ref{itm:bag_size} follows immediately since both $\calB_1$ and $\calB_2$ are sets of bags.
		Now consider an iteration of Algorithm~\ref{alg:bound_reduct} on $\calT^\star \in \calT_{\setminus \calB_1}$.
		By the definition of $\calB^\star$,
		we know that $\widetilde{\calT^\star}_{\setminus (\calB_1 \cup \calB^\star)}$ contains paths only (each having two boundary bags).
		Hence each subtree $\calT' \in \widehat{T^\star}_{\setminus (\calB_1 \cup \calB^\star)}$
		satisfies $|\partial_{\calB_1 \cup \calB^\star}(\calT')| \leq 2$.
		Since Algorithm~\ref{alg:bound_reduct} runs in a tree-by-tree basis,
		we conclude item~\ref{itm:num_boundary}.
		
		Still consider one iteration of Algorithm~\ref{alg:bound_reduct}.
		By using item~\ref{itm:tree_decomp_connect} of the definition of the tree decomposition,
		we have that for every $\calT' \in \widehat{\calT^\star}_{\setminus(\calB_1 \cup \calB^\star)}$,
		$V_{\setminus (\calB_1 \cup \calB^\star)}(\calT') \subseteq
		V_{\setminus \calB_1}(\calT^\star)$.
		Combining this with the fact that $\calB_1$ satisfies item~\ref{itm:internal} (as shown above), we conclude that $\calB$ also satisfies item~\ref{itm:internal}.
		Finally, by the fact that the number of nodes of degree at least $3$
		is at most the number of leaves,
		we conclude that $|\calB^\star| \leq |\partial_{\calB_1}(\calT^\star)|$.
		Then
		\begin{align*}
		|\calB_2|
		\leq \sum_{\calT^\star \in \calT_{\setminus \calB_1}}{ |\calB^\star| }
		\leq \sum_{\calT^\star \in \calT_{\setminus \calB_1}}{ |\partial_{\calB_1}(\calT^\star)| }
		\leq O(1) \cdot |\calB_1|,
		\end{align*}
		where the last inequality is by the degree constraint of the nice tree decomposition.
		Therefore, $|\calB| \leq |\calB_1| + |\calB_2| \leq \mathrm{poly}(|H|)$,
		which concludes item~\ref{itm:num_bag}.
		This finishes the proof of Lemma~\ref{lemma:decomposition}.
	\end{proof}
	
	Suppose $\calB$ is the set asserted by Lemma~\ref{lemma:decomposition}.
	Let $\calS := \calB \cup \{ V_{\setminus\calB}(\calT^\star) : \calT^\star \in \calT_{\setminus \calB} \}$.
	It is immediate that $\bigcup_{A \in \calS}{A} = V$.
	By item~\ref{itm:num_bag} of Lemma~\ref{lemma:decomposition}
	and the degree constraint of the nice tree decomposition $\calT$,
	$| \calT_{\setminus \calB} | = \mathrm{poly}(|H|)$.
	Hence, $|\calS| \leq \mathrm{poly}(|H|)$.
	
	By item~\ref{itm:bag_size} of Lemma~\ref{lemma:decomposition}, we know for every $A \in \calB$, $|A| \leq O(\tw(G))$.
	Now consider $\calT^\star \in \calT_{\setminus \calB}$,
	and let $A := V_{\setminus \calB}(\calT^\star) \in \calS$.
	By item~\ref{itm:internal} of Lemma~\ref{lemma:decomposition}, $|A \cap H| \leq O(\tw(G))$.
	Therefore, we only need to show there exists $P \subseteq V$ such that
	$|P| \leq O(\tw(G))$ and there is no edge between $A$ and $V\setminus (A \cup P)$.
We have the following fact for a tree decomposition.
\begin{fact}[A Bag is A Vertex Cut]
		\label{fact:node_cut}
		Suppose $\calT^\star$ is a subtree of the tree decomposition $\calT$,
		and $S$ is a bag in $\calT^\star$.
Then
		\begin{itemize}
			\item For $1 \leq i < j \leq l$,
			$V_{\setminus S}(\calT^\star_i) \cap V_{\setminus S}(\calT^\star_j) = \emptyset$.
			\item There is no edge between $V_{\setminus S}(\calT^\star_i)$
			and $V_{\setminus S}(\calT^\star_j)$ for $1 \leq i < j \leq l$.
			In other words, $S$ is a vertex cut for $V_{\setminus S}(\calT^\star_i)$ for all $i \in [l]$.
		\end{itemize}
	\end{fact}
	
	Define $P := \bigcup_{S \in \partial_{\calB}(\calT^\star)}{S}$.
	By Fact~\ref{fact:node_cut} and item~\ref{itm:num_boundary} of Lemma~\ref{lemma:decomposition}, we know that $|P| \leq O(\tw(G))$, and there is no edge between $A$ and $V \setminus (A \cup P$).
	This finishes the proof of Lemma~\ref{lemma:treewidth_key}.
\end{proof}

\subsection{Complexity of Min-linear Functions}
\label{sec:proof_min_linear}
\begin{lemma}[Restatement of Lemma~\ref{lemma:min_sum_ordering}]
	Suppose we have $s$ functions $f_1, \ldots, f_s$ such that for every $i\in [s]$,
	\begin{itemize}
		\item $f_i : \mathbb{R}^l \to \mathbb{R}$, and
		\item $f_i(x) = \min_{j \in [l]}\{g_{ij}(x)\}$
		where $g_{ij} : \mathbb{R}^l \to \mathbb{R}$ are linear functions.
	\end{itemize}
	For $x \in \mathbb{R}^l$, let $\sigma_x$ be the permutation of $[s]$
	such that $i\in [s]$ is ordered by $f_i(x)$ (in non-increasing order), and ties are broken consistently.
	Then $|\{ \sigma_x : x \in \mathbb{R}^l \}| \leq O(sl)^{O(l)}$.
\end{lemma}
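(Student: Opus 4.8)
The plan is to reduce the count $|\{\sigma_x : x \in \mathbb{R}^l\}|$ to the number of faces of a suitable hyperplane arrangement in $\mathbb{R}^l$. For every unordered pair of index pairs $(i,j),(i',j') \in [s]\times[l]$ for which the affine function $g_{ij} - g_{i'j'}$ is not identically zero, let $h_{(i,j),(i',j')} := \{x \in \mathbb{R}^l : g_{ij}(x) = g_{i'j'}(x)\}$, which is a hyperplane; the pairs with $g_{ij} \equiv g_{i'j'}$ are simply discarded (keeping them would only help). Let $\calA$ be the arrangement formed by these at most $\binom{sl}{2} \le (sl)^2$ hyperplanes. I will show that $\sigma_x$ is constant on the relative interior of each face of $\calA$, and then invoke the classical bound that an arrangement of $m$ hyperplanes in $\mathbb{R}^l$ has $O(m^l)$ faces in total.

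For the key claim, fix a face $F$ of $\calA$ and a point $x$ in its relative interior. For each $i \in [s]$, because $\calA$ contains every internal hyperplane $h_{(i,j),(i,j')}$, the relative order of the values $g_{i1}(x),\dots,g_{il}(x)$ — including which of them coincide — is the same for all points in the relative interior of $F$ (coincidences correspond exactly to the hyperplanes $h_{(i,j),(i,j')}$ containing $F$, and strict inequalities to the side of $F$). Hence the set of indices attaining $\min_j g_{ij}(x)$ is the same throughout the relative interior of $F$; fixing one such index $j^\ast(i)$, all minimizing $g_{ij}$ agree on $F$, so $f_i$ coincides on $F$ with the single affine function $g_{i,j^\ast(i)}$. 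Consequently, for every pair $i \ne i'$ we have $f_i - f_{i'} = g_{i,j^\ast(i)} - g_{i',j^\ast(i')}$ on $F$, which is either identically zero or equal to one of the hyperplanes of $\calA$; in either case the sign of $f_i(x) - f_{i'}(x)$ is the same for all $x$ in the relative interior of $F$. Therefore the ordering of $f_1(x),\dots,f_s(x)$, together with the fixed consistent tie-breaking rule applied to the equal ones, produces one and the same permutation $\sigma_x$ for all such $x$.

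Putting the pieces together, $|\{\sigma_x : x \in \mathbb{R}^l\}|$ is at most the number of faces of $\calA$, which is $O(m^l)$ with $m \le (sl)^2$, i.e.\ $O((sl)^{2l}) = O(sl)^{O(l)}$, proving the lemma. (For $m < l$ the number of faces is bounded by $2^{O(m)}$, which is absorbed into the same bound, so no separate case is really needed.)

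The main obstacle I anticipate is exactly that each $f_i$ is only piecewise linear, so the pairwise comparisons ``$f_i(x) \gtrless f_{i'}(x)$'' are not governed by a single hyperplane; the remedy is precisely to enlarge the arrangement with the internal comparison hyperplanes $g_{ij} = g_{ij'}$, which linearizes every $f_i$ on each face and lets the arrangement-complexity bound apply. The remaining points — that identically-equal pairs may be dropped, and that ties are harmless because they occur only along lower-dimensional faces of $\calA$ where the consistent tie-breaking rule still pins down a single permutation — are routine.
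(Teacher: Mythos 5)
Your proof is correct and takes essentially the same approach as the paper: both reduce the number of permutations to the complexity of a hyperplane arrangement generated by the pairwise comparisons $g_{ij}=g_{i'j'}$ and invoke the $O(m)^l$ bound on arrangements of $m$ hyperplanes in $\mathbb{R}^l$. The only difference is organizational --- the paper argues in two stages (first linear regions from the within-$i$ comparisons, then invariant regions inside each), whereas you fold both into a single arrangement of at most $(sl)^2$ hyperplanes and argue constancy of $\sigma_x$ on relative interiors of faces, which handles ties a bit more cleanly but yields the same $O(sl)^{O(l)}$ bound.
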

\begin{proof}
	The proof strategy is to relate the number of permutations to the arrangement number of hyperplanes.
The main tool that we use is the upper bound of the number of arrangements of hyperplanes.
Specifically, as stated in Theorem 2.2 of~\cite{sack1999handbook}, $p$ hyperplanes of dimension $d$ can partition $\mathbb{R}^d$ into $O(p)^d$ regions.
At a high level, we start with ``removing'' the $\min$ in $f_i$'s,
	by partitioning $\mathbb{R}^l$ into \emph{linear} regions
	in which $f_i(x)$'s are simply linear functions.
We bound the number of linear regions using the arrangement bound.
	Since $f_i(x)$'s are linear functions in each linear region,
	we may interpret them as $l$-dimensional hyperplanes.
Then, we bound the number of $\sigma_x$'s that are formed by $s$ hyperplanes of dimension $l$ using the arrangement bound again.
	The lemma is thus concluded by combining the two parts.
	We implement the two steps as follows.
	
	We call $R \subseteq \mathbb{R}^l$ a linear region,
	if $R$ is a maximal region satisfying that
	for all $i \in [s]$, there exists $j_i \in [l]$ such that $f_i(x) = g_{ij_i}(x)$ holds for all $x \in R$.
Observe that for each $i \in [s]$ and $j \in [l]$,
	the set of $x \in \mathbb{R}^l$ such that $f_i(x) = g_{ij}(x)$
	may be represented by the intersection of at most $l$ halfspaces of dimension $l$.
	(For example, when $j = 1$, the set is determined by
	$g_{i1}(x) \leq g_{i2}(x)$ and $g_{i1}(x) \leq g_{i3}(x)$ and $\ldots$ and $g_{i1}(x) \leq g_{il}(x)$.)
	Hence, the boundaries of linear regions must be formed by those intersections.
	Therefore, the number of linear regions is upper bounded by $O(sl)^l$ using the arrangement number bound.
	
	Suppose $R \subseteq \mathbb{R}^l$ is a linear region.
	Then for any $i \in [s]$, $f_i(x)$ ($x \in R$)
	may be interpreted as a $l$-dimensional hyperplane $\calP_i$.
	Hence, any maximal subset $S \subseteq R$ such that $\forall x, y \in S$, $\sigma_x = \sigma_y$,
	is a (convex) region whose boundaries are formed by the intersection of (any two of) the hyperplanes $\calP_i$'s (noting that the intersection is of dimension at most $l$).
	We call such $S$'s invariant regions.
Apply the arrangement number bound again, we can upper bound the number of invariant regions in a linear region by $O(s)^{O(l)}$.

	Note that invariant regions subdivide linear regions and each invariant region introduces exactly one permutation $\sigma_x$.
Therefore, we can upper bound the distinct number of permutations by the total number of invariant regions, i.e.,
	$\left| \left\{ \sigma_x : x \in \mathbb{R}^l \right\} \right|
	\leq O(sl)^l \cdot O(s)^{O(l)} \leq O(sl)^{O(l)}$.
This concludes the lemma.
\end{proof}

\fi

 	\section{Experiments}
We implement our algorithm and evaluate its performance on real-world road networks.
Our implementation generally follows the importance sampling algorithm as in Section~\ref{sec:upper_bound}.
We observe that the running time is dominated by
computing an $O(1)$-approximation for \kMedian (used to assign importance $\sigma_x$),
for which we use Thorup's $\tilde{O}(|E|)$-time algorithm (Lemma~\ref{lemma:thorup}).
However, the straightforward implementation of Thorup's algorithm is very complicated and scales with $k \log^3{n}$ which is already near the size of our data set,
and thus we employ an optimized implementation based on it.

\paragraph{Optimized Implementation}

\begin{algorithm}[t]
    \caption{\textsc{IteratedThorupSampling}}
    \label{alg:iterthor}
    \begin{algorithmic}[1]
    \INPUT edge-weighted graph $G = (V, E)$,
    data set $X \subseteq V$, 
    number of centers $k$, parameters $n$ and $m$ that control the number of iterations
    \OUTPUT bicriteria solution $F \subseteq X$

\STATE let $X_0 \gets X$, and $\forall u \in X_0$ let $w_{X_0}(u) \gets 1$
    \FOR{$i = 1$ to $n$}
        \STATE expand $X_{i-1}$ into a multi-set $X'$, such that each $u \in X'$ has multiplicity $w_{X_{i-1}}(u)$
        \STATE let $F_i \gets \textsc{ThoSampleBest}(G, X', k, m)$
        \STATE let $X_i \gets F_i$ and $\forall u \in X_i$, let
        $$w_{X_i}(u) \gets |\{ v \in X : \operatorname{NN}_{F_i}(v) = u \}|$$
        \slash\slash \ $\operatorname{NN}_{F_{i}}(v)$ is the nearest point in $F_{i}$ from $v$; this step implements the projection of $X$ to $F_i$
    \ENDFOR
    \STATE return $F_n$
\end{algorithmic}
\end{algorithm}

\begin{algorithm}[t]
    \caption{\textsc{ThoSampleBest}}
    \label{alg:thormincost}
    \begin{algorithmic}[1]
    \INPUT
    edge-weighted graph $G = (V, E)$, data set $X \subseteq V$, number of centers $k$, number of iterations $m$ 
    \OUTPUT bicriteria solution $F \subseteq X$ 
    \FOR{$i = 1$ to $m$}
        \STATE let $F_i \gets \textsc{ThoSample}(G, X, k)$
        \\
        \slash \slash \ \textsc{ThoSample} is Algorithm D of~\cite{Thorup05}
    \ENDFOR
    \STATE return $F_i$ such that $i = \arg\min_{1 \leq j \leq m}{\cost(F_i, X)}$
    \end{algorithmic}
\end{algorithm}

Thorup's algorithm starts with an $O(|E|\log{|E|})$-time procedure to find a bicriteria solution $F$ (Algorithm D in~\cite{Thorup05}),
namely, $|F| = O(k \log^2{n})$ such that $\cost(F, X) = O(1) \cdot \text{OPT}$.
Then a modified Jain-Vazirani algorithm~\cite{DBLP:journals/jacm/JainV01} is applied on $F$ to produce the final $O(1)$-approximation in $\tilde{O}(|E|)$ time.
However, the modified Jain-Vazirani algorithm is complicated to implement,
and the hidden polylogarithmic factor in its running time is quite large.
Thus, we replace the Jain-Vazirani algorithm with a simple local search algorithm~\cite{localsearch} to
find an $O(1)$-approximation on $F$.
The performance of the local search relies heavily on $|F|$, but $|F| = O(k \log^2 n)$ is not much smaller than $n$ for our data set.
Therefore, we run the bicriteria approximation \emph{iteratively} to further reduce $|F|$. Specifically, after we obtain $F_i$,
we project $X$ to $F_i$ (i.e., map each $x\in X$ to its nearest point in $F_i$) to form $X_i$,
and run the bicriteria algorithm again on $X_i$ to form $F_{i+1}$.
We use a parameter to control the number of iterations, and we observe 
that $F$ reduces significantly in our data set with only a few iterations.

The procedure for finding $F$ iteratively is described in Algorithm~\ref{alg:iterthor},
which uses Algorithm~\ref{alg:thormincost} as a subroutine.
Algorithm~\ref{alg:thormincost} essentially corresponds to the
above-mentioned Thorup's bicriteria approximation algorithm \textsc{ThoSample} (Algorithm D in~\cite{Thorup05}),
except that we execute it multiple times ($m$ times in Algorithm~\ref{alg:iterthor}) to boost the success probability.
As can be seen in our experiments, the improved implementation scales very well on road networks and achieves high accuracy.

\paragraph{Experimental Setup}
Throughout the experiments the graph $G$ is a road network of New York State extracted from OpenStreetMap~\cite{OpenStreetMap}
and clipped by bounding box to enclose New York City (NYC).
This graph consists of 1 million vertices and 1.2 million edges whose weight are the distances calculated using the Haversine formula between the endpoints.
It is illustrated in Figure~\ref{fig:dataset}.
\begin{figure*}[t]
    \centering
    \begin{subfigure}{0.4\textwidth}
        \includegraphics[width=\textwidth]{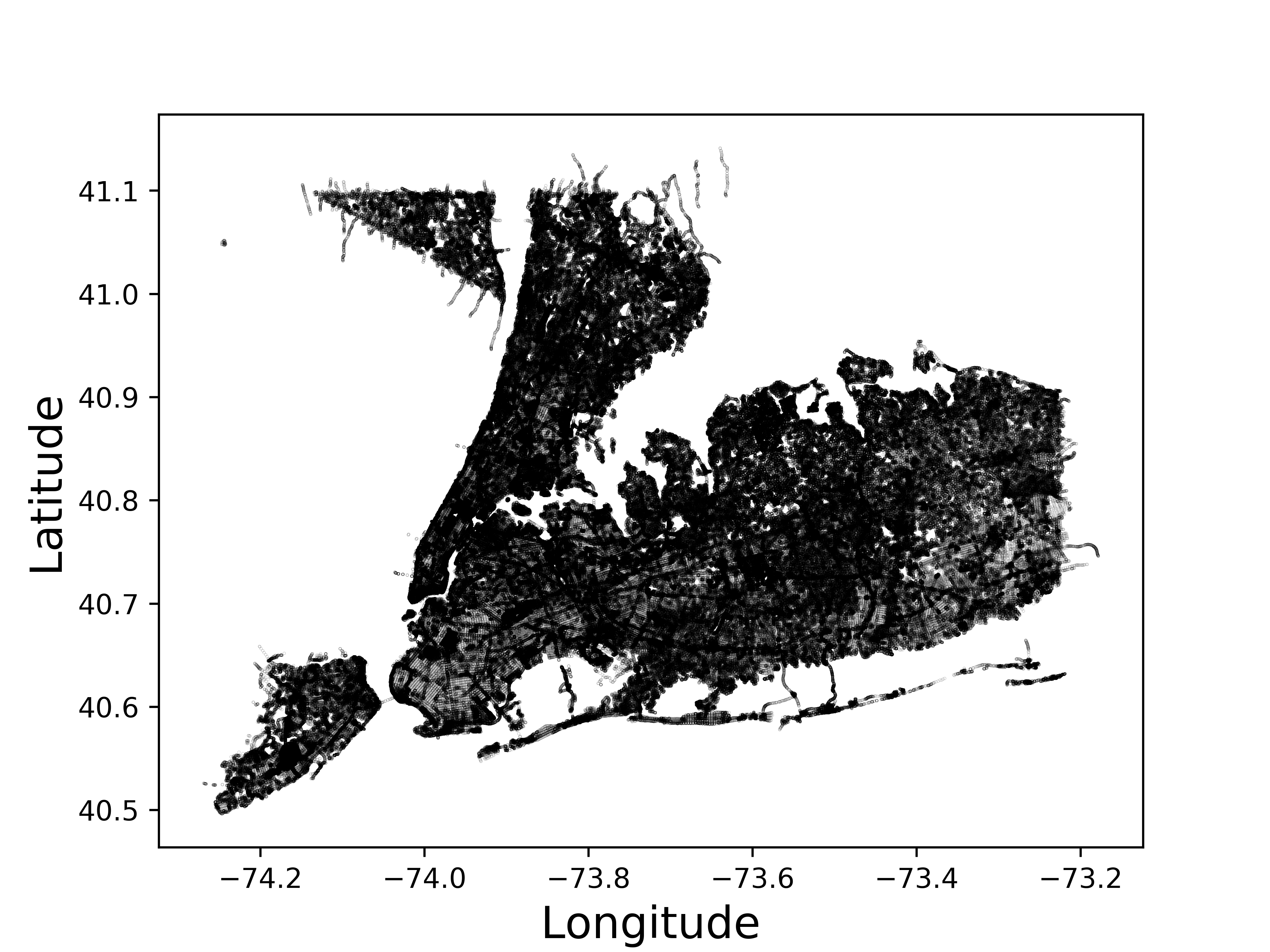}
      \end{subfigure}
      \qquad\qquad
    \begin{subfigure}{0.31\textwidth}
        \includegraphics[width=\textwidth]{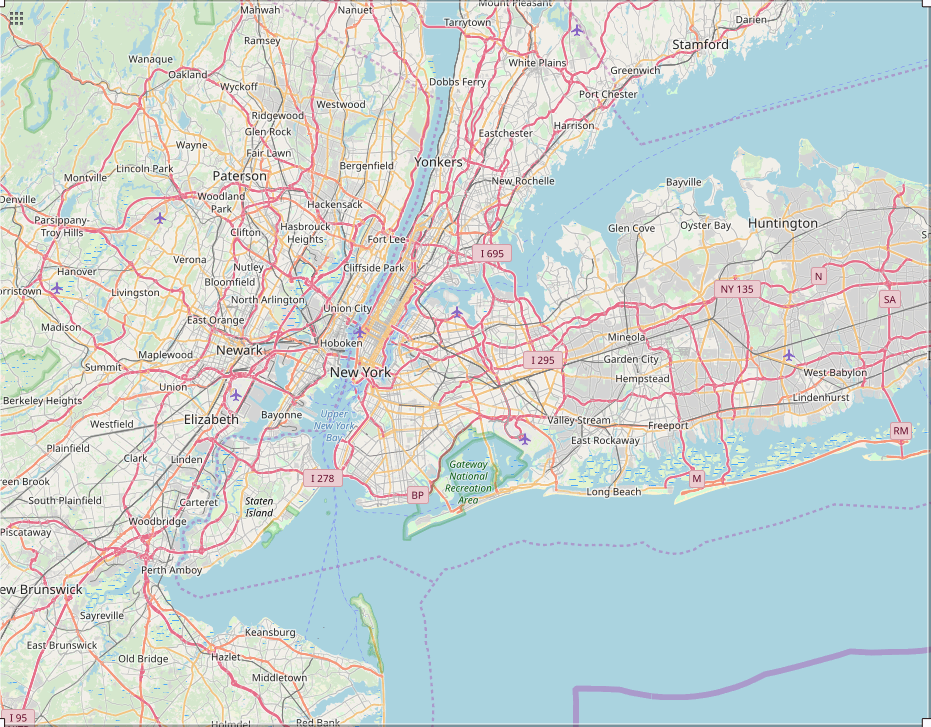}
    \end{subfigure}
    \caption{Illustration of our graph $G$,
      plotting (on left) the vertices according to their geographic coordinates,
      and showing (on right) a map, taken from OpenStreetMap, of the bounding box used to form $G$.
    }
    \label{fig:dataset}
\end{figure*}
Our software is open source and freely available, and implemented in C++17.
All experiments were performed on a Lenovo x3850 X6 system with 4 2.0 GHz Intel E7-4830 CPUs,
each with 14 processor cores with hyperthreading enabled. The system had 1 TB of RAM.

\subsection{Performance of Coresets}
\label{sec:exp_acc}
Our first experiments evaluate how the accuracy of our coresets depends on their size.
Here, the data $X$ may be interpreted as a set of customers to be clustered,
and their distribution could have interesting geographical patterns.
We experiment with $X$ chosen uniformly at random from $V$ (all of NYC),
mostly for completeness as it is less likely in practice,
and denote this scenario as $X_{\mathrm{uni}}$.
We also experiment with a ``concentrated'' scenario where $X$ is highly concentrated in Manhattan but also has much fewer points picked uniformly from other parts of NYC, denoted as $X_{\mathrm{man}}$.
We demonstrate the two types of data sets $X$ in Figure~\ref{fig:x}.

\begin{figure*}[t]
    \centering
    \begin{subfigure}{0.32\textwidth}
        \includegraphics[width=\textwidth]{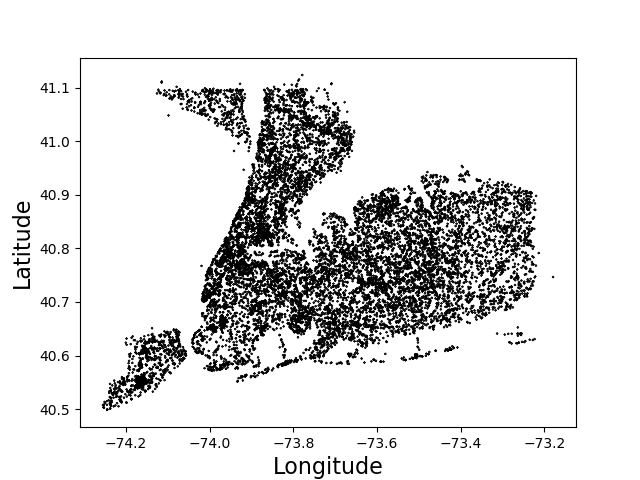}
    \end{subfigure}
    \begin{subfigure}{0.32\textwidth}
        \includegraphics[width=\textwidth]{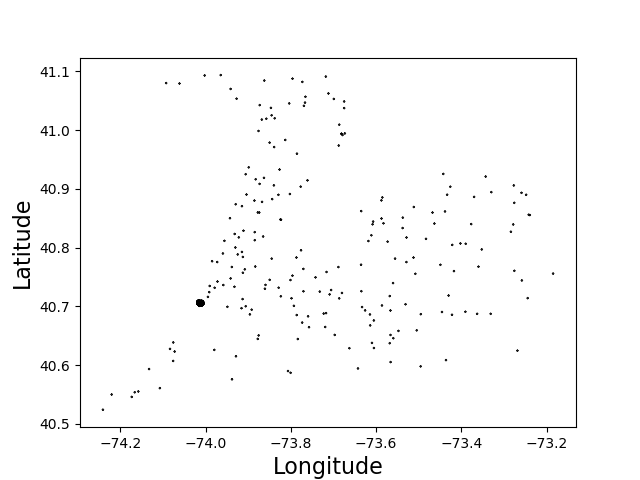}
    \end{subfigure}
    \begin{subfigure}{0.32\textwidth}
        \includegraphics[width=\textwidth]{fig/nyc_data.png}
    \end{subfigure}
    \caption{Illustration of data set $X$ used in the accuracy-vs-size experiment.
      The left plot is a uniform data $X_{\mathrm{uni}}$,
      the middle is $X_{\mathrm{man}}$ that is highly concentrated in Manhattan,
      where in both cases $|X| \approx 14000$,
      and the right plot is all of $V$ which is the full NYC.
    }
    \label{fig:x}
\end{figure*}

\begin{table}[t]
  \caption{Comparison of empirical error of our coreset
    with the baseline of uniform sampling
    when $k = 25$ and varying coreset sizes,
    for both data sets $X_{\mathrm{uni}}$ and $X_{\mathrm{man}}$.
}
\label{tab:acc}
\vskip 0.15in
    \begin{center}
        \begin{small}
            \begin{sc}
                \begin{tabular}{ccccc}
                    \toprule
                    \multirow{2}{*}{size} & \multicolumn{2}{c}{$X_{\mathrm{uni}}$} & \multicolumn{2}{c}{$X_{\mathrm{man}}$}  \\
                    & Ours & Uni. & Ours & Uni.  \\
                    \midrule
25 & 32.1\% & 35.8\% & 32.1\% & 151.6\% \\
                    50 & 26.6\% & 23.0\% & 22.1\% & 90.3\% \\
                    75 & 17.8\% & 23.2\% & 23.2\% & 62.3\% \\
                    100 & 17.2\% & 17.2\% & 15.2\% & 49.9\% \\
500 & 7.72\% & 8.53\% & 8.34\% & 31.7\% \\
1250 & 4.57\% & 5.32\% & 4.87\% & 21.2\% \\
                    2500 & 4.14\% & 4.03\% & 3.29\% & 9.53\% \\
                    3750 & 2.49\% & 3.21\% & 2.89\% & 14.39\% \\
                    6561 & 2.00\% & 2.11\% & 2.38\% & 5.83\% \\
                    13122 & 1.50\% & 1.70\% & 1.53\% & 6.53\% \\
                    19683 & 1.27\% & 1.36\% & 1.39\% & 3.73\% \\
                    \bottomrule
                \end{tabular}
            \end{sc}
        \end{small}
    \end{center}
\vskip -0.1in
\end{table}

We define the \emph{empirical error} of a coreset $D$
and a center set $C \subseteq V$ as
$\operatorname{err}(D, C) := \left| \frac{\cost(D, C)}{\cost(X, C)} - 1 \right|$ (corresponding to $\epsilon$ in the definition of a coreset).
Since by definition a coreset preserves the objective for all center sets,
we evaluate the empirical error by randomly picking $2000$ center sets $C \subseteq V$ from $V$,
and reporting the \emph{maximum} empirical error $\operatorname{err}(D, C)$ over all these $C$.
For the sake of evaluation,
we compare the maximum empirical error of our coreset
with a baseline of a uniform sample, where points are drawn
uniformly at random from $X$ and assigned equal weight (that sums to $|X|$).
To reduce the variance introduced by the randomness in the coreset construction,
we repeat each construction $10$ times
and report the average of their maximum empirical error.

\paragraph{Results}
We report the empirical error of our coresets
and that of the uniform sampling baseline in Table~\ref{tab:acc}.
Our coreset performs consistently well and quite similarly
on the two data sets $X$,
achieving for example $5\%$ error using only about $1000$ points.
Compared to the uniform sampling baseline,
our coreset is $3-5$ times more accurate on the Manhattan-concentrated data $X_{\mathrm{man}}$, and (as expected)
is comparable to the baseline on the uniform data $X_{\mathrm{uni}}$.

In addition, we show the accuracy of our coresets with respect to varying sizes of data sets $X$ in Figure~\ref{fig:acc_other} (left).
We find that coresets of the same size have similar accuracy regardless of $|X|$,
which confirms our theory that the size of the coreset is independent of $|X|$ in structured graphs.
We also verify in Figure~\ref{fig:acc_other} (right)
that a coreset constructed for a target value $k=25$
performs well also as a coreset for fewer centers (various $k' < k$).
While this should not be surprising and follows from the coreset definition,
it is very useful in practice when $k$ is not known in advance,
and a coreset (constructed for large enough $k$)
can be used to experiment and investigate different $k'<k$.

\begin{figure*}[t]
    \centering
    \begin{subfigure}[t]{0.4\textwidth}
        \includegraphics[width=\textwidth]{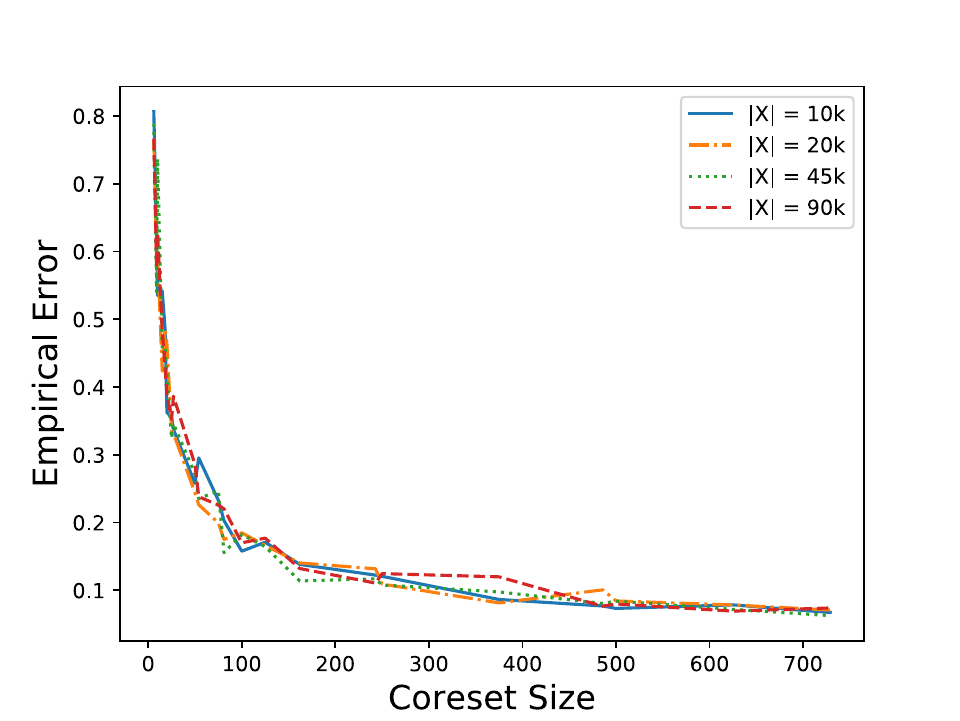}
    \end{subfigure}
    \quad
    \begin{subfigure}[t]{0.4\textwidth}
        \includegraphics[width=\textwidth]{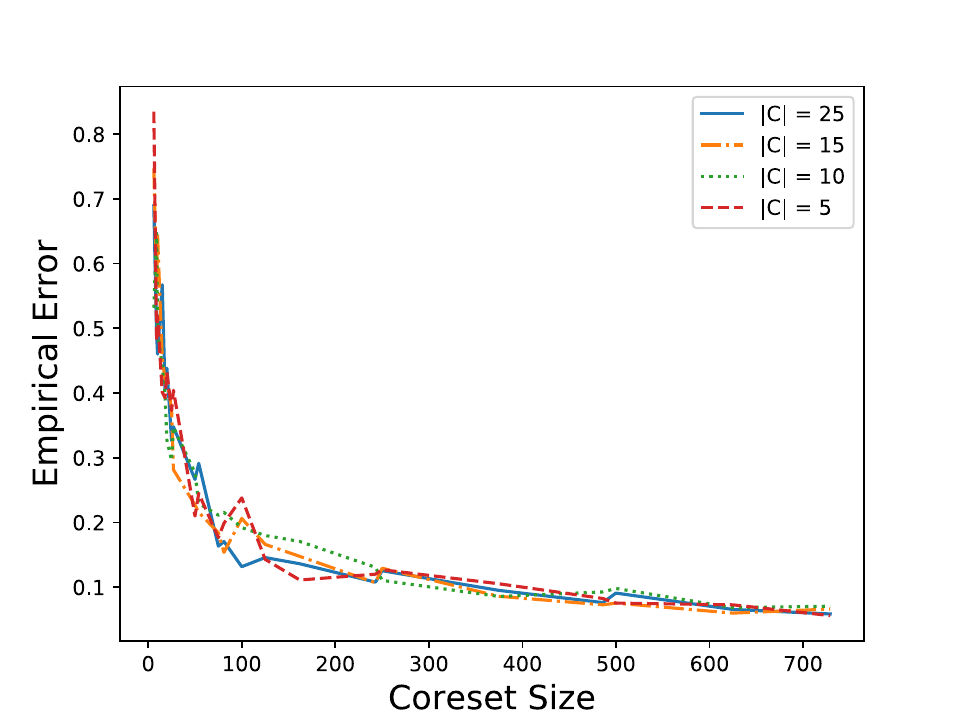}
    \end{subfigure}
    \caption{
The left plot shows the accuracy of coresets $(k = 25)$ on uniform $X$'s with varying sizes. Each line is labeled
    with the size of each respective $X \subseteq V$.
    The right plot shows the accuracy of coresets constructed with $k = 25$ but evaluated with smaller center sets $C$ on the same uniform $X$ with $|X| = 10^4$.
    }
    \label{fig:acc_other}
\end{figure*}

\begin{figure*}[t]
    \centering
    \begin{subfigure}{0.45\textwidth}
        \includegraphics[width=\textwidth]{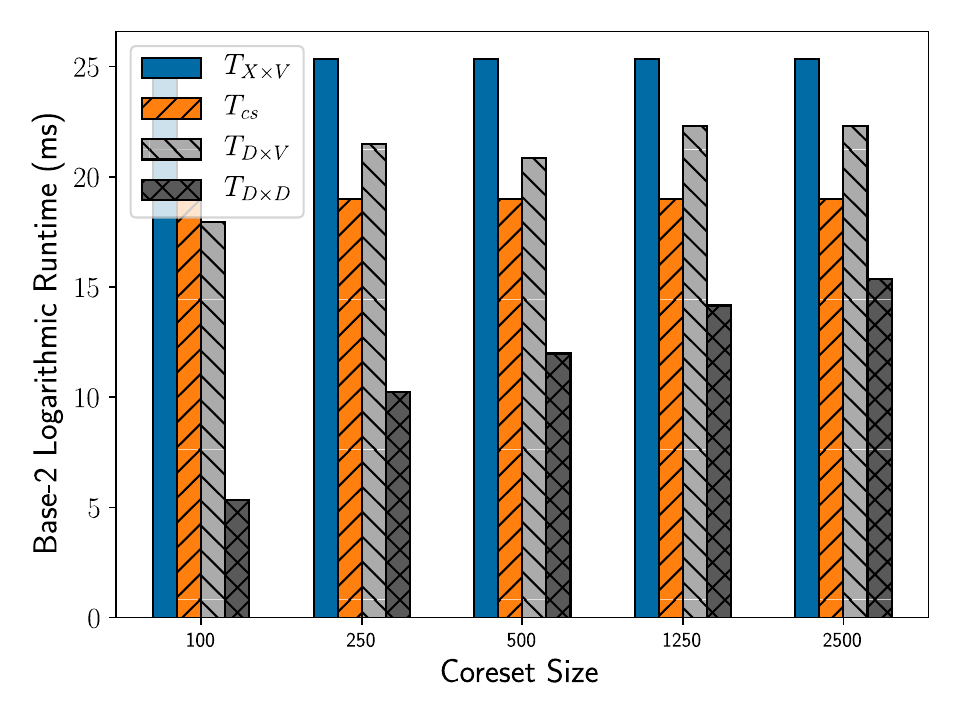}
    \end{subfigure}
    \begin{subfigure}{0.45\textwidth}
        \includegraphics[width=\textwidth]{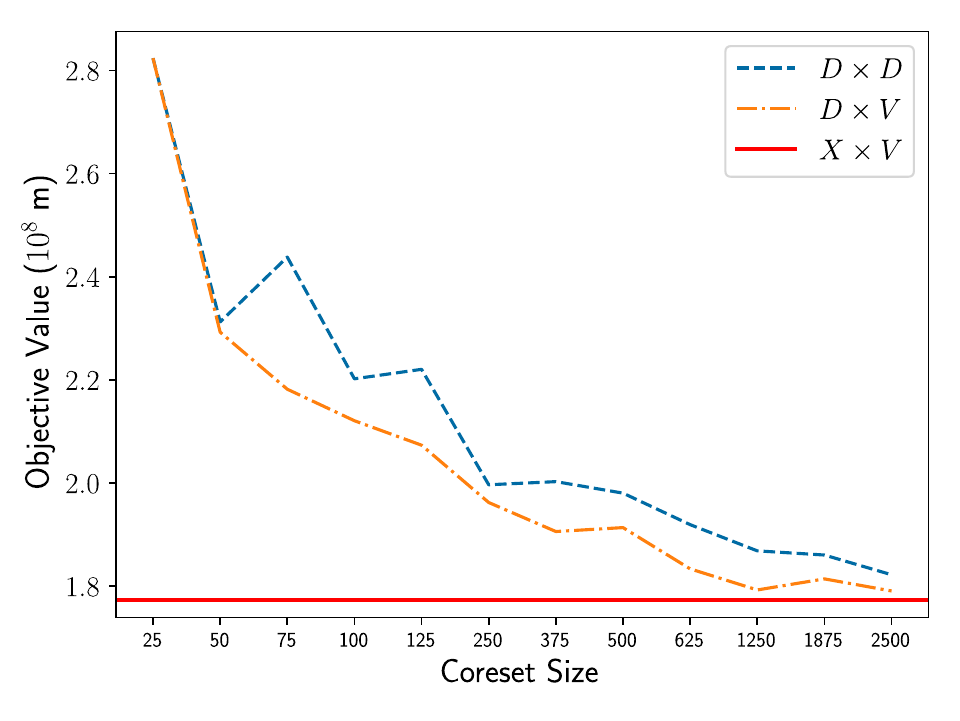}
    \end{subfigure}
    \caption{Performance of local search
    on $X \times V$, $D \times V$, and $D \times D$.
    The running time is shown on the left,
    where the coreset construction time $T_{\text{cs}}$ is separated out
    (so $T_{D \times V}$ and $T_{D \times D}$ do not include $T_{\text{cs}}$).
    The objective values reached are shown on the right.
    Here $k = 25$ and $X = V$ is the whole NYC of size $|X| \approx 10^6$.
    }
    \label{fig:speedup}
\end{figure*}
\subsection{Speedup of Local Search}
An important application of coresets is to speed up existing approximation algorithms.
To this end, we demonstrate the speedup of the local search algorithm of~\cite{localsearch} achieving $5$-approximation for graph \kMedian
by using our coreset.
In particular, we run the local search on top of our coreset $D$ (denoted as $D \times V$), and then compare the accuracy and the overall running time
with those of running the local search on the original data $X$ (denoted as $X \times V$).
Notice that by definition of \kMedian,
the centers always come from $V$, which defines the search space,
and a smaller data set can only affect the time required to evaluate the objective.
This limits the potential speedup of local search,
and therefore we additionally evaluate the running time and accuracy of
local search on $D$ when also the centers come from $D$ (denoted as $D \times D$).

We report separately the running time of the coreset construction, denoted $T_{\text{cs}}$,
and that of the local search on the coreset.
Indeed, as mentioned in Section~\ref{sec:exp_acc},
a coreset $D$ constructed for large $k$ can be used also when clustering for $k' < k$,
and since one can experiment with any clustering algorithm on $D$
(e.g. Jain-Vazirani, local search, etc.),
the coreset construction is one-time effort that may be averaged out
when successive clustering tasks are performed on $D$.

The results are illustrated in Figure~\ref{fig:speedup},
where we find that the coreset construction is very efficient,
about $100$ times faster than local search on $X$,
not to mention that the coreset may be used for successive clustering tasks.
We see that the speedup of local search $D \times V$ is only moderate
(which matches the explanation above),
but the alternative local search on $D \times D$ performs extremely well
--- for example using $|D| \approx 1000$,
it is about $1000$ times faster than the naive local search on $X$,
and it achieves similar objective value (i.e. $5\% - 10\%$ error).
This indicates that local search on $D \times D$
may be a good candidate for practical use.

	\section*{Acknowledgements}
	This research was supported in part by NSF CAREER grant 1652257, ONR Award N00014-18-1-2364, the Lifelong Learning Machines program from DARPA/MTO, Israel Science Foundation grant \#1086/18, and a Minerva Foundation grant.

	\bibliography{references}
	\bibliographystyle{icml2020}
	
	\ifeightpage
	\else 

	\newpage
	\appendix
	
	\onecolumn
\section{Size Lower Bounds}
We present an $\Omega(\frac{k}{\epsilon} \cdot \tw(G))$ lower bound for clustering in graphs, which matches the linear dependence on $\tw(G)$ in our coreset construction.
Previously, only very few lower bounds were known for coresets.
For \kCenter in $d$-dimensional Euclidean spaces, it was known that size $\Omega(\frac{k}{\epsilon^d})$ is required [D. Feldman, private communication]. Recent work~\cite{DBLP:conf/icml/BravermanJKW19}
proved an $\Omega(\log n)$ lower bound for simultaneous coresets in Euclidean spaces,
where a simultaneous coreset is a single coreset that is simultaneously an $\epsilon$-coreset for multiple objectives such as \kMedian and \kCenter.
However, no lower bounds for \kMedian were known.
In fact, even the $O(\log n)$ factor for general metrics was not justified.
Since our hard instance in Theorem~\ref{thm:tw_kz_lb} consists of $O(\frac{k}{\epsilon} \cdot 2^t)$ vertices, it readily implies for the first time that the $O(\log n)$ factor is optimal for general metrics (see Corollary~\ref{cor:lb_general}).

Our lower bound is actually split into two theorems: one for the tree case ($\tw(G) = 1$) and one for the other cases ($\tw(G) \geq 2$).
Ideally, we would use a unified argument, but unfortunately the general argument for $\tw(G)\geq 2$ does not apply in the special case $\tw(G) = 1$ because some quantity is not well defined, and we thus need to employ a somewhat different argument for the tree case.

\begin{theorem}[Lower Bound for Graphs with Treewidth $\geq 2$]
	\label{thm:tw_kz_lb}
	For every $0 < \epsilon < 1$ and integers $t, k \geq 1$,
	there exists an unweighted graph $G=(V, E)$ with $\tw(G) \leq t + 1$,
	such that any $\epsilon$-coreset for \kMedian on data set $X = V$ has size $\Omega(\frac{k}{\epsilon} \cdot t)$.
\end{theorem}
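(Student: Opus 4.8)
The plan is to take the hard instance to be a disjoint union of $k$ identical gadgets $G_1,\dots,G_k$ (optionally linked by a single path of enormous weight so that $G$ stays connected while each gadget is served by its own centers), which reduces the task to building one gadget $G_0$ on $O(\frac1\epsilon 2^t)$ vertices with $\tw(G_0)\le t+1$ such that every $\epsilon$-coreset for the $1$-median problem on $X_0=V(G_0)$ has size $\Omega(\frac t\epsilon)$; an averaging argument over the $k$ gadgets then yields $\Omega(\frac k\epsilon t)$. First I would set up this reduction: for each $i$, consider center sets $C$ that place one center deep inside $G_i$ and the other $k-1$ centers at near-optimal positions in the remaining gadgets, so that $\cost(X,C)$ equals the cost of that one center against the copy of $X_0$ inside $G_i$, plus a fixed, comparatively small offset; a coreset accurate on all such $C$ forces $D\cap V(G_i)$ to be an $O(\epsilon)$-coreset of $X\cap V(G_i)$, hence (by the single-gadget bound applied to each $i$) to contain $\Omega(\frac t\epsilon)$ points, so $|D|=\Omega(\frac k\epsilon t)$.

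For the gadget itself, I would take as skeleton a rooted complete binary tree of depth $t$ and add an edge from every node to each of its ancestors; this ``ancestor-closure'' graph is trivially perfect, its maximum clique being a root-to-leaf chain of $t+1$ vertices, so its treewidth is exactly $t$, and attaching $\Theta(1/\epsilon)$ pendant ``client'' vertices to each of the $2^t$ leaves keeps the treewidth at most $t+1$ while producing $\Theta(2^t/\epsilon)$ vertices in total. Edge weights would be geometric in the level (an edge whose lower endpoint is at depth $t-j$ gets weight $\Theta(2^j)$, and each ancestor edge is given exactly its tree-path length so it is metrically redundant), so that the shortest-path metric restricted to the clients is, up to constant factors, a $t$-scale hierarchically separated metric: two clients whose deepest common ancestor has depth $t-j$ lie at distance $\Theta(2^j)$. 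The role of the ancestor edges is precisely to realise such a genuinely $t$-scale metric as the shortest-path metric of a bounded-treewidth graph, which is the feature a plain tree lacks and which is responsible for the extra factor $t$ over the $\tw=1$ case.

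The core of the proof is to exhibit $\Omega(t/\epsilon)$ client ``groups'' together with a test center set that detects the coreset weight on each one. For each scale $j\in[t]$ and each of $\Theta(1/\epsilon)$ ``offsets'' $r$, I would designate a group $Y_{j,r}$ of clients and a center $c_{j,r}$ at a suitable internal node so that, measured from $c_{j,r}$, every client in $Y_{j,r}$ is at one common distance while every other client is either much closer or much farther; expanding $\cost(X_0,c_{j,r})$ and $\sum_{x\in D}w(x)\,d(x,c_{j,r})$ then produces a linear equation that, up to additive slack $\epsilon\cdot\cost(X_0,c_{j,r})$, determines the total coreset weight on $Y_{j,r}$. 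A linear-independence (rank) argument, or equivalently a perturbation argument replacing $c_{j,r}$ by a twin center that moves only the distances to $Y_{j,r}$, shows these $\Omega(t/\epsilon)$ constraints cannot all be satisfied unless $\mathrm{supp}(w)$ meets $\Omega(t/\epsilon)$ of the groups; I would also place the $\Theta(1/\epsilon)$ clients within a single group at slightly staggered distances, so that no reweighting of a few of them reproduces the group's full distance profile against the test family, ruling out the ``collapse every group to one representative'' coreset.

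I expect the main obstacle to be reconciling two opposing demands: the client metric must be rich enough — effectively $t$ independent scales, each requiring resolution $\epsilon$ — that no $o(t/\epsilon)$-point reweighting fools all the test centers, yet it must remain the shortest-path metric of a graph of treewidth at most $t+1$, so the ancestor edges and their weights must be tuned to avoid introducing shortcuts that a small coreset could exploit through additivity, without inflating the width. A secondary difficulty is the careful accounting in the $k$-gadget reduction (the $k-1$ ``free'' centers contribute an additive offset that must be dominated by the scale of the gadget under test, which dictates making each gadget's diameter large relative to $k$ times its optimum), together with the handling of the degenerate regime, which is exactly why the tree case $\tw(G)=1$ is treated by a separate argument.
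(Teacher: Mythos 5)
Your gadget cannot yield the claimed single-gadget bound, and this sinks the whole plan. By your own specification the ancestor edges are ``given exactly their tree-path length so they are metrically redundant,'' which means the shortest-path metric of your gadget is \emph{identical} to that of the underlying weighted tree (an HST on the clients). A coreset is a property of the metric space $(V,d)$ and the data set alone, so any $\epsilon$-coreset for the tree is verbatim an $\epsilon$-coreset for your ancestor-closure graph. But trees have treewidth $1$, and the paper's own upper bound (Theorem~\ref{thm:coreset}, or equivalently the $O(1)$ shattering-dimension bound for trees) gives coresets of size $\tilde O(\poly(k)/\epsilon^2)$ on that metric, \emph{independent of $t$}. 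Hence no lower bound of $\Omega(t/\epsilon)$ can hold for your gadget: adding metrically redundant edges raises the treewidth of the graph without making the clustering instance any harder. The intuition that hardness should come from ``$t$ independent scales'' is the wrong axis; the paper's hard instance uses only distances $1$ and $2$, and the factor $t$ comes from combinatorial ambiguity that a tree metric cannot express: each group consists of $t$ ``element'' vertices $L_i$ and $2^t$ ``subset'' vertices $R_i$ (one per $J\subseteq[t]$, adjacent to exactly the elements in $J$), all hanging off a common hub $u_0$, plus $T$-fold shadow copies of the elements to amplify their weight. A small coreset must miss most of $L_i\cup R_i$ for many groups, and then two subset-vertices $P_i$ (small) and $Q_i$ (large) are indistinguishable to the coreset (both contain all surviving elements, both avoid surviving subset-vertices) yet their true costs differ by a $1+\Omega(\epsilon)$ factor. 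That incidence structure is exactly what forces treewidth $\approx t$ \emph{in the metric itself}, which your construction never does.

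A secondary issue: even granting a correct gadget, your $k$-fold reduction loses a factor of $k$ as stated. With $k$ identical gadgets and one center per gadget, the $k-1$ ``free'' gadgets contribute an additive $\Theta((k-1)\cdot\OPT_1)$ to every test cost, so $\epsilon$-accuracy of the coreset only constrains each gadget up to error $\epsilon\cdot k\cdot\OPT_1$, degrading the per-gadget bound to $\Omega(t/(k\epsilon))$ unless the test centers probe the gadget at a scale dominating $k\cdot\OPT_1$ while still resolving $\epsilon$-fine distinctions there — a tension you flag but do not resolve. The paper sidesteps this by using $m=k/\epsilon$ groups sharing one hub, choosing the $k$ groups least covered by the coreset, and tuning the shadow multiplicity $T=\Omega(2^t/\epsilon)$ so that the probed groups' contribution is an $\Omega(\epsilon)$ fraction of the total.
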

\begin{proof}
	The vertex set of $G_{k,\eps}$ is defined as $L \cup R \cup \{ u_0 \}$.
	Let $m := \frac{k}{\epsilon}$. Both $L$ and $R$ consist of $m$ groups,
	i.e. $L := \bigcup_{i=1}^{m}{L_i}$ and $R := \bigcup_{i = 1}^{m}{R_i}$.
	For $i \in [m]$, $L_i$ consists of $t$ elements, and $R_i$ consists of $2^t$ elements.
	Let $L_i := \{ l^{(i)}_j : j \in [t]\}$,
	and
	$R_i := \{ r^{(i)}_J : J \subseteq [t] \}$.
Since $t\geq 1$, $L$ is non-empty.
Define a special connection point $u_0$ to which all points of $L\cup R$ connect to (the specific way of connection is defined in the next paragraph).
	
	The edge set is defined as follows. All edges are of weights $1$.
	Connect all points in $L \cup R$ to $u_0$.
	For each $i\in [m]$, for each $l^{(i)}_j \in L_i$ and $r^{(i)}_J \in R_i$,
	if $j \in J$, add an edge $\{ l^{(i)}_j, r^{(i)}_J \}$.
Finally, let $T = \Omega(\frac{m}{k}\cdot 2^t)$, and make $T-1$
	copies of each point in $L$,
	which we call shadow vertices:
	for each $l^{(i)}_j$, create $T-1$ vertices, and connect them to $l^{(i)}_j$ directly (so they form a star with center $l^{(i)}_j$).

	\begin{fact}
		All distances in $G_{k,\eps}$ are $2$, except that the distances between $L_i$ and $R_i$ ($i \in [m]$) are $1$.
	\end{fact}
	
	For simplicity, we use $G$ to represent $G_{k,\eps}$ in the following.
	
	\paragraph{Treewidth Analysis}
	First, consider removing $u_0$ from $G$, and define the resultant graph as $G'$.
	Then $\tw(G) \leq \tw(G') + 1$.
	Observe that $G'$ has $m$ components: $\{L_i \cup R_i : i \in [m]\}$,
	so it suffices to bound the treewidth for each component.
	For each such component, since removing $L_i$ makes all points in the component isolated, we conclude that the treewidth of the component is at most $|L_i| = t$.
	Therefore, we conclude that $\tw(G) \leq t + 1$.
	\paragraph{Error Analysis}
	Suppose $D \subseteq V$ (with weight $w$) is an $O(\epsilon)$-coreset of size $o(\frac{k}{\epsilon} \cdot t)$.
	By manipulating the weight $w$, we assume w.l.o.g. that $D$ does not contain the shadow vertices.
	Pick any $k$-subset $S \subseteq [m]$, such that for every $i \in S$, $|D \cap L_i| \leq \frac{t}{2}$ and $|D \cap R_i| \leq \frac{t}{2}$.
	Such $S$ must exist, since otherwise there would be $m - k = \Omega(\frac{k}{\epsilon})$ number of $i$'s, such that $|D\cap L_i| + |D \cap R_i| > \frac{t}{2}$,
	which contradicts $|D| = o(\frac{k}{\epsilon} \cdot t)$.
	
	We would then pick two subsets $P_i, Q_i \subseteq [t]$ for each $i \in S$, which correspond to two points in $R_i$ and encode two subsets of $L_i$, as in the following claim.
	
	\begin{claim}
		For each $i \in S$, there exists $P_i, Q_i \subseteq [t]$,
		such that
		\begin{enumerate}
			\item If $l^{(i)}_j \in D \cap L_i$, then $j \in P_i$ and $j \in Q_i$.
			\item If $r^{(i)}_J \in D \cap R_i$,
			then $J \neq P_i$ and $J \neq Q_i$.
			\item $|P_i| \leq |D \cap L_i| + O(1)$, and $|Q_i| \geq t - O(1)$.
		\end{enumerate}
	\end{claim}
	\begin{proof}
Suppose $D \cap R_i = \{ r^{(i)}_{J_1}, \ldots, r^{(i)}_{J_s}\}$, and let $\mathcal{J} = \{J_1, \ldots, J_s\}$.
Find the minimum cardinality $P_i$ such that item 1 and 2 holds:
		this is equivalent to find the smallest $P'$, such that $(D \cap L_i) \cup P' \notin \mathcal{J}$.
Such $P'$ may be found in a greedy way: try out all $0$-subsets, 1-subsets, \ldots, until $(D \cap L_i) \cap P' \notin \mathcal{J}$.
Since $|D \cap R_i| \leq \frac{t}{2}$ and $|D\cap L_i| \leq \frac{t}{2}$, such greedy procedure must end after trying out $O(1)$-subsets and hence $|P_i|\leq |D\cap L_i|+O(1)$.

		Let $Q_i$ denote the set with the maximum cardinality such that item 1 and 2 holds.
By a similar argument, we can prove that $|Q_i|\geq t-O(1)$.
	\end{proof}
	Based on this claim, we define $C_1 := \{ r^{(i)}_{P_i} : i \in S \}$, and $C_2 := \{ r^{(i)}_{Q_i} : i \in S\}$.
	Observe that the cost on both $C_1$ and $C_2$ are the same on the coreset $D$ (by item 1).
	However, the objective on $C_1$ and $C_2$ differ by an $\Omega(\epsilon)$ factor (where we use item 2 and 3).
	To see it,
	\begin{align*}
	\cost(V, C_1) = &\underbrace{2}_{u_0} + \underbrace{2(m - k)\cdot T\cdot t}_{\text{cost of $L\setminus L_i$}} + \underbrace{2(m\cdot 2^t - k)}_{\text{cost of $R$}} + \underbrace{T\cdot \sum_{i \in S}{ 2(t - |P_i|) + |P_i|}}_{\text{cost of $L_i$}} \\
	= &2 + 2(m-k)\cdot T \cdot t + 2(m\cdot 2^t - k) + 2 ktT - (2-1)T \cdot\sum_{i\in S}{|P_i|} \\
	\geq &2 + 2(m-k)\cdot T \cdot t + 2(m\cdot 2^t - k) + 2 ktT - kT \cdot (\frac{t}{2} + O(1)).
	\end{align*}
	Similarly,
	\begin{align*}
	\cost(V, C_2) = &2 + 2 (m - k)\cdot T\cdot t + 2 (m\cdot 2^t - k) + T\cdot \sum_{i \in S}{2 (t - |Q_i|) + |Q_i|} \\
	\leq &2 + 2 (m-k)\cdot T \cdot t + 2 (m\cdot 2^t - k) + 2 ktT - kT\cdot (t - O(1))
	\end{align*}
	So,
	\begin{align*}
	\frac{\cost(V, C_1)}{\cost(V, C_2)}
	&\geq 1 + \frac{ kT\cdot (\frac{t}{2} - O(1))}
	{2 + 2(m-k)\cdot T \cdot t + 2(m\cdot 2^t - k) + 2ktT - kT\cdot (t - O(1))} \\
	&\geq 1 + \Omega(\epsilon)
	\end{align*}

    where the last inequality is by $m=\frac{k}{\epsilon}$ and $T=\Omega(\frac{m}{k}\cdot2^t)$.
	This contradicts the fact that $D$ is an $O(\epsilon)$-coreset.
\end{proof}

Then we prove for the special case with treewidth $1$, which is the tree case.

\begin{theorem}[Lower Bound for Star Graphs]
	\label{thm:star_lb}
	For every $0 < \epsilon < 1/3$ and integer $k \geq 1$,
	there exists an (unweighted) start graph $G=(V, E)$ with $|V|=O(\frac{k}{\epsilon})$
	such that
	any $\epsilon$-coreset for \kMedian on data set $X = V$ has size $\Omega(\frac{k}{\epsilon})$.
\end{theorem}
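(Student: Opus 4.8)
The plan is to exhibit a single star graph $G$ on $O(k/\epsilon)$ vertices on which no small weighted subset can simultaneously approximate the cost of sufficiently many different $k$-center sets. First I would take $G$ to be a star with center $u_0$ and $m := \Theta(k/\epsilon)$ leaves, but organized into $k$ ``gadgets,'' each gadget being a sub-star with $\Theta(1/\epsilon)$ leaves hanging off $u_0$; all edges have weight $1$, so every inter-leaf distance is $2$ and each leaf is at distance $1$ from $u_0$, while $d(x,C)\in\{0,1,2\}$ for any $x$ and any center set $C$. The key structural feature is that placing a center at a leaf $\ell$ ``saves'' exactly $1$ unit of cost at $\ell$ itself (distance $0$ instead of $1$ from $u_0$, which would otherwise be the cheapest option besides a leaf in the same gadget) but affects no other leaf, since all other leaves are still at distance $2$ from $\ell$. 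Thus choosing a $k$-set $C$ that puts one center in each of the $k$ gadgets, at a chosen leaf, lets us independently ``turn on'' a cost saving in each gadget, and the total cost is a linear function of which leaf-slots are selected.

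Next I would set up the contradiction: suppose $D\subseteq V$ with weights $w:D\to\R_+$ is an $\epsilon$-coreset of size $|D| = o(k/\epsilon)$. By a pigeonhole/averaging argument over the $k$ gadgets, at least, say, half of the gadgets contain only $o(1/\epsilon)$ coreset points; restrict attention to those $k' = \Theta(k)$ ``light'' gadgets. In a light gadget with $\Theta(1/\epsilon)$ leaves but only $o(1/\epsilon)$ of them in $D$, there exist two leaves $\ell, \ell'$ not in $D$. Now form two center sets $C_1, C_2$ that agree outside this family of light gadgets, and in each light gadget $i$ choose $\ell_i$ for $C_1$ and $\ell'_i$ for $C_2$ — both leaves absent from $D$. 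Since the coreset contains no point whose distance to $C_1$ differs from its distance to $C_2$ (every point of $D$ is either $u_0$, or a leaf in a light gadget that is not one of the swapped leaves and is therefore at distance $1$ from both $u_0$-adjacent options, or a leaf in a heavy gadget where $C_1$ and $C_2$ coincide), we get $\cost(D,C_1) = \cost(D,C_2)$ exactly. On the other hand, on the full data set $X = V$ the two costs differ: $\cost(V, C_1)$ and $\cost(V, C_2)$ each equal a fixed baseline minus the number of ``saved'' leaves, and by choosing $C_1$ to save in all $k'$ light gadgets while $C_2$ saves in none of them (or, more carefully, by an asymmetric choice analogous to the $P_i$ versus $Q_i$ trick in Theorem~\ref{thm:tw_kz_lb}, one of the two gets $\Theta(k)$ savings and the other gets $O(1)$ per gadget), the difference is $\Theta(k)$ additively, while the baseline cost is $\Theta(m) = \Theta(k/\epsilon)$, so the multiplicative gap is $1 + \Theta(\epsilon)$. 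This contradicts that $D$ is an $\epsilon$-coreset, once constants are tuned so $\Theta(\epsilon) > 2\epsilon$.

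To finish I would verify the three bookkeeping points: (i) $|V| = 1 + m + (\text{any shadow copies}) = O(k/\epsilon)$ and $\tw(G) = 1$ since $G$ is a tree; (ii) the weight-manipulation step — as in the proof of Theorem~\ref{thm:tw_kz_lb}, one may assume w.l.o.g.\ that $D$ avoids any designated ``shadow'' leaves by redistributing their weight to their unique neighbor, so that the two swapped leaves per gadget can always be found outside $D$; and (iii) the exact cost computation, writing $\cost(V,C) = 2\cdot 1 \cdot |\{x : d(x,C)=1\}| + \dots$ wait — more plainly, $\cost(V,C) = \sum_{x\in V} d(x,C)$, where $d(u_0,C)=1$, each unsaved leaf contributes $1$, and each leaf chosen into $C$ contributes $0$; so $\cost(V,C) = (|V|-1) - |C\cap(\text{leaves})|$ up to the $u_0$ term, and the difference between $C_1$ and $C_2$ is precisely the difference in the number of distinct saved leaves they induce, which we arranged to be $\Theta(k)$. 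The main obstacle I anticipate is the same subtlety the authors flag in the preamble: the clean ``balanced-separator / pair-of-sets'' argument used for $\tw(G)\ge 2$ relies on the $R_i$ gadgets having $2^t$ vertices so that many subsets of $L_i$ are uncoded by $D$, and when $t=1$ that exponential room collapses, so the argument must instead extract its $\Theta(k/\epsilon)$-vs-$o(k/\epsilon)$ leverage purely from having $\Theta(1/\epsilon)$ interchangeable leaves per gadget and $k$ gadgets — getting the constants to line up so that the additive $\Theta(k)$ gap against the $\Theta(k/\epsilon)$ baseline yields a genuine $1+\Omega(\epsilon)$ multiplicative error (rather than $1+O(\epsilon^2)$) is the delicate part, and is presumably why the hypothesis is stated as $\epsilon < 1/3$ rather than $\epsilon<1$.
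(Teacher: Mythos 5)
There is a genuine gap at the heart of your plan: you want two center sets $C_1,C_2$ with $\cost(D,C_1)=\cost(D,C_2)$ exactly but $\cost(V,C_1)-\cost(V,C_2)=\Theta(k)$, where both $C_1$ and $C_2$ consist of leaves avoiding $D$. In a star this is impossible. Any center set $C$ consisting of $k$ distinct leaves has exactly the same cost on $X=V$: the root contributes $1$, every unchosen leaf contributes $2$ (its nearest center is another leaf, at distance $2$ --- not $1$, since the root is not a center), and every chosen leaf contributes $0$, so $\cost(V,C)=2(n-k)+1$ regardless of which leaves are chosen. Your ``gadget'' partition of the leaves is vacuous (all leaves hang directly off $u_0$ and are interchangeable), and the ``saves exactly $1$ unit'' accounting is based on the incorrect default distance $1$; every chosen leaf saves exactly $2$, uniformly, so the true cost depends only on $|C|$ and no asymmetric choice à la $P_i$ versus $Q_i$ can create a $\Theta(k)$ gap. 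Consequently your cardinality pigeonhole (many gadgets contain few coreset points, hence leaves outside $D$ exist) produces two center sets that are indistinguishable both on $D$ and on $V$, and no contradiction follows.

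The paper's proof runs the discrepancy in the opposite direction, and the step you are missing is a \emph{weight} argument rather than a cardinality one. First, testing the coreset against the center set $\{r\}$ (all centers at the root) forces $W:=\sum_{x\in D\setminus\{r\}}w(x)\ge(1-\epsilon)n$, i.e., the coreset must carry essentially all of the data's weight on its leaves. Then, since $|D|=o(k/\epsilon)$, the $k$ heaviest coreset leaves carry weight $W'\ge kW/|D|=\Omega(\epsilon n)$ by averaging. Taking $C_2$ to be those $k$ heavy coreset leaves and $C_1$ to be $k$ leaves outside $D$, the true costs are equal by the very symmetry that defeats your construction, while the coreset costs are $2(W-W')+w(r)$ versus $2W+w(r)$, a multiplicative gap of $1+\Omega(\epsilon)$ against the upper bound $2W+w(r)\le 2(2n+1)$ (itself obtained from the coreset guarantee on $C_1$); this contradicts the $\epsilon$-coreset property. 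So the essential ingredient is lower-bounding the weight that a small coreset must concentrate on few points, not merely locating leaves that the coreset misses.
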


\begin{proof}
	Denote the root node of the star graph $G=(V,E)$ by $r$ and leaf nodes by $x_1,\ldots,x_n$ ($n\geq \frac{100 k}{\epsilon}$).
Suppose $D \subseteq V$ (with weight $w$) is an $\eps$-coreset of size $o(\frac{k}{\eps})$.
Let $W=\sum_{x\in D\setminus \{r\}}w(x)$.
Consider a $k$-center set where all centers are on $r$.
We have that
	\begin{equation}
	\label{eq:W_bound}
	W= \sum_{x\in D} w(x)\cdot d(x,r)\geq (1-\eps)\cdot \cost(X,r) = (1-\eps)\cdot n,
	\end{equation}
where the inequality is from the fact that $D$ is an $\eps$-coreset.

	Next, we construct two center sets $C_1$ and $C_2$.
Let $C_1\subseteq V\setminus (D\cup \left\{r\right\})$ be a collection of $k$ distinct leaf nodes that are not in $D$.
Let $C_2$ be the collection of $k$ nodes in $D\setminus \left\{r\right\}$ with largest weights.
By construction, we have that
	\begin{equation}
	\label{ineq:weight_bound}
	W'=\sum_{x\in C_2} w(x)\geq \frac{k W}{|D|}\geq 100 \eps W\geq 50 \eps n,
	\end{equation}
	where the last inequality is by Inequality~\eqref{eq:W_bound}.
Moreover, since $D$ is an $\eps$-coreset, we have that
	\begin{equation}
	\label{ineq:cluster_bound}
	2 W+w(r)=\cost(D,C_1)\leq (1+ \eps)\cdot \cost(V,C_1) \leq 2\cdot (2 n+1).
	\end{equation}
By symmetry, $\cost(V,C_1)=\cost(V,C_2)$.
Then by the definition of coreset, we have
	\[
	\frac{\cost(D,C_1)}{\cost(D,C_2)}=\frac{2 W+w(r)}{2(W-W')+w(r)}\leq \frac{1+\eps}{1-\eps}.
	\]
However, by Inequalities~\eqref{ineq:weight_bound} and~\eqref{ineq:cluster_bound}, we have
	\begin{eqnarray*}
	\begin{split}
	\frac{2 W+w(r)}{2(W-W')+w(r)} &\geq && \frac{2 W+w(r)}{2W+w(r) - 50\cdot 2 \eps n} && (\text{Ineq.~\eqref{ineq:weight_bound}}) \\
	&\geq &&\frac{2\cdot (2 n+1)}{2\cdot (2 n+1) - 100 \eps n} && (\text{Ineq.~\eqref{ineq:cluster_bound}}) \\
	& >&& \frac{1+\eps}{1-\eps}, &&
	\end{split}
	\end{eqnarray*}
	which is a contradiction.
This completes the proof.
\end{proof}
Combining Theorems~\ref{thm:star_lb} and~\ref{thm:tw_kz_lb}, we obtain a lower bound of $\Omega(\frac{k}{\eps}\cdot \tw(G))$ for the coreset size.
Moreover, we observe that the hard instance in Theorem~\ref{thm:tw_kz_lb} has $O(2^{t})$ nodes,
which in fact implies an $\Omega(\log n)$ size lower bound for general graphs. We state this corollary as follows.
\begin{corollary}
	\label{cor:lb_general}
	For every $0 < \epsilon < 1$ and integers $n, k \geq 1$, there exists an unweighted graph $G = (V, E)$ with $|V| = O(n)$ such that any $\epsilon$-coreset for \kMedian on data set $X = V$ has size $\Omega(\frac{k}{\epsilon} \cdot \log n)$.
\end{corollary}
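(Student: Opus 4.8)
The plan is to obtain this as an immediate corollary of the treewidth lower bound (Theorem~\ref{thm:tw_kz_lb}) by instantiating its hard instance with the treewidth parameter $t$ chosen as large as possible subject to the vertex count staying $O(n)$. Concretely, I would inspect the graph $G_{k,\epsilon}$ built in the proof of Theorem~\ref{thm:tw_kz_lb}: it has $m=\tfrac{k}{\epsilon}$ groups in each of $L$ and $R$, each $L_i$ contributing $t$ vertices, each $R_i$ contributing $2^t$ vertices, one extra vertex $u_0$, and then each of the $|L|=\tfrac{k}{\epsilon}t$ vertices of $L$ blown up by a factor $T=\Theta(\tfrac{1}{\epsilon}2^t)$ with shadow vertices. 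Writing $N(t)$ for the resulting total number of vertices, this bookkeeping gives $N(t)=2^{t}\cdot\poly(t,k,1/\epsilon)$; in particular $2^{t}\le N(t)\le 2^{t}\cdot\poly(t,k,1/\epsilon)$, and the ratio $N(t+1)/N(t)$ is bounded by an absolute constant for every $t\ge 1$.

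Given $n$ (large enough relative to $k$ and $1/\epsilon$; small $n$ is handled below), I would let $t^\star$ be the largest integer with $N(t^\star)\le n$. Then the graph $G$ produced by Theorem~\ref{thm:tw_kz_lb} with this $t^\star$ has $|V|=N(t^\star)\le n=O(n)$, so the size requirement holds automatically. For the lower bound I need $t^\star=\Theta(\log n)$: on one hand $2^{t^\star}\le N(t^\star)\le n$ gives $t^\star=O(\log n)$; on the other hand, by maximality $N(t^\star+1)>n$, and since $N(t^\star+1)\le 2^{t^\star+1}\poly(t^\star,k,1/\epsilon)=2^{t^\star}\cdot n^{o(1)}$ (using $t^\star=O(\log n)$ and treating $k,\epsilon$ as fixed), we get $2^{t^\star}>n^{1-o(1)}$, i.e.\ $t^\star=\Omega(\log n)$. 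Plugging $t^\star$ into Theorem~\ref{thm:tw_kz_lb} then yields a graph on $O(n)$ vertices for which every $\epsilon$-coreset for \kMedian on $X=V$ has size $\Omega(\tfrac{k}{\epsilon}\cdot t^\star)=\Omega(\tfrac{k}{\epsilon}\cdot\log n)$, which is exactly the claim. (As a byproduct $\tw(G)\le t^\star+1=O(\log n)$, but that bound is not needed here — for general metrics the hardness comes purely from the instance size.)

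The only step that requires genuine care is the vertex-count bookkeeping: one must check that the shadow-vertex blow-up factor $T$ does not push $N(t)$ beyond $2^{O(t)}$, so that $\log N(t)=\Theta(t)$ and the choice $t^\star=\Theta(\log n)$ goes through; everything else is a direct substitution into Theorem~\ref{thm:tw_kz_lb}. A minor loose end is very small $n$, for which no valid $t^\star\ge 1$ exists: here one can absorb the case into the hidden constants, e.g.\ by taking the star instance of Theorem~\ref{thm:star_lb} (which already forces coreset size $\Omega(\tfrac{k}{\epsilon})$ on $O(\tfrac{k}{\epsilon})$ vertices) and padding with isolated vertices up to $\Theta(n)$ vertices.
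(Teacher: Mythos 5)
Your proposal is correct and follows the same route as the paper: the paper simply observes that the hard instance of Theorem~\ref{thm:tw_kz_lb} has $2^{t}\cdot\poly(t,k,1/\epsilon)$ vertices and instantiates it with $t=\Theta(\log n)$, which is exactly your choice of $t^\star$ (your bookkeeping of the shadow-vertex blow-up and the small-$n$ edge case is, if anything, more careful than the paper's one-line remark).
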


\subsection{Lower Bound for Coresets in 1D Lines}
\label{sec:lb_1d}
\begin{theorem}[Lower Bound for 1D Lines]
	\label{1d1c}
	For every $0<\epsilon<1/24$ and integer $k\geq 1$, there exists a set of data points $V\subseteq \mathbb{R}$, such that every $\epsilon$-coreset for \kMedian of $V$ has size $\Omega(\frac{k}{\sqrt{\epsilon}})$, even if the coreset may use any point in $\mathbb{R}$.
\end{theorem}

We first introduce our technical Lemma \ref{lem:1d_tech}, which shows a quadratic function cannot be approximated by an affine linear function in a short interval. We note that a similar argument appears in \cite{DBLP:conf/icml/BravermanJKW19}, which shows the function $f(x)=\sqrt{x}$ cannot be approximated by an affine linear function in a short interval.

\begin{lemma} \label{lem:1d_tech}
	Let $\eps\in (0,1/12)$ and $1/2\leq p\leq q\leq 1$. Suppose $f(x)=ax^2+b$ where $a>0,b$, $g(x)$ is a non-negative linear function on $[p,q]$, and $g(x)\in (1\pm \eps)f(x)$ for every $x\in [p,q]$, then $q\leq p+\sqrt{24(b/a+1)\eps}$.
\end{lemma}

\begin{proof}
	
	Since $g(x)$ is non-negative and $g(x)\in (1\pm \eps)f(x),\forall x\in [p,q]$, we have that $\int_{p}^q g(x)dx\in (1\pm\eps)\int_{p}^q f(x) dx$. By computation, $\int_{p}^q f(x)dx=\int_{p}^q (ax^2+b) dx=\frac{a(q^3-p^3)}{3}+b(q-p)$. Since $g(x)$ is linear, $\int_p^q g(x)dx=(g(p)+g(q))(q-p)/2$. By the fact that $g(x)\in (1\pm\eps)f(x)=(1\pm\eps)(ax^2+b )$, we have $\int_a^b g(x)dx\in (1\pm\eps) (ap^2+aq^2+2b)(q-p)/2$. But $\int_a^b g(x)dx\leq (1+\eps)\int_a^b (ax^2+b) dx=(1+\eps)(a(q^3-p^3)/3+b(q-p)) $, so we have that,
	
	$$\frac{a(q^3-p^3)/3+b(q-p)}{(ap^2+aq^2+2b)(q-p)/2}\geq \frac{1-\eps}{1+\eps}\geq 1-2\eps.$$
	
	So we have $p^2+q^2-\frac{2pq}{1-6\eps}-\frac{12b\eps}{(1-6\eps)a}\leq 0$. Since $\frac{1}{2}\leq p\leq q\leq 1$ and $0<\eps<1/12$, we have that
	$$
	(q-p)^2\leq \frac{12b\eps}{(1-6\eps)a}+\frac{12\eps pq}{1-6\eps}\leq 24(b/a+1)\eps
	$$
	
	which implies $q\leq p+\sqrt{24(b/a+1)\eps}$.
\end{proof}

Now we are ready to prove Theorem~\ref{1d1c}.

\begin{proof}[Proof of Theorem~\ref{1d1c}]
	
	We first prove the basic case $k=1$. Without loss of generality, we can assume $V=[-1,1]$.\footnote{For discretization, we can let $V=\{0,\pm\frac{1}{m},...,\pm 1\}$ for large enough $m$.} Let $f(x):=\mathrm{cost}(V,\{x\})=\int_V \|t-x\|dt$ denote the cost of connecting $V$ to $x\in \mathbb{R}$ (the \ProblemName{1-Median} value). Note that $f(x)=x^2+1$ on $[-1,1]$.
	
	Assume $D$ is an $\eps$-coreset of $V$ for the $1$-median problem. Recall that $D$ is a weighted set and may contain the ambient points of the real line. Let $g(x):=\mathrm{cost}(D,\{x\})$. Then $g(x)$ is a piecewise linear function and the transition from one affine linear function to another happens only when $x$ crosses a coreset point. So the number of pieces is at most $|D|+1$. We need to prove $g(x)$ has at least $\Omega(\frac{1}{\sqrt{\eps}})$ pieces.
	
	Since $D$ is an $\eps$-coreset of $V$, we have that $g(x)\in (1\pm\eps)f(x)$ for every $x\in [-1,1]$. Assume $g(x)$ has $m$ pieces in $[1/2,1]$ and their connecting points are $x_0=1/2<x_1<...<x_m=1$. Then $g(x)$ is affine linear in $[x_{i-1},x_i]$ but $g(x)\in (1\pm\eps)f(x)$, so by Lemma \ref{lem:1d_tech}, $x_i<x_{i-1}+\sqrt{24(1/1+1)\eps}<x_{i-1}+7\sqrt\eps$. So $m\geq \frac{1/2}{7\sqrt\eps}=\frac{1}{14\sqrt\eps}$.
	
	Now we consider the case of general $k$. We put $k$ copies of $[-1,1]$ in the real line. In particular, we let $V_i=[-1+(i-1)t,1+(i-1)t]$ for a large enough positive number $t>\frac{39}{\sqrt\eps}$ and $V=\cup_{i=1}^k V_i$. Let $S_i=[1/2+(i-1)t,1+(i-1)t]$ be a subset of $V_i$. We partition each $S_i$ into $m=\Theta(1/\sqrt{\eps})$ intervals $S_{i1},...,S_{im}$, such that each of them has length $13\sqrt{\eps}$.
	
	Now, for the sake of contradiction, we assume there is an $\eps$-coreset $D$ of $V$ for the $k$-median problem, such that $|D|<mk/2$.
By averaging, we know that there is a $j^*\in [k]$ such that $\cup_{i=1}^k S_{ij^*}$ contains at most $k/2$ points of $D$. Let $b=|D\cap (\cup_{i=1}^k S_{ij^*})|$ in the following, then $b<k/2$.
So there are $k-b>k/2$ many $i\in [k]$ such that $S_{ij^*}$ doesn't contain any coreset point. We assume $S_{0j^*}=[a_0,b_0]$ and let $x$ be a variable in $[a_0,b_0]$. We construct the following set of centers $C_x=\{x_i:i\in [k]\}$ where if $S_{ij^*}\cap D\not=\emptyset$, $x_i=x+(i-1)t$, otherwise $x_i=(i-1)t$.
	
	Let $f(x)=\mathrm{cost}(V,C_x)$, then $f(x)=b+(k-b)(x^2+1)=(k-b)x^2+k$. We note that $k\leq f(x)\leq 2k$ on $[a_0,b_0]$.
Let $D_i=D\cap [(i-4/3)t,(i-2/3)t]$, $D'=\cup_{i\in[k]} D_i$ and $D''=D\setminus D'$. We first claim that $|\mathrm{cost}(D'',C_x)- \mathrm{cost}(D'',C_{a_0})|\leq  O(\sqrt{\eps}k/t)$. Actually, note that when $x\in [a_0,b_0]$, the connection cost of points in $D''$ is always $\Omega(t)$, along with the fact that when $x\in [a_0,b_0]$, $\mathrm{cost}(D'',C_x)\leq \mathrm{cost}(D,C_x)\leq (1+\eps)\mathrm{cost}(V,C_x)\leq 3k$, we know that the total weight of $D''$ is at most $3k/t$. Now, since $x$ changes by at most $13\sqrt{\eps}$, the connection cost of every point in $D''$ changes by at most $13\sqrt{\eps}$, so we have
	$$|\mathrm{cost}(D'',C_x)- \mathrm{cost}(D'',C_{a_0})|\leq \frac{39\sqrt{\eps}k}{t}.$$
	
	Let $g(x)=\mathrm{cost}(D',C_x)+\mathrm{cost}(D'',C_{a_0})$. Since $t>\frac{39}{\sqrt\eps}$ and $f(x)\geq k$ for every $x\in[a_0,b_0]$, we conclude that when $x\in [a_0,b_0]$, $$
	|g(x)-\mathrm{cost}(D,C_x)|=|\mathrm{cost}(D'',C_{a_0})-\mathrm{cost}(D'',C_x)|\leq \frac{39\sqrt{\eps}k}{t}\leq \eps f(x).
	$$
	But $\mathrm{cost}(D,C_x)\in (1\pm\eps)f(x)$ on $[a_0,b_0]$, so we have that $g(x)\in (1\pm 2\eps)f(x)$ on $[a_0,b_0]$.
	
	Now we show that $g(x)$ is an affine linear function in $[a_0,b_0]$. We note that $D'=\cup_{i\in [k]} D_i$ and $\mathrm{cost}(D_i,C_x)=\mathrm{cost}(D_i,\{x_i\})$ for any $x\in [a_0,b_0]$. If $x_i=x+(i-1)t$ then by construction, $x_i$ never crosses any coreset point in $D_i$, so $\mathrm{cost}(D_i,\{x_i\})$ remains affine linear. On the other hand, if $x_i=(i-1)t$, then $\mathrm{cost}(D_i,\{x_i\})$ is a constant. So we know that $g(x)=\mathrm{cost}(D',C_x)+\mathrm{cost}(D'',C_{a_0})=\sum_{i\in [k]} \mathrm{cost}(D_i,\{x_i\})+\mathrm{cost}(D'',C_{a_0})$ is an affine linear function on $[a_0,b_0]$. But $g(x)\in (1\pm 2\eps)f(x)$ on $[a_0,b_0]$, by Lemma \ref{lem:1d_tech}, we know that the length of $[a_0,b_0]$ is at most $\sqrt{24(\frac{k}{k-b}+1)\cdot 2\eps}\leq \sqrt{144\eps}=12\sqrt{\eps}$,
arriving at a contradiction.
\end{proof}

 \fi 

\end{document}